\numberwithin{equation}{section}
\newtheorem{Theorem}{Theorem}[section]
\newtheorem{Lemma}[Theorem]{Lemma}
\newtheorem{Proposition}[Theorem]{Proposition}
 { \theoremstyle{definition}
\newtheorem{Remark}[Theorem]{Remark} }
\begin{document}

\allowdisplaybreaks

\newcommand{\arXivNumber}{1708.02519}

\renewcommand{\thefootnote}{}

\renewcommand{\PaperNumber}{018}

\FirstPageHeading

\ShortArticleName{Asymptotics for Hankel Determinants Associated to a Hermite Weight}

\ArticleName{Asymptotics for Hankel Determinants Associated\\ to a Hermite Weight with a Varying Discontinuity\footnote{This paper is a~contribution to the Special Issue on Orthogonal Polynomials, Special Functions and Applications (OPSFA14). The full collection is available at \href{https://www.emis.de/journals/SIGMA/OPSFA2017.html}{https://www.emis.de/journals/SIGMA/OPSFA2017.html}}}

\Author{Christophe CHARLIER~$^\dag$ and Alfredo DEA\~{N}O~$^\ddag$}

\AuthorNameForHeading{C.~Charlier and A.~Dea\~{n}o}

\Address{$^\dag$~Department of Mathematics, KTH Royal Institute of Technology,\\
\hphantom{$^\dag$}~Lindstedtsv\"{a}gen 25, SE-114 28 Stockholm, Sweden}
\EmailD{\href{mailto:cchar@kth.se}{cchar@kth.se}}

\Address{$^\ddag$~School of Mathematics, Statistics and Actuarial Science, University of Kent,\\
\hphantom{$^\ddag$}~Canterbury CT2 7FS, UK}
\EmailD{\href{mailto:A.Deano-Cabrera@kent.ac.uk}{A.Deano-Cabrera@kent.ac.uk}}

\ArticleDates{Received November 02, 2017, in f\/inal form February 27, 2018; Published online March 07, 2018}

\Abstract{We study $n\times n$ Hankel determinants constructed with moments of a Hermite weight with a Fisher--Hartwig singularity on the real line. We consider the case when the singularity is in the bulk and is both of root-type and jump-type. We obtain large $n$ asymptotics for these Hankel determinants, and we observe a critical transition when the size of the jumps varies with $n$. These determinants arise in the thinning of the generalised Gaussian unitary ensembles and in the construction of special function solutions of the Painlev\'e IV equation.}

\Keywords{asymptotic analysis; Riemann--Hilbert problems; Hankel determinants; random matrix theory; Painlev\'{e} equations}

\Classification{30E15; 35Q15; 15B52; 33E17}

\renewcommand{\thefootnote}{\arabic{footnote}}
\setcounter{footnote}{0}

\section{Introduction and motivation}

We consider the Hankel determinant
\begin{gather}\label{Hn}
H_{n}(v,s,\alpha)=\det\left( \int_{\mathbb{R}} x^{j+k} w(x;v,s,\alpha){\rm d}x \right)_{j,k=0}^{n-1}, \qquad n \in \mathbb{N},
\end{gather}
with a Gaussian weight on the real line of the form
\begin{gather}\label{wHermite}
w(x;v,s,\alpha) = e^{-x^2} |x-v|^{\alpha} \begin{cases}
s, & \mbox{if } x<v,\\
1, & \mbox{if } x>v,
\end{cases}
\end{gather}
where $v \in \mathbb{R}$, $s\in [0,1]$ and $\alpha \in (-1,\infty)$. There is a root-type Fisher--Hartwig (FH) singularity if $\alpha \neq 0$. The piecewise constant factor in \eqref{wHermite} is a jump-type FH singularity only if $s \neq 0$ and $s\neq 1$. If $s=1$ there is no jump, and if $s = 0$ the weight is supported on the interval $[v,\infty)$. By Heine's formula, $H_{n}(v,s,\alpha)$ admits the following $n$-fold integral representation:
\begin{gather}\label{integral representation for H_n}
H_{n}(v,s,\alpha) = \frac{1}{n!}\int_{\mathbb{R}^{n}} \Delta(x)^{2} \prod_{i=1}^{n}w(x_{i};v,s,\alpha){\rm d}x_{i}, \qquad \Delta(x) = \prod_{1\leq i <j \leq n} (x_{j}-x_{i}).
\end{gather}
In this paper we are interested in large $n$ asymptotics for $H_{n}(v,s,\alpha)$, uniformly in $s \in [0,1]$. An analogous case was analysed in \cite{BotDeiItsKra} for Fredholm determinant associated with the sine kernel and in~\cite{ChCl1,ChCl2} for Toeplitz determinants with a weight def\/ined on the unit circle. We brief\/ly summarize here some known results for particular values of the parameters and we present some applications.

The Hankel determinant $H_{n}(v,s,\alpha)$ arises in random matrix theory. Consider the set of $n \times n$ Hermitian matrices $M$ endowed with the probability measure
\begin{gather*}
\frac{1}{\widehat{Z}_{n}(v,\alpha)}|\det (M)-vI|^{\alpha} e^{-\operatorname{Tr}M^{2}}{\rm d}M, \qquad {\rm d}M = \prod_{i=1}^{n} {\rm d}M_{ii} \prod_{1\leq i<j\leq n} {\rm d}\Re M_{ij} {\rm d}\Im M_{ij},
\end{gather*}
where $\widehat{Z}_{n}(v,\alpha)$ is the normalisation constant. We will refer to this random matrix ensemble as the \textit{generalised Gaussian unitary ensemble}, which we denote by ${\rm GUE}(v,\alpha)$. Such a measure of matrices $M$ induces a probability measure on the eigenvalues $x_{1},\dots,x_{n}$ of $M$ which is of the form
\begin{gather}\label{lol 23}
\frac{1}{n! Z_{n}(v,\alpha)} \Delta(x)^{2} \prod_{i=1}^{n} w(x_{i};v,1,\alpha){\rm d}x_{i},
\end{gather}
where $Z_{n}(v,\alpha)$ is called the partition function of ${\rm GUE}(v,\alpha)$. From~\eqref{lol 23} and the integral representation for Hankel determinants given by~\eqref{integral representation for H_n}, we have the relation
\begin{gather*}
Z_n(v,\alpha) = H_n(v,1,\alpha).
\end{gather*}
The special case of ${\rm GUE}(0,0)$ (note that if $\alpha = 0$, the parameter $v$ is irrelevant) is called the Gaussian unitary ensemble~(GUE), and has already been widely studied (see, e.g.,~\cite{Mehta}). The partition function of the GUE is a Selberg integral and is explicitly known (see, e.g., \cite{Mehta}). Its exact expression and its large $n$ asymptotics are given by
\begin{gather}
 \log H_{n}(0,1,0) = -\frac{n^{2}}{2}\log 2+ \frac{n}{2}\log(2\pi)+\sum_{j=1}^{n-1}\log (j!) \label{lol 23b}\\
\hphantom{\log H_{n}(0,1,0)}{} = \frac{n^{2}}{2} \log \left( \frac{n}{2} \right) -\frac{3}{4}n^{2} + n \log(2\pi)-\frac{\log n}{12} + \zeta^{\prime}(-1)+\mathcal{O}\big(n^{-1}\big), \quad \mbox{as} \ n \to \infty, \nonumber
\end{gather}
where $\zeta$ is Riemann's zeta-function.

When $\alpha \neq 0$ but $v=0$, the partition function of ${\rm GUE}(0,\alpha)$ is again a Selberg integral and is also known exactly for f\/inite~$n$, see~\cite{MehtaNormand} or \cite[equation~(A.8)]{DS}. This is not true for $v \neq 0$. In \cite{Krasovsky}, Krasovsky obtained via the Riemann--Hilbert method that large $n$ asymptotics of $H_{n}(\sqrt{2n}t,1,\alpha)$, when $t$ is in a compact subset of $(-1,1)$, are given by
\begin{gather}
 \log \frac{H_{n}\big(\sqrt{2n}t,1,\alpha\big)}{H_{n}(0,1,0)} = \frac{\alpha}{2}n\log n - \frac{\alpha}{2}\big(1-2t^{2}+\log 2\big)n + \frac{\alpha^{2}}{4}\log n + \frac{\alpha^{2}}{4}\log \big(2\sqrt{1-t^{2}}\big)\nonumber \\
 \hphantom{\log \frac{H_{n}\big(\sqrt{2n}t,1,\alpha\big)}{H_{n}(0,1,0)} =}{}+ \log \frac{G\big(1+\frac{\alpha}{2}\big)^{2}}{G(1+\alpha)} + \mathcal{O}\left( \frac{\log n}{n} \right),\label{lol 22}
\end{gather}
where $G$ is Barnes' $G$-function.

Let us denote by $x_{\min}^{(v,\alpha)}$ and $x_{\max}^{(v,\alpha)}$ the smallest and largest eigenvalue in ${\rm GUE}(v,\alpha)$, respectively. The probability of observing no eigenvalues in $(-\infty,v)$, denoted by $\mathbb{P}\big( x_{\min}^{(v,\alpha)} \geq v \big)$, can be expressed as a ratio of Hankel determinants given by \eqref{Hn} with $s=0$ and $s=1$. From a direct integration of \eqref{lol 23}, and from the symmetry $w(x;v,1,\alpha) = w(-x;-v,1,\alpha)$, we have
\begin{gather}\label{prob x_min}
\mathbb{P}\big( x_{\min}^{(v,\alpha)} \geq v \big) = \frac{H_{n}(v,0,\alpha)}{H_{n}(v,1,\alpha)} = \mathbb{P}\big( x_{\max}^{(-v,\alpha)} \leq -v \big).
\end{gather}
It is well-known that the empirical spectral distribution of the eigenvalues (after proper rescaling) in the GUE converges weakly almost surely to the Wigner semi-circle distribution, i.e.,
\begin{gather*}
\frac{1}{n}\sum_{i=1}^{n}\delta_{\frac{x_{i}}{\sqrt{2n}}}
\rightarrow \frac{2}{\pi}\sqrt{1-x^{2}}{\rm d}x,
\end{gather*}
see for instance \cite[Chapter~2]{AGZ}. It is also known that the smallest eigenvalue $x_{\min}^{(0,0)}$ is usually located near $-\sqrt{2n}$ and the properly rescaled f\/luctuations of $x_{\min}^{(0,0)}$ around $-\sqrt{2n}$ follow the Tracy--Widom distribution. The ratio~\eqref{prob x_min} with $v = -\sqrt{2n} \big( 1+\frac{w}{2n^{2/3}} \big)$ and~$w$ in a compact subset of~$\mathbb{R}$, i.e., when~$v$ is near the edge of the spectrum, has been recently studied for $\alpha \neq 0$ in~\cite{WuXuZhao} (including general~$s>0$).

In this paper we investigate the probability of a large deviation of $x_{\min}^{(v,\alpha)}$, i.e., the probability~\eqref{prob x_min} when $v$ is suf\/f\/iciently far from~$-\sqrt{2n}$. In the particular case of $v = 0$, \eqref{prob x_min} is the probability that a matrix~$M$ drawn from the ${\rm GUE}(0,\alpha)$ is positive def\/inite. Large $n$ asymptotics for this probability have been obtained in \cite{DS}:
\begin{gather}\label{lol 38}
\log \frac{H_{n}(0,0,\alpha)}{H_{n}(0,1,\alpha)} = - \frac{\log 3}{2} n^{2} - \frac{\alpha \log 3}{2}n + \left( \frac{\alpha^{2}}{4}-\frac{1}{12} \right) \log n + c_{0}+ \mathcal{O}\big(n^{-1}\big),
\end{gather}
where
\begin{gather*}
c_{0} = \frac{\alpha}{2}\log(2\pi) + \left( \frac{\alpha^{2}}{4}-\frac{1}{6} \right) \log 2 + \left( \frac{1}{8} - \frac{\alpha^{2}}{2} \right) \log 3 + \zeta^{\prime}(-1) - \log \left[G\left( 1+\frac{\alpha}{2} \right)^{2}\right].
\end{gather*}
Note that the denominator $H_{n}(0,1,\alpha)$ can be obtained from \eqref{lol 23b} and \eqref{lol 22} with $t = 0$. One of the goals of the present paper is to generalize this result for $v \neq 0$, i.e., to obtain strong large~$n$ asymptotics of~$H_{n}\big(\sqrt{2n}t,0,\alpha\big)$, when~$t$ is in a~compact subset of~$(-1,\infty)$. In particular, it is possible to deduce from our result the probability~\eqref{prob x_min}.

Assume we thin the eigenvalues $x_{1},\dots,x_{n}$ from \eqref{lol 23} by removing each of them independently with a certain probability $s \in [0,1]$. The resulting point process is called the \textit{thinned ${\rm GUE}(v,\alpha)$}, whose spectrum is denoted by $y_{1},\dots,y_{m}$, where $m$ is itself a random variable following the Binomial distribution Bin$(n,1-s)$. Thinning was introduced in random matrix theory by Bohigas and Pato~\cite{BohigasPato1}, and we refer to~\cite{Charlier, ChCl2} for analogous situations and an overview of the theory of thinning. Let us denote by $y_{\min}^{(v,s,\alpha)}$ for the smallest thinned eigenvalue (note that we always have $y_{\min}^{(v,s,\alpha)} \geq x_{\min}^{(v,\alpha)}=y_{\min}^{(v,0,\alpha)}$). The ratio $\frac{H_{n}(v,s,\alpha)}{H_{n}(v,1,\alpha)}$ is a one-parameter generalisation of~\eqref{prob x_min} and represents the probability of observing no eigenvalue of the thinned spectrum in~$(-\infty,v)$, i.e., we have
\begin{gather}\label{prob x_min thinned}
\mathbb{P}\big( y_{\min}^{(v,s,\alpha)} \geq v \big) = \frac{H_{n}(v,s,\alpha)}{H_{n}(v,1,\alpha)}.
\end{gather}
In the regime when $s$ is in a compact subset of $(0,1]$ and $v = \sqrt{2n}t$ with $t$ in a compact subset of $(-1,1)$, asymptotics of this probability have been obtained rigorously in~\cite{ItsKrasovsky} for $\alpha = 0$ and in \cite{Charlier} for $\alpha \neq 0$. Note that asymptotics for integer $\alpha$ were obtained previously in~\cite{Garoni}, based on the work~\cite{BH}. As $n \to \infty$, they are given by
\begin{gather}
\log \frac{H_{n}\big(\sqrt{2n}t,s,\alpha\big)}{H_{n}\big(\sqrt{2n}t,1,\alpha\big)} = n \int_{-1}^{t} \frac{2}{\pi}\sqrt{1-x^{2}}{\rm d}x \log s + \frac{(\log s)^{2}}{4\pi^{2}} \log n + \tilde{c}_{0} + \mathcal{O} \left( \frac{\log n}{n} \right)\nonumber\\
\qquad{} =\left[2\arcsin t+2t\sqrt{1-t^2}+\pi\right]\frac{\log s}{2\pi}\, n+ \frac{(\log s)^{2}}{4\pi^{2}} \log n + \tilde{c}_{0} + \mathcal{O} \left( \frac{\log n}{n} \right),\label{lol 39}
\end{gather}
where the constant $\tilde{c}_{0}=\tilde{c}_{0}(t,s,\alpha)$ is explicit:
\begin{gather*}
\tilde{c}_{0} = \frac{3(\log s)^{2}}{4\pi^{2}} \log\big(2\sqrt{1-t^{2}}\big) + \alpha \frac{\log s}{2\pi}\arcsin t + \log \frac{G\big( 1+\frac{\alpha}{2}+\frac{\log s}{2\pi i} \big)G\big( 1+\frac{\alpha}{2}-\frac{\log s}{2\pi i} \big)}{G\big(1+\frac{\alpha}{2}\big)^{2}}.
\end{gather*}
The second goal of the present paper is to obtain large $n$ asymptotics of $H_{n}\big(\sqrt{2n}t,s,\alpha\big)$ when $s = s(n) \to 0$ as $n \to \infty$, and to observe a transition in the large $n$ asymptotics between~\eqref{lol 39}, where $s$ is bounded away from $0$, and the case $H_{n}\big(\sqrt{2n}t,0,\alpha\big)$.

Another motivation for the study of the Hankel determinant \eqref{Hn} comes from solutions of the Painlev\'{e} IV dif\/ferential equation ($\textrm{P}_{\rm IV}$). The $\mathrm{P_{IV}}$ equation f\/inds interesting applications in many dif\/ferent areas of physics, such as non-linear optics, dispersive long-wave equations, f\/luid dynamics and plasma physics (see, e.g., \cite[Section~10.1]{Clarkson}, \cite{Winter}). This equation depends on two parameters $A,B \in\mathbb{C}$ and is given by
\begin{gather*}
q^{\prime\prime}(z)=\frac{1}{2q(z)} q^{\prime}(z)^2+\frac{3}{2}q(z)^3+4zq(z)^2+2\big(z^2-A\big)q(z)+\frac{B}{q(z)}.
\end{gather*}
In this paper, we focus on special function solutions of~$\mathrm{P_{IV}}$ when the parameters~$A$ and~$B$ satisfy suitable constraints. It is known (see, e.g., \cite[Theorem~3.4]{CJ} and \cite[Theorem~25.2]{GLS}) that $\mathrm{P_{IV}}$ has solutions expressible in terms of parabolic cylinder functions $U(a,z)$ (these functions are def\/ined in \cite[Chapter~12]{NIST}) if and only if either
\begin{gather}\label{lol 44}
B=-2(2n+1+\varepsilon A)^2 \qquad \textrm{or} \qquad B=-2n^2,
\end{gather}
with $\varepsilon = \pm 1$ and $n\in\mathbb{Z}$. In this situation, we are considering the so-called special function solutions of $\mathrm{P_{IV}}$.

In general, the standard method to derive special function solutions is by considering associated Riccati equations. We refer the reader to \cite[Section~7]{Clarkson} for the general theory or \cite[Section 32.10]{NIST} for a summary of special function solutions. For $n=0$ in \eqref{lol 44}, the Riccati equation of $\textrm{P}_{\rm IV}$ is
\begin{gather*}
q^{\prime}(z)=\varepsilon q(z)^2+2\varepsilon zq(z)+2\nu,
\end{gather*}
with $\nu=-(1+\varepsilon A)$.
If we set $q(z)=-\varepsilon \varphi_{\nu}'(z)/\varphi_{\nu}(z)$ to linearise this equation, then $\varphi_{\nu}$ satisf\/ies
\begin{gather}\label{ODE_Weber}
\varphi_{\nu}^{\prime\prime}(z)-2\varepsilon z\varphi_{\nu}^{\prime}(z)+2\varepsilon \nu\varphi_{\nu}(z)=0,
\end{gather}
whose general solutions can be written in terms of the parabolic cylinder function $U(a,z)$, see \cite[Section 12.2]{NIST}. If $\nu\notin \mathbb{Z}$, then $\varphi_{\nu}(z) = \varphi_{\nu}(z;\varepsilon)$ can be written in the form
\begin{gather}\label{phiU}
\varphi_{\nu}(z;\varepsilon)=
\begin{cases}
\big\{C_1 U\big({-}\nu-\tfrac{1}{2},\sqrt{2} z\big)+
C_2 U\big({-}\nu-\tfrac{1}{2},-\sqrt{2} z\big)\big\}\exp\big(\tfrac{1}{2}z^2\big), & \mbox{if } \varepsilon=1,\\
\big\{C_1 U\big(\nu+\tfrac{1}{2},\sqrt{2} z\big)+
C_2 U\big(\nu+\tfrac{1}{2},-\sqrt{2} z\big)\big\}\exp\big({-}\tfrac{1}{2}z^2\big), & \mbox{if } \varepsilon=-1,
\end{cases}\!\!\!\!
\end{gather}
where $C_{1},C_{2} \in \mathbb{C}$. We comment on the case $\nu \in \mathbb{Z}$ at the end of this section. For general $n \in \mathbb{N}$, it is a remarkable fact that the special function solutions can be constructed explicitly in terms of the following Wronskian determinants (see \cite{FW, Okamoto} and \cite[Theorem~3.5]{CJ}):
\begin{gather*}
\tau_{n,\nu}(z;\varepsilon)=
\mathcal{W}\left(\varphi_{\nu}(z;\varepsilon),\frac{{\rm d}\varphi_{\nu}(z;\varepsilon)}{{\rm d}z}, \ldots,
\frac{{\rm d}^{n-1}\varphi_{\nu}(z;\varepsilon)}{{\rm d}z^{n-1}}\right), \qquad n\geq 1,
\end{gather*}
with $\tau_{0,\nu}(z;\varepsilon)=1$. This Wronskian determinant can in turn be written in terms of the Hankel determinant $H_{n}(v,s,\alpha)$, we obtain for $n \geq 0$
\begin{gather}
H_n(v,s,\alpha) = \Gamma(1+\alpha)^n 2^{-n^2-\frac{n(\alpha-1)}{2}}\tau_{n,\alpha}(v,-1)\nonumber \\
\hphantom{H_n(v,s,\alpha)}{} = e^{-nv^2}\Gamma(1+\alpha)^n 2^{-n^2-\frac{n(\alpha-1)}{2}}\tau_{n,-\alpha-1}(v,1),\label{tau in terms of Hn}
\end{gather}
where $C_1=1$ and $C_2=s$. This relation relies on the integral representation for the parabolic cylinder function \cite[equation~(12.5.1)]{NIST}:
\begin{gather}\label{intformula_U}
U(a,z)=\frac{e^{-\tfrac{z^2}{4}}}{\Gamma(a+\frac{1}{2})}
\int_0^{\infty} x^{a-\frac{1}{2}} e^{-\frac{x^2}{2}-zx}{\rm d} x, \qquad \Re \, a>-\frac{1}{2}.
\end{gather}
By a change of variables, we can write the moment of order $0$ of the weight function $w(x) = w(x;v,s,\alpha)$ given by \eqref{wHermite} in terms of $U$ as follows:
\begin{gather*}
\int_{-\infty}^{\infty} w(x){\rm d}x = \displaystyle 2^{-\frac{1+\alpha}{2}}e^{-v^{2}} \left[ s \int_{0}^{\infty}x^{\alpha}e^{-\frac{x^{2}}{2}+\sqrt{2}vx}{\rm d}x + \int_{0}^{\infty}x^{\alpha}e^{-\frac{x^{2}}{2}-\sqrt{2}vx}{\rm d}x \right] \\
 \hphantom{\int_{-\infty}^{\infty} w(x){\rm d}x}{} = 2^{-\frac{1+\alpha}{2}}e^{-\frac{v^{2}}{2}}\Gamma(1+\alpha) \left[ s U \big( \alpha + \tfrac{1}{2},-\sqrt{2}v \big) + U \big( \alpha + \tfrac{1}{2},\sqrt{2}v \big) \right].
\end{gather*}
Therefore, from \eqref{phiU} with $C_1=1$ and $C_2=s$, we have
\begin{gather}\label{lol 42}
\int_{-\infty}^{\infty} w(x){\rm d}x = 2^{-\frac{1+\alpha}{2}}\Gamma(1+\alpha) \varphi_{\alpha}(v;-1) = 2^{-\frac{1+\alpha}{2}}e^{-v^2}\Gamma(\alpha+1)\varphi_{-\alpha-1}(v;1).
\end{gather}
Dif\/ferentiating $j$ times \eqref{lol 42} with respect to $v$, we get
\begin{gather*}
\partial_{v}^{j}\varphi_{\alpha}(v;-1) = \frac{2^{\frac{1+\alpha }{2}}}{\Gamma(1+\alpha)} \partial_{v}^{j} \left( \int_{-\infty}^{\infty} w(x){\rm d}x \right) = \frac{(-1)^{j}2^{j}2^{\frac{1+\alpha}{2}}}{\Gamma(1+\alpha)} \int_{-\infty}^{\infty} x^{j}w(x){\rm d}x, \\
\partial_{v}^{j}\varphi_{-\alpha-1}(v;1) = \frac{2^{\frac{1+\alpha }{2}}}{\Gamma(1+\alpha)} \partial_{v}^{j} \left( e^{v^{2}} \int_{-\infty}^{\infty} w(x){\rm d}x \right) = \frac{(-1)^{j}2^{j}2^{\frac{1+\alpha}{2}}}{\Gamma(1+\alpha)}e^{v^{2}} \int_{-\infty}^{\infty} (x-v)^{j}w(x){\rm d}x.
\end{gather*}
After taking the determinant, this establishes the formula \eqref{tau in terms of Hn}. Thus, the results presented in Section \ref{Section: main results} imply large $n$ asymptotics for special solutions of $\textrm{P}_{\rm IV}$ expressed in terms of $\tau_{n,\nu}$, uniformly for $C_{2}$ small.

As explained in \cite{CJ}, when the parameter $\nu=m$ is an integer, we need to take a dif\/ferent combination of independent solutions of equation \eqref{ODE_Weber}, since both parabolic cylinder functions in the seed function become Hermite polynomials. This leads to the so-called generalised Hermite polynomials, which are rational solutions of $\rm{P}_{\rm IV}$, obtained as particular cases of the special function solutions, see the parameter plane of $\rm{P}_{\rm IV}$ in \cite[Section~5.5]{Clarkson} or~\cite{GLS}. We refer the reader to the work of Kawijara and Ohta \cite[Def\/inition~3.1]{KO} for more information on the rational solutions of $\rm{P}_{\rm IV}$.

We also note that in the previous discussion, since $\nu=\alpha$, the integral representation \eqref{intformula_U} imposes the restriction $\alpha>-1$. This restriction can be lifted at the price of taking complex integration and a complex weight function. This def\/ines an associated family of orthogonal polynomials only formally. We do not pursue this route in this paper, but we note that it has been used in the literature, for example in the analysis of the asymptotic behavior and pole structure of rational solutions of $\rm{P}_{\rm II}$ by Bertola and Bothner \cite{BB} and more recently of rational solutions of $\rm{P}_{\rm IV}$ by Buckingham \cite{Buckingham}.

\section{Main results}\label{Section: main results}
Observe that if we naively take the limit $s \to 0$ in \eqref{lol 39}, the asymptotics on the right-hand side blow up, due to the presence of $\log s$ terms. Therefore, a critical transition is expected when $n \to \infty$ and simultaneously $s \to 0$ in a suitable double scaling limit. As mentioned earlier, the contribution of this paper is to analyse 1)~the case $s = 0$ (which is known only for $v = 0$, see~\eqref{lol 38}), and 2)~the transition between the situation when $s = 0$ and the situation when $s$ is in a compact subset of $(0,1]$, given by~\eqref{lol 39}. We obtain the following results.

\begin{Theorem}\label{Th1}
Let $\alpha \in (-1,\infty)$ and $t \in (-1,\infty)$. As $n\to\infty$, we have
\begin{gather}\label{asymp_Hn_s0}
\log \frac{H_n\big(\sqrt{2n}t,0,\alpha\big)}{H_n(0,0,\alpha)} = C_{1}(t)n^{2} + C_{2}(t,\alpha)n + C_{3}(t,\alpha) + \mathcal{O}\big(n^{-1}\big),
\end{gather}
where the coefficients are given by
\begin{gather}
C_{1}(t) = -\frac{2t^{3}}{27}\big(\sqrt{3+t^{2}}-t\big) - \left( \frac{4}{3}t^{2} + \frac{5}{9}t\sqrt{3+t^{2}} \right) - \log \left(\frac{t+\sqrt{3+t^{2}}}{\sqrt{3}} \right), \label{C_1} \\
C_{2}(t,\alpha) = \frac{\alpha t}{3} \big( t-\sqrt{3 + t^2}\big) - \alpha\log \left(\frac{t+\sqrt{3+t^{2}}}{\sqrt{3}}\right), \label{C_2} \\
C_{3}(t,\alpha) = \frac{1-3\alpha^{2}}{6}\log \left(\frac{t+\sqrt{3+t^{2}}}{\sqrt{3}} \right) -\frac{1}{48}\log \left( \frac{3+t^{2}}{3} \right) \nonumber\\
\hphantom{C_{3}(t,\alpha) =}{} - \frac{1}{16}\big(1-4\alpha^{2}\big)\log \left( \frac{3+5t^{2}+4t\sqrt{3+t^{2}}}{3} \right).\label{C_3}
\end{gather}
Furthermore, the error term $\mathcal{O}\big(n^{-1}\big)$ is uniform for $t$ in a compact subset of $(-1,\infty)$.
\end{Theorem}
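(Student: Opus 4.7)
The plan is to apply the Deift--Zhou nonlinear steepest descent to the Fokas--Its--Kitaev Riemann--Hilbert problem (RHP) for the monic orthogonal polynomials associated to the weight $w(x; v, 0, \alpha) = e^{-x^2}(x - v)^\alpha$ supported on $(v, \infty)$, with $v = \sqrt{2n}\,t$, and to obtain the ratio in \eqref{asymp_Hn_s0} by integrating a differential identity in $t$ from $t'=0$ up to $t'=t$.

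After the rescaling $x \mapsto \sqrt{2n}\, y$, the associated $n$-fold integral is supported on $[t, \infty)^n$ with weight $\prod_i (y_i - t)^\alpha e^{-2n y_i^2}$, and the equilibrium measure for the external field $V(y) = 2y^2$ constrained to $[t, \infty)$ is one-cut, supported on $[t, b(t)]$ with hard edge at $t$ and soft edge at $b(t) = (t + 2\sqrt{3+t^2})/3$, having density $\psi(y) = \frac{1}{\pi}\bigl(2y + \tfrac{2}{3}(\sqrt{3+t^2} - t)\bigr)\sqrt{(b-y)/(y-t)}$. The soft endpoint $b(t)$ is pinned by $\int\psi = 1$, equivalently by the quadratic relation $3b^2 - 2bt - t^2 = 4$; all of the radicals $\sqrt{3+t^2}$ and the ratios $(t + \sqrt{3+t^2})/\sqrt{3}$ appearing throughout \eqref{C_1}--\eqref{C_3} trace back to this equilibrium data and to the associated Robin constant $\ell(t)$.

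With the $g$-function $g(z) = \int_t^b \log(z-y)\,\psi(y)\,dy$ in hand, I would carry out the standard sequence of transformations $Y \to T \to S \to R$: $g$-function normalisation at infinity, opening of lenses around $[t, b]$, a global parametrix built from the Szeg\H{o} function of $(y - t)^\alpha$ on the band, an Airy parametrix at the soft edge $b$, and a Bessel parametrix of order $\alpha$ at the hard edge $t$, which simultaneously resolves the endpoint of the support and the root-type singularity. Since $b(t) - t$ is bounded below on compact subsets of $t \in (-1, \infty)$, the standard small-norm theory yields $R(z) = I + \mathcal{O}(1/n)$ uniformly. I would then derive a differential identity for $\partial_t \log H_n(\sqrt{2n}\,t, 0, \alpha)$ in terms of the RHP solution near the hard edge (for $\alpha > 0$, differentiation in $v$ produces no boundary term since $w(v^+) = 0$, reducing the identity to a trace against the parametrices) and insert the steepest-descent expansion to get $\partial_t \log H_n = n^2 F_1(t) + n F_2(t, \alpha) + F_3(t, \alpha) + \mathcal{O}(1/n)$ uniformly on compacta. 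Integration in $t'$ from $0$ to $t$ then yields \eqref{asymp_Hn_s0}, since both sides vanish at $t = 0$.

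The main obstacle will be producing the closed-form expressions \eqref{C_1}--\eqref{C_3}: each $F_i(t, \alpha)$ is a combination of $g$-function evaluations, the Robin constant, and parametrix pre-factors, and integrating them in $t$ and simplifying to match the stated formulas will require systematic use of $3b^2 - 2bt - t^2 = 4$ to reduce powers of $b(t)$ to linear expressions in $b$, $t$ and $\sqrt{3+t^2}$. A secondary technical point is the range $\alpha \in (-1, 0]$, where the Bessel parametrix still applies (it is valid for all $\alpha > -1$) but the $v$-differential identity requires either a careful direct derivation or an analyticity-in-$\alpha$ extension from $\alpha > 0$.
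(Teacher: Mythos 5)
Your proposal matches the paper's own proof essentially step for step: the rescaled weight, the equilibrium measure on $[t,b(t)]$ (what the paper calls $\bar c$, with density $\rho(x)=\tfrac{2}{\pi}(x-\bar b)\sqrt{(\bar c -x)/(x-t)}$, identical to your $\psi$), the Airy parametrix at the soft edge, the Bessel parametrix at $t$ absorbing both the hard edge and the root singularity, the small-norm $R=I+\mathcal{O}(n^{-1})$ conclusion, and the integrate-in-$t$-from-$0$-to-$t$ strategy are all the same. The only deviation is cosmetic: the paper produces the clean at-infinity identity $\partial_v \log H_n(v,0,\alpha)=2\sigma_n(v)$ (equivalently $\partial_t \log H_n = 4n\,U_{1,11}$) by first shifting $x\mapsto x+v$ inside $h_j$ so that differentiation never produces a hard-edge boundary term for any $\alpha>-1$, which makes the $\alpha\in(-1,0]$ care you flag unnecessary; your boundary-term/``trace near the hard edge'' version would work too, but is less clean and, as you note, needs an extra argument for $\alpha\le 0$.
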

Note that there is no critical transition in \eqref{asymp_Hn_s0} as $t \to 1$.
\begin{Remark}
The probability \eqref{prob x_min} with $v = \sqrt{2n}t$ can be rewritten as
\begin{gather*}
\mathbb{P}\big( x_{\min}^{(\sqrt{2n}t,\alpha)} \geq \sqrt{2n}t \big) = \frac{H_{n}(\sqrt{2n}t,0,\alpha)}{H_{n}(0,0,\alpha)}\frac{H_{n}(0,0,\alpha)}{H_{n}(0,1,\alpha)}\frac{H_{n}(0,1,\alpha)}{H_{n}(\sqrt{2n}t,1,\alpha)}.
\end{gather*}
Large $n$ asymptotics of these three ratios are given up to the constant term by Theorem~\ref{Th1} (for $t>-1$), \eqref{lol 38} and \eqref{lol 22} (for $t \in (-1,1)$), respectively. Putting these asymptotics together, we obtain as $n \to \infty$ and for $t \in (-1,1)$,
\begin{gather}
\log \mathbb{P}\big( x_{\min}^{(\sqrt{2n}t,\alpha)} \geq \sqrt{2n}t \big) = \left( C_{1}(t)-\frac{\log 3}{2} \right)n^{2} + \left( C_{2}(t,\alpha)-\frac{\alpha \log 3}{2}-\alpha t^{2} \right)n \nonumber\\
\qquad{}+ \left( \frac{\alpha^{2}}{4}-\frac{1}{12} \right) \log n + C_{3}(t,\alpha) + c_{0} - \frac{\alpha^{2}}{8}\log\big(1-t^{2}\big) + \mathcal{O}\left( \frac{\log n}{n} \right).\label{asymptotics for the probability xmin}
\end{gather}
It can be checked from \eqref{C_1} that $C_{1}(-1) = \frac{\log 3}{2}$ and
\begin{gather*}
C_{1}^{\prime}(t) = -\frac{8}{27}\big(\sqrt{3+t^{2}}-t\big)\big(3+5t^{2}+4t\sqrt{3+t^{2}}\big) < 0, \qquad \mbox{for} \quad t > -1,
\end{gather*}
which shows that the leading term in \eqref{asymptotics for the probability xmin} is negative. This implies that the above probability decays super exponentially fast as $n \to \infty$ for $t \in (-1,1)$.
\end{Remark}
To observe a transition in the large $n$ asymptotics of $H_{n}\big(\sqrt{2n}t,s,\alpha\big)$ when $s = 0$ and when~$s$ is in a compact subset of $(0,1]$, we couple the parameter $s$ with $n$ in the form
\begin{gather*}
s = e^{-\lambda n }, \qquad \lambda \geq 0.
\end{gather*}
Large $n$ asymptotics of $H_{n}\big(\sqrt{2n}t,e^{-\lambda n},\alpha\big)$ will depend on whether $\lambda$ is greater or smaller than a critical value $\lambda_{c}(t)$, which is explicit and given by
\begin{gather}
\lambda_{c}(t) = \frac{2t}{\sqrt{3}}\sqrt{3+t^{2}+2t\sqrt{3+t^{2}}}\nonumber\\
\hphantom{\lambda_{c}(t) =}{} + 2\log \left( 2+t^{2} + t \sqrt{3+t^{2}} + \frac{\sqrt{3+t^{2}}+t}{\sqrt{3}}\sqrt{3+t^{2}+2t\sqrt{3+t^{2}}} \right).\label{critical lambda}
\end{gather}
\begin{Theorem} \label{thm asymptotics for P with (t,lambda)}
Let $s = e^{-\lambda n}$ with $\lambda \in [0,\infty)$, we have the following asymptotic results.
\begin{itemize}\itemsep=0pt
\item[$(1)$] If $t \in (-1,1)$ and $\lambda \geq \lambda_{c}(t)$, then
\begin{gather} \label{lala10}
\log \frac{H_{n}\big(\sqrt{2n}t,e^{-\lambda n},\alpha\big)}{H_{n}\big(\sqrt{2n}t,0,\alpha\big)}=
\mathcal{O}\big(n^{-1/2}e^{-n(\lambda-\lambda_{c}(t))}\big), \qquad \mbox{as} \quad n \to \infty,
\end{gather}
and large $n$ asymptotics for $\log H_{n}\big(\sqrt{2n}t,0,\alpha\big)$ are given by Theorem~{\rm \ref{Th1}}. Furthermore, the $\mathcal{O}$ term in \eqref{lala10} is uniform for $t$ in a compact subset of $(-1,1)$ and for $\lambda \geq \lambda_{c}(t)$.
\item[$(2)$] If $t \in (-1,1)$ and $0 \leq \lambda \leq \lambda_{c}(t)$ are fixed, then
\begin{gather}\label{lol 40}
\lim_{n\to\infty} \frac{1}{n^{2}} \log \frac{H_{n}\big(\sqrt{2n}t,e^{-\lambda n},\alpha\big)}{H_{n}\big(\sqrt{2n}t,1,\alpha\big)} = - \int_{0}^{\lambda} \Omega\big(t,\tilde\lambda\big){\rm d}\tilde\lambda,
\end{gather}
where
\begin{gather*}
\Omega(t,\lambda) = \int_{a}^{b} \rho(x;t,\lambda){\rm d}x, \qquad \rho(x;t,\lambda) = \frac{2}{\pi} \sqrt{c-x} \sqrt{\frac{x-b}{x-t}} \sqrt{x-a},
\end{gather*}
and $a<b<t<c$, with $a$, $b$ and $c$ depending on $\lambda$ and $t$, are uniquely determined by the following equations:
\begin{gather*}
t = a+b+c, \\ 
2 = a^{2} + b^{2} + c^{2} - t^{2},\\ 
\lambda = 4 \int_{b}^{t} \frac{\sqrt{c-x}}{\sqrt{t-x}}\sqrt{x-b}\sqrt{x-a}{\rm d}x. 
\end{gather*}
\end{itemize}
\end{Theorem}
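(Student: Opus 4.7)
The plan is to combine a differential identity in $\lambda$ with the equilibrium-measure description of the rescaled weighted problem, and to split the proof at the critical value $\lambda_c(t)$, which is the value of $\lambda$ where the support of the equilibrium measure changes from two intervals to one. The starting point is an identity obtained by writing $H_n=\prod_{j=0}^{n-1}h_j$ with $h_j$ the squared $L^2$-norm of the monic orthogonal polynomial $p_j$ with respect to $w(\cdot;v,s,\alpha)$, and using that $\partial_s p_j$ has degree less than $j$ and is therefore orthogonal to $p_j$:
\[
\frac{\partial}{\partial s}\log H_n(v,s,\alpha)=\frac{1}{s}\int_{-\infty}^{v}K_n(x,x)\,w(x;v,s,\alpha)\,dx=\frac{1}{s}\,\mathbb{E}_{n,s}\bigl[\#\{i:x_i<v\}\bigr],
\]
where $K_n$ is the Christoffel--Darboux kernel. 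Under the substitution $s=e^{-\lambda n}$ this becomes
\[
\partial_\lambda\log H_n\bigl(\sqrt{2n}t,e^{-\lambda n},\alpha\bigr)=-n\,\mathbb{E}_{n,e^{-\lambda n}}\bigl[\#\{i:x_i<\sqrt{2n}t\}\bigr],
\]
so the two parts of the theorem reduce to extracting the leading behaviour of this expected count in the two regimes and integrating.

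For part $(2)$, after rescaling $x=\sqrt{2n}y$ the effective external field is $V_\lambda(y)=2y^2+\lambda\mathbf{1}_{y<t}$. Its jump at $y=t$ forces the minimiser of the weighted logarithmic energy into the two-band form on $[a,b]\cup[t,c]$, with density $\rho(y;t,\lambda)$ as displayed; the factor $(y-t)^{-1/2}$ is exactly what is required for the effective potential $V_\lambda(y)-2\int\log|y-y'|\rho(y';t,\lambda)\,dy'$ to be piecewise constant on the two bands, compatibly with the discontinuity of $V_\lambda$ at $t$. The three equations for $(a,b,c)$ then arise as follows: expanding the resolvent $G(z)=\int\rho(y;t,\lambda)/(z-y)\,dy=1/z+O(z^{-2})$ for large $z$ and identifying the coefficients of $z^0$ and $z^{-1}$ yields $t=a+b+c$ and $2=a^2+b^2+c^2-t^2$, and the requirement that the effective potential drops by exactly $\lambda$ across the gap $[b,t]$ gives the third equation after integrating the Hilbert transform of $\rho$ along $[b,t]$. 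The classical Coulomb-gas large-deviation principle then yields $n^{-1}\mathbb{E}_{n,e^{-\lambda n}}[\#\{i:x_i<\sqrt{2n}t\}]\to\mu^*_\lambda([a,b])=\Omega(t,\lambda)$ uniformly on compacta of $[0,\lambda_c(t)]$, and inserting this into the differential identity and integrating from $\lambda=0$ (where $\mu^*_0$ is the Wigner semicircle and the right-hand side of \eqref{lol 40} vanishes) produces \eqref{lol 40}.

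For part $(1)$, the critical value $\lambda_c(t)$ is precisely minus the minimum over $y<t$ of the effective potential $V_{\mathrm{eff},0}(y)=2y^2+\ell^{(0)}-2\int\log|y-y'|\rho^{(0)}(y')\,dy'$ associated with the one-band equilibrium density $\rho^{(0)}$ supported on $[t,c]$ (i.e.\ the $s=0$ problem), attained at some $y^*<t$. For $\lambda\geq\lambda_c(t)$ the minimiser $\mu^*_\lambda$ coincides with $\rho^{(0)}$ (the left band has closed) and $\Omega(t,\lambda)=0$, so the expected count is only the exponentially small correction. A Deift--Zhou steepest-descent analysis for the $s=0$ problem, together with the standard outside-the-support kernel asymptotics $K_n(x,x)\,e^{-x^2}|x-v|^\alpha\sim C(y)\,e^{-nV_{\mathrm{eff},0}(y)}$ at $x=\sqrt{2n}y<\sqrt{2n}t$, combined with Laplace's method near $y^*$, yields
\[
\mathbb{E}_{n,e^{-\lambda n}}\bigl[\#\{i:x_i<\sqrt{2n}t\}\bigr]=O\!\left(n^{-1/2}e^{-n(\lambda-\lambda_c(t))}\right),
\]
the factor $n^{-1/2}$ reflecting the local Gaussian fluctuation around $y^*$. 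Integrating the differential identity from $\lambda=+\infty$ (i.e.\ $s=0$) down to $\lambda\geq\lambda_c(t)$ then produces \eqref{lala10}, since $\int_\lambda^\infty n\cdot n^{-1/2}e^{-n(\tilde\lambda-\lambda_c(t))}\,d\tilde\lambda=n^{-1/2}e^{-n(\lambda-\lambda_c(t))}$.

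The main obstacle is the uniform construction of local parametrices near $y=t$ through the phase transition, and their uniform matching with the outer parametrix. In the sub-critical regime, $y=t$ is an interior endpoint of the support with the characteristic $(y-t)^{-1/2}$ singularity of $\rho$, and the local parametrix is built from confluent hypergeometric functions as in Krasovsky's treatment of root-type Fisher--Hartwig singularities; meanwhile the neighbouring endpoint $b$ requires an Airy parametrix whose domain of validity shrinks as $b\nearrow t$ when $\lambda\nearrow\lambda_c(t)$. In the super-critical regime, $y=t$ is the left edge of the support and the Laplace estimate near $y^*$ must be made uniform in $\lambda\geq\lambda_c(t)$, including as $\lambda\searrow\lambda_c(t)$ where $y^*$ approaches $t$ itself. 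Proving existence, uniqueness and continuous dependence of the solution $(a,b,c)$ of the three algebraic equations, most naturally via elliptic-integral representations and an implicit function theorem argument, and tracking the continuous closing of the left band at $\lambda_c(t)$, is a necessary input for all of the above.
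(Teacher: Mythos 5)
Your overall strategy (differential identity in $s$, equilibrium measure, split at $\lambda_c(t)$) is the paper's, and your idea of getting part (2) from soft Coulomb-gas convergence of the expected counting function rather than from a Riemann--Hilbert estimate is a legitimate alternative. However, as written there are two genuine gaps.

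First, in part (2) you assert that $n^{-1}\mathbb{E}_{n,e^{-\lambda n}}[\#\{x_i<\sqrt{2n}t\}]\to\Omega(t,\lambda)$ \emph{uniformly} on compacta of $[0,\lambda_c(t)]$ and then integrate. You yourself identify the obstacle to proving this: uniformity up to $\lambda=\lambda_c(t)$ runs into the birth-of-a-cut transition (and uniformity down to $\lambda=0$ into the closing of the gap), and you do not resolve either. The point is that this uniformity is not needed: since $s\partial_s\log H_n=\mathcal{E}_n(v,s,\alpha)\le n$, the integrand $-n^{-2}\partial_\lambda\log H_n$ is bounded by $1$, so pointwise convergence for each fixed $\lambda\in(0,\lambda_c(t))$ together with Lebesgue's dominated convergence theorem already yields \eqref{lol 40}. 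Without either the (hard) uniformity or the (easy) domination argument, your passage from the derivative to the integral is unjustified. This is exactly how the paper proceeds (Proposition~\ref{prop: convergence of Kn to rho} gives only pointwise convergence, and Remark~\ref{remark : pointwise convergence} explains why that suffices).

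Second, in part (1) your description of the geometry of the transition is wrong, and this matters for the estimate. The minimiser $y^*$ of the one-band effective potential on $(-\infty,t)$ is $\overline{b}=(t-\sqrt{3+t^2})/3$, which is a fixed point strictly less than $t$ and does \emph{not} approach $t$ as $\lambda\searrow\lambda_c(t)$; likewise, in the subcritical regime the band endpoint $b$ tends to $\overline{b}$ (not to $t$) as $\lambda\nearrow\lambda_c(t)$ — the transition is a birth of a cut at the interior point $\overline{b}$ of the gap. More importantly, your bound $K_n(x,x)w(x)\sim C(y)e^{-nV_{\mathrm{eff},0}(y)}$ with a bounded prefactor is not uniform as $y\to t^-$: $t$ is the (hard) left edge of the support carrying the Fisher--Hartwig singularity, the kernel there grows polynomially in $n$, and the smallness of the contribution from $(-\infty,t)\cap D_t$ comes only from the extra factor $e^{-\lambda n}$ in the weight (the paper needs the separate Lemma~\ref{Prop: Bound on the OPs}, built on the modified Bessel parametrix, to show this contribution is $\mathcal{O}(e^{-(\lambda-c)n})$ with $c<\lambda_c$). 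Your Laplace argument at $y^*=\overline{b}$ correctly produces the $n^{-1/2}e^{-n(\lambda-\lambda_c)}$ from the bulk of $(-\infty,t)$, but the neighbourhood of $t$ requires this additional, non-standard estimate which your proposal omits.
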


\begin{Remark}\label{remark: not case t geq 1}
In Theorem \ref{thm asymptotics for P with (t,lambda)}, we restrict ourselves to the case $t \in (-1,1)$. With increasing ef\/fort, this result can be extended for $t \in (-1,\infty)$. If $t \geq 1$, a new region appears in the~$(t,\lambda)$ plane which deserves a separate analysis (which we expect to be straightforward but long). Therefore, we decided not to proceed in this direction.
\end{Remark}
\begin{Remark}
Note that the denominators on the left hand sides of \eqref{lala10} and \eqref{lol 40} are dif\/ferent. We can use Theorem~\ref{thm asymptotics for P with (t,lambda)} to obtain information about the large deviation of the smallest thinned ${\rm GUE}\big(\sqrt{2n}t,\alpha\big)$ eigenvalue as follows.
By \eqref{prob x_min thinned} and \eqref{prob x_min} with $v = \sqrt{2n}t$ and $s = e^{-\lambda n}$, we have
\begin{gather*}
\mathbb{P}\big( y_{\min}^{(\sqrt{2n}t,e^{-\lambda n},\alpha)} \geq \sqrt{2n}t \big) = \frac{H_{n}\big(\sqrt{2n}t,e^{-\lambda n},\alpha\big)}{H_{n}\big(\sqrt{2n}t,0,\alpha\big)}\mathbb{P}\big( x_{\min}^{(\sqrt{2n}t,\alpha)} \geq \sqrt{2n}t \big).
\end{gather*}
Therefore, for $t \in (-1,1)$ and $\lambda \geq \lambda_{c}(t)$, by \eqref{lala10}, as $n \to \infty$ we have
\begin{gather}\label{lol 64}
\log \mathbb{P}\big( y_{\min}^{(\sqrt{2n}t,e^{-\lambda n},\alpha)} \geq \sqrt{2n}t \big) = \log \mathbb{P}\big( x_{\min}^{(\sqrt{2n}t,\alpha)} \geq \sqrt{2n}t \big) + \mathcal{O}\big(n^{-1/2}e^{-n(\lambda-\lambda_{c}(t))}\big),
\end{gather}
and large $n$ asymptotics of $\log \mathbb{P}\big( x_{\min}^{(\sqrt{2n}t,\alpha)} \geq \sqrt{2n}t \big)$ are given by \eqref{asymptotics for the probability xmin}. In the regime $t \in (-1,1)$ and $0 \leq \lambda \leq \lambda_{c}(t)$, \eqref{lol 40} implies
\begin{gather}\label{lol 65}
\log \mathbb{P}\big( y_{\min}^{(\sqrt{2n}t,e^{-\lambda n},\alpha)} \geq \sqrt{2n}t \big) = \left( - \int_{0}^{\lambda} \Omega(t,\tilde\lambda){\rm d}\tilde\lambda \right) n^{2} + o\big(n^{2}\big).
\end{gather}
Since \eqref{lol 64} and \eqref{lol 65} are both valid for $\lambda = \lambda_{c}(t)$, by equalling the leading term, we have
\begin{gather}\label{lol 52}
-\int_{0}^{\lambda_{c}(t)} \Omega(t,\lambda) {\rm d}\lambda = C_{1}(t)-\frac{\log 3}{2}.
\end{gather}
We will give an independent and more direct proof of this formula at the end of Section \ref{Sec_asympHn}.
\end{Remark}
\begin{Remark}
Note that the limit \eqref{lol 40} is independent of $\alpha$. The subleading terms in the large $n$ asymptotics of $\frac{H_{n}(\sqrt{2n}t,e^{-\lambda n},\alpha)}{H_{n}(\sqrt{2n}t,1,\alpha)}$ are expected to depend on $\alpha$ and to be oscillatory and described in terms of elliptic $\theta$-functions. These functions appear in our analysis (see, e.g.,~\eqref{Pinf}). This heuristic is also supported by the analogy of our situation with \cite{BotDeiItsKra}, where the authors obtained $\theta$-functions in the subleading terms.
\end{Remark}
\subsection*{Outline}
The orthogonal polynomials (OPs) with respect to the weight \eqref{wHermite} play a central role in our analysis. In Section \ref{sec_OPs}, we obtain identities for $\partial_{v}H_{n}(v,0,\alpha)$ and for $\partial_{s} H_{n}(v,s,\alpha)$ in terms of these OPs. The Riemann--Hilbert (RH) problem which characterizes these OPs is presented in Section \ref{sec_RH1}. We obtain large $n$ asymptotics for the OPs via a Deift--Zhou steepest descent method on this RH problem. The f\/irst steps of the steepest descent method are the same regardless of the value of the parameter $s \in [0,1)$, and are also presented in Section \ref{sec_RH1}. For the last steps, the analysis will then depend on the speed of convergence of $s$ to $0$. By writing $s = e^{-\lambda n}$, $\lambda \in (0,\infty]$, we distinguish two dif\/ferent regimes in $\lambda$ which are separated by the critical value $\lambda_c(t)>0$. We study the situation $\lambda\geq \lambda_c(t)$ in Section \ref{Section: RH analysis region 1} (the case $s = 0$ corresponds to the special case of $\lambda = +\infty$), and the situation $0<\lambda<\lambda_c(t)$ in Section \ref{Section: RH analysis region 2}. We integrate the dif\/ferential identities and prove Theorem \ref{Th1} and Theorem~\ref{thm asymptotics for P with (t,lambda)} in Section~\ref{Sec_asympHn}.

\section{Orthogonal polynomials and dif\/ferential identities}\label{sec_OPs}
\subsection{Orthogonal polynomials}\label{Subsection: OPs}
We consider the family of orthonormal polynomials $p_{j}$ of degree $j$ with respect to $w$ def\/ined in~\eqref{wHermite}, characterized by the orthogonality conditions
\begin{gather}\label{OP conditions for p}
\int_{\mathbb{R}} p_{j}(x)p_{k}(x)w(x){\rm d}x = \delta_{jk}, \qquad j, k = 0,1,2,\dots,
\end{gather}
and $\kappa_{j}>0$ is the leading coef\/f\/icient of $p_{j}$, that is
$\pi_{j}(x) = \kappa_{j}^{-1}p_{j}(x)$ is the monic orthogonal polynomial of degree $j$ which satisf\/ies
\begin{gather}\label{OP conditions for pi}
\int_{\mathbb{R}} \pi_{j}(x)\pi_{k}(x) w(x){\rm d}x = h_{j}\delta_{jk}, \qquad j, k = 0,1,2,\dots,
\end{gather}
where $h_{j}$ is the squared norm of $\pi_{j}$. From \eqref{OP conditions for p}, we have $h_{j} = \kappa_{j}^{-2}$. It is well-known (see, e.g.,~\cite{Szego-OP}) that these OPs satisfy the recurrence relation
\begin{gather}\label{TTRR_monic}
x\pi_{j}(x)=\pi_{j+1}(x) + \beta_{j} \pi_{j}(x)+\gamma_{j}^2 \pi_{j-1}(x), \qquad j \geq 0, \end{gather}
with $\pi_{-1}(x) := 0$. Note that if we write $\pi_{j}(x)=x^j+\sigma_{j}x^{j-1}+\cdots$, then from \eqref{TTRR_monic} we get the relation
\begin{gather}\label{sigma beta}
\sigma_{j}-\sigma_{j+1}=\beta_{j}, \qquad j \geq 0, \qquad \mbox{where} \qquad \sigma_{0} := 0.
\end{gather}
\subsection[Dif\/ferential identity in $v$ for $s = 0$]{Dif\/ferential identity in $\boldsymbol{v}$ for $\boldsymbol{s = 0}$}
From the determinantal representation for OPs (see, e.g., \cite{Szego-OP}) and \eqref{integral representation for H_n}, the Hankel determinant $H_{n}(v,0,\alpha)$ can be written in terms of the norms of the OPs, one has
\begin{gather}\label{product of the norms}
H_n(v,0,\alpha)=\prod_{j=0}^{n-1} h_j.
\end{gather}
If we dif\/ferentiate with respect to $v$ the relation \eqref{OP conditions for pi} with $k=j$, we obtain
\begin{gather*}
\partial_{v}h_{j} = \partial_{v} \left( \int_{v}^{\infty} \pi_{j}^{2}(x)w(x){\rm d}x \right) = \partial_{v} \left( \int_{0}^{\infty} \pi_j^{2}(x+v)w(x+v){\rm d}x \right).
\end{gather*}
Since $w(x+v) = x^{\alpha}e^{-(x+v)^{2}}$, we get
\begin{gather*}
\partial_{v}h_{j} = 2 \int_{0}^{\infty} \pi_{j}(x+v) \partial_{v}(\pi_{j}(x+v))w(x+v){\rm d}x - 2 \int_{0}^{\infty} (x+v) \pi_{j}^{2}(x+v) w(x+v){\rm d}x \\
 \hphantom{\partial_{v}h_{j}}{} = -2 \int_{v}^{\infty} x \pi_{j}^{2}(x)w(x){\rm d}x,
\end{gather*}
where we have used the orthogonality \eqref{OP conditions for pi} and the fact that $\partial_{v}(\pi_{j}(x+v))$ is a polynomial of degree at most $j-1$. From the recurrence relation \eqref{TTRR_monic}, we obtain
\begin{gather*}
\partial_{v}h_{j} = -2 \beta_{j} h_{j}.
\end{gather*}
As a consequence of this, by taking the $\log$ in \eqref{product of the norms} and dif\/ferentiating it with respect to $v$, we have
\begin{gather*}
\partial_{v}\log H_n(v,0,\alpha)=\sum_{j=0}^{n-1}\partial_{v}\log h_{j}=-2\sum_{j=0}^{n-1}\beta_{j},
\end{gather*}
This can be simplif\/ied by using \eqref{sigma beta}, and gives
\begin{gather}\label{diff identity for s=0}
\partial_{v}\log H_n(v,0,\alpha)=2\sigma_{n}(v),
\end{gather}
where $\sigma_n$ is the subleading coef\/f\/icient of the polynomial $\pi_n(x)$, def\/ined after (\ref{TTRR_monic}), and we have explicitly written the dependence of $\sigma_{n}$ on $v$.
\subsection[Dif\/ferential identity in $s$]{Dif\/ferential identity in $\boldsymbol{s}$}
Suppose that the thinned eigenvalues $y_{1},\dots,y_{m}$ are observed and that $\sharp\{ y_{i} \colon y_{i} <v \} = 0$. From Bayes' formula, using \eqref{lol 23} and \eqref{prob x_min thinned}, the distribution of the whole spectrum $x_{1},\dots,x_{n}$ conditionally on this event is given by
\begin{gather*}
\frac{1}{n!H_{n}(v,s,\alpha)} \Delta(x)^{2} \prod_{i=1}^{n} w(x_{i};v,s,\alpha){\rm d}x_{i}.
\end{gather*}
Such point processes are called conditional, and were f\/irst considered in \cite{ChCl2} on the unit circle and then on the real line in~\cite{Charlier}. This point process is determinantal~\cite{Deift}, and its correlation kernel is given by
\begin{gather}\label{correlation kernel 1}
K_{n}(x,y) = \begin{cases}
\displaystyle \sqrt{w(x)w(y)} \frac{\kappa_{n-1}}{ \kappa_{n}}\frac{ p_{n-1}(y) p_{n}(x)- p_{n-1}(x) p_{n}(y)}{x-y}, & \mbox{if } x \neq y, \\
\displaystyle w(x)\frac{\kappa_{n-1}}{ \kappa_{n}} \big(p_{n}^{\prime}(x)p_{n-1}(x)- p_{n}(x) p_{n-1}^{\prime}(x)\big), & \mbox{if } x = y,
\end{cases}
\end{gather}
where the OPs $p_{j}$ are orthonormal with respect to $w(x;v,s,\alpha)$ and are def\/ined in~\eqref{OP conditions for p}. The expected number of points on $(-\infty,v)$ in this point process is denoted by $\mathcal{E}_{n}(v,s,\alpha)$. It is also known~\cite{Deift} that $\mathcal{E}_{n}(v,s,\alpha)$ can be expressed in terms of the one-point correlation function $K_{n}(x,x)$, we have
\begin{gather}\label{E in terms of Kn}
\mathcal{E}_{n}(v,s,\alpha) = \int_{-\infty}^{v} K_{n}(x,x){\rm d}x.
\end{gather}
The quantity $\mathcal{E}_{n}(v,s,\alpha)$ can also be expressed in terms of the logarithmic derivative of $H_{n}(v,s,\alpha)$ with respect to $s$. Consider the following partition of $\mathbb{R}^{n}$:
\begin{gather*}
A_{k} = \big\{ (x_{1},\dots,x_{n}) \in \mathbb{R}^{n} \colon \sharp \{ x_{i} \colon x_{i} < v \} = k \big\}, \qquad \bigsqcup_{k=0}^{n} A_{k} = \mathbb{R}^{n}.
\end{gather*}
By def\/inition of $\mathcal{E}_{n}(v,s,\alpha)$ we have
\begin{gather*}
 \mathcal{E}_{n}(v,s,\alpha) = \sum_{k=0}^{n} \frac{k}{n!H_{n}(v,s,\alpha)} \int_{A_{k}}\Delta(x)^{2} \prod_{i=1}^{n} w(x_{i};v,s,\alpha){\rm d}x_{i} \\
 \hphantom{\mathcal{E}_{n}(v,s,\alpha)}{} = \sum_{k=0}^{n} \frac{k s^{k}}{n!H_{n}(v,s,\alpha)} \int_{A_{k}}\Delta(x)^{2} \prod_{i=1}^{n} |x_{i}-v|^{\alpha}e^{-x_{i}^{2}}{\rm d}x_{i}.
\end{gather*}
Note that the $n$-fold integral \eqref{integral representation for H_n} can be rewritten as
\begin{gather*}
H_{n}(v,s,\alpha) = \sum_{k=0}^{n} \frac{ s^{k}}{n!} \int_{A_{k}}\Delta(x)^{2} \prod_{i=1}^{n} |x_{i}-v|^{\alpha}e^{-x_{i}^{2}}{\rm d}x_{i},
\end{gather*}
and thus we have $\mathcal{E}_{n}(v,s,\alpha) = s \partial_{s}\log H_{n}(v,s,\alpha)$. Putting this together with \eqref{E in terms of Kn}, we obtain the dif\/ferential identity
\begin{gather}\label{diff identity 1}
s \partial_{s}\log H_{n}(v,s,\alpha) = \int_{-\infty}^{v} K_{n}(x,x){\rm d}x.
\end{gather}
We will also use later the well-known (see, e.g., \cite{Deift}) formula for reproducing kernels
\begin{gather}\label{lol 20}
\int_{-\infty}^{\infty} K_{n}(x,x){\rm d}x = n.
\end{gather}

\section[A Riemann--Hilbert problem and renormalization of the problem]{A Riemann--Hilbert problem and renormalization\\ of the problem}\label{sec_RH1}

We will perform the Deift--Zhou \cite{DeiftZhou1992, DeiftZhou} steepest descent method on a Riemann--Hilbert problem to get the large $n$ asymptotics for $p_{n}$. Consider the matrix valued function $Y$, def\/ined by
\begin{gather}\label{Y definition}
Y(z) = \begin{pmatrix}
\kappa_{n}^{-1}p_{n}(z) & \displaystyle \frac{\kappa_{n}^{-1}}{2\pi i} \int_{\mathbb{R}} \frac{p_{n}(x)w(x)}{x-z}{\rm d}x \vspace{1mm}\\
-2\pi i \kappa_{n-1} p_{n-1}(z) & \displaystyle -\kappa_{n-1} \int_{\mathbb{R}} \frac{p_{n-1}(x)w(x)}{x-z}{\rm d}x
\end{pmatrix}.
\end{gather}
It is well-known \cite{FokasItsKitaev} that $Y$ is the unique solution of the following RH problem.

\subsubsection*{RH problem for $\boldsymbol{Y}$}
\begin{itemize}\itemsep=0pt
\item[(a)] $Y \colon \mathbb{C}\setminus \mathbb{R} \to \mathbb{C}^{2\times 2}$ is analytic.
\item[(b)] The limits of $Y(x\pm i \epsilon)$ as $\epsilon>0$ approaches $0$ exist, are continuous on $\mathbb{R}\setminus \{v\}$ and are denoted by $Y_+$ and $Y_-$ respectively. Furthermore they are related by
\begin{gather}\label{lol 51}
Y_{+}(x) = Y_{-}(x) \begin{pmatrix}
1 & w(x) \\ 0 & 1
\end{pmatrix}, \qquad \mbox{for} \quad x \in \mathbb{R}\setminus \{v\}.
\end{gather}
\item[(c)] As $z \to \infty$, we have $Y(z) = \big(I + Y_{1} z^{-1} + \mathcal{O}\big(z^{-2}\big)\big) z^{n\sigma_{3}}$, where $\sigma_{3} = \left(\begin{smallmatrix}
1 & 0 \\ 0 & -1
\end{smallmatrix}\right)$.
\item[(d)] As $z$ tends to $v$, the behaviour of $Y$ is
\begin{gather*}
\displaystyle Y(z) = \begin{pmatrix}
\mathcal{O}(1) & \mathcal{O}(\log (z-v)) \\
\mathcal{O}(1) & \mathcal{O}(\log (z-v))
\end{pmatrix}, \qquad \mbox{if} \ \ \alpha = 0, \\
\displaystyle Y(z) = \begin{pmatrix}
\mathcal{O}(1) & \mathcal{O}(1)+\mathcal{O}((z-v)^{\alpha}) \\
\mathcal{O}(1) & \mathcal{O}(1)+\mathcal{O}((z-v)^{\alpha})
\end{pmatrix}, \qquad \mbox{if} \ \ \alpha \neq 0.
\end{gather*}
\end{itemize}
If $s = 0$, from condition (b) $Y$ has no jump along $(-\infty,v)$ and thus $Y$ is analytic in $\mathbb{C}\setminus [v,\infty)$. Note also that $Y_{11}(z) = \kappa_{n}^{-1}p_{n}(z) = \pi_{n}(z)$, and thus
\begin{gather}\label{lol 6}
Y_{1,11} = \sigma_{n}(v),
\end{gather}
where $Y_{1,11}$ denotes the $(1,1)$ entry of the matrix $Y_1$.
\subsection{Normalization of the RH problem}
We def\/ine $t = \frac{v}{\sqrt{2n}}$, and we normalize the RH problem for $Y$ with the following transformation
\begin{gather}\label{Y to U transformation}
U(z) = (2n)^{-(\frac{\alpha}{4}+\frac{n}{2})\sigma_{3}} Y\big(\sqrt{2n}z\big)(2n)^{\frac{\alpha}{4}\sigma_{3}}.
\end{gather}
The matrix $U$ satisf\/ies the following RH problem.
\subsubsection*{RH problem for $\boldsymbol{U}$}
\begin{itemize}\itemsep=0pt
\item[(a)] $U \colon \mathbb{C}\setminus \mathbb{R} \to \mathbb{C}^{2\times 2}$ is analytic.
\item[(b)] $U$ has the following jumps:
\begin{gather*}
U_{+}(x) = U_{-}(x) \begin{pmatrix}
1 & \widetilde w(x) \\ 0 & 1
\end{pmatrix}, \qquad \mbox{for} \quad x \in \mathbb{R}\setminus \{t\},
\end{gather*}
where
\begin{gather*}
\widetilde w(x) = (2n)^{-\frac{\alpha}{2}}w(\sqrt{2n}x)= |x-t|^{\alpha} e^{-2 n x^{2}} \begin{cases}
s, & \mbox{if} \ x < t, \\
1, & \mbox{if} \ x > t.
\end{cases}
\end{gather*}
\item[(c)] As $z \to \infty$, we have $U(z) = \big(I + U_{1} z^{-1} + \mathcal{O}\big(z^{-2}\big)\big) z^{n\sigma_{3}}$.
\item[(d)] As $z$ tends to $t$, the behaviour of $U$ is
\begin{gather*}
\displaystyle U(z) = \begin{pmatrix}
\mathcal{O}(1) & \mathcal{O}(\log (z-t)) \\
\mathcal{O}(1) & \mathcal{O}(\log (z-t))
\end{pmatrix}, \qquad \mbox{if} \quad \alpha = 0, \\
\displaystyle U(z) = \begin{pmatrix}
\mathcal{O}(1) & \mathcal{O}(1)+\mathcal{O}((z-t)^{\alpha}) \\
\mathcal{O}(1) & \mathcal{O}(1)+\mathcal{O}((z-t)^{\alpha})
\end{pmatrix}, \qquad \mbox{if} \quad \alpha \neq 0.
\end{gather*}
\end{itemize}
The following lemma translates the dif\/ferential identities \eqref{diff identity for s=0} and \eqref{diff identity 1} in terms of $U$.
\begin{Lemma}
We have the following differential identities
\begin{gather}
\partial_{t} \log H_{n}\big(\sqrt{2n}t,0,\alpha\big) = 4n U_{1,11}, \label{lol 36} \\
s\partial_{s} \log H_{n}\big(\sqrt{2n}t,s,\alpha\big) = \int_{-\infty}^{t}\frac{\widetilde{w}(x)}{2\pi i}\big[ U^{-1}(x)U^{\prime}(x) \big]_{21}{\rm d}x. \label{diff identity 2}
\end{gather}
\end{Lemma}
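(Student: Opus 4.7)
\medskip

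\noindent
\textbf{Proof proposal.} The plan is to translate the two differential identities \eqref{diff identity for s=0} and \eqref{diff identity 1}, which are already stated in terms of $v$, $w$ and the OPs, into the variables $(t,\widetilde{w},U)$ using the transformations \eqref{Y definition} and \eqref{Y to U transformation}. Everything will reduce to bookkeeping of scalar factors of powers of $2n$ and a Jacobian.

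\textbf{Step 1: Relating $\sigma_n$ to $U_{1,11}$.} Starting from \eqref{diff identity for s=0}, the chain rule $\partial_t = \sqrt{2n}\,\partial_v$ gives $\partial_t \log H_n(\sqrt{2n}t,0,\alpha) = 2\sqrt{2n}\,\sigma_n(\sqrt{2n}t) = 2\sqrt{2n}\,Y_{1,11}$ by \eqref{lol 6}. I then expand \eqref{Y to U transformation} at infinity: from the asymptotic form of $Y$,
\begin{gather*}
Y(\sqrt{2n}z) = \bigl(I + Y_1 (\sqrt{2n}z)^{-1} + \mathcal{O}(z^{-2})\bigr)(2n)^{(n/2)\sigma_3} z^{n\sigma_3},
\end{gather*}
so, collecting the diagonal factors (which commute since they are all functions of $\sigma_3$),
\begin{gather*}
U(z) = (2n)^{-(\alpha/4)\sigma_3}\Bigl(I + \tfrac{Y_1}{\sqrt{2n}\,z} + \cdots \Bigr)(2n)^{(\alpha/4)\sigma_3}\, z^{n\sigma_3}.
\end{gather*}
Since conjugation by a diagonal matrix leaves the diagonal entries unchanged, the $(1,1)$ entry of $U_1$ is $Y_{1,11}/\sqrt{2n}$, i.e.\ $Y_{1,11} = \sqrt{2n}\, U_{1,11}$. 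Substituting this above yields \eqref{lol 36}.

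\textbf{Step 2: Rewriting the kernel in terms of $Y$.} Using \eqref{correlation kernel 1} together with $Y_{11}=\kappa_n^{-1}p_n$ and $Y_{21}=-2\pi i\kappa_{n-1}p_{n-1}$, a direct computation gives $\tfrac{\kappa_{n-1}}{\kappa_n}(p_n'p_{n-1}-p_n p_{n-1}') = \tfrac{1}{2\pi i}(Y_{11}Y_{21}'-Y_{11}'Y_{21})$. Because $\det Y \equiv 1$ (the jump \eqref{lol 51} is unimodular and $Y=I+\mathcal{O}(1/z)$ at infinity), the inverse $Y^{-1}$ has the standard cofactor form, and one reads off $[Y^{-1}Y']_{21} = Y_{11}Y_{21}' - Y_{11}'Y_{21}$. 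Hence $K_n(x,x) = \frac{w(x)}{2\pi i}[Y^{-1}Y'](x)_{21}$.

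\textbf{Step 3: Passing from $Y$ to $U$ inside the integral.} Writing \eqref{Y to U transformation} as $Y(\sqrt{2n}z) = C_1^{-1}U(z)C_2^{-1}$ with $C_1,C_2$ diagonal, differentiation in $z$ gives $Y'(\sqrt{2n}z) = \tfrac{1}{\sqrt{2n}}C_1^{-1}U'(z)C_2^{-1}$, and therefore $[Y^{-1}Y'](\sqrt{2n}z) = \tfrac{1}{\sqrt{2n}}C_2 U^{-1}(z)U'(z)C_2^{-1}$. The effect of conjugation by the diagonal $C_2 = (2n)^{(\alpha/4)\sigma_3}$ on the $(2,1)$ entry is the scalar factor $(2n)^{-\alpha/2}$, so
\begin{gather*}
[Y^{-1}Y'](\sqrt{2n}x)_{21} = \tfrac{(2n)^{-\alpha/2}}{\sqrt{2n}}\,[U^{-1}U']_{21}(x).
\end{gather*}
Combined with $w(\sqrt{2n}x) = (2n)^{\alpha/2}\widetilde w(x)$, this yields $K_n(\sqrt{2n}x,\sqrt{2n}x) = \frac{\widetilde w(x)}{2\pi i\sqrt{2n}}[U^{-1}U']_{21}(x)$. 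Applying \eqref{diff identity 1} with $v=\sqrt{2n}t$ and the change of variables $y=\sqrt{2n}x$ in the integral (which contributes a Jacobian $\sqrt{2n}$ that cancels the $1/\sqrt{2n}$) produces \eqref{diff identity 2}.

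The only delicate point is Step 1: one must carefully isolate the $(1,1)$ entry of $U_1$ from the expansion of $U$ at infinity, keeping track of the factor $\sqrt{2n}$ introduced by the dilation $z\mapsto \sqrt{2n}z$ together with the diagonal conjugation by $C_1,C_2$. No other step involves anything beyond routine algebra.
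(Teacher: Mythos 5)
Your proposal is correct and follows essentially the same route as the paper: substitute the $Y\mapsto U$ rescaling \eqref{Y to U transformation} and $Y_{1,11}=\sigma_n$ into \eqref{diff identity for s=0} for the first identity, and rewrite $K_n(x,x)$ via \eqref{Y definition} as $\tfrac{w}{2\pi i}[Y^{-1}Y']_{21}$ followed by the same rescaling and a change of variables for the second. The paper states this in one line without carrying out the bookkeeping of the $(2n)^{\pm\alpha/4}$ conjugation and the Jacobian; your computation fills in exactly those details, correctly.
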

\begin{proof}
The dif\/ferential identity \eqref{lol 36} is obtained by substituting \eqref{Y to U transformation} and \eqref{lol 6} into \eqref{diff identity for s=0}. Similarly, using \eqref{correlation kernel 1} and \eqref{Y definition}, the dif\/ferential identity \eqref{diff identity 1} can be rewritten as \begin{gather*}
s\partial_{s} \log H_{n}\big(\sqrt{2n}t,s,\alpha\big) = \int_{-\infty}^{\sqrt{2n}t} \frac{w(x)}{2\pi i}\big[ Y^{-1}(x)Y^{\prime}(x) \big]_{21}{\rm d}x,
\end{gather*}
which gives \eqref{diff identity 2} after using \eqref{Y to U transformation} and a change of variables.
\end{proof}

\begin{Remark}\label{remark: notation + and - not needed}
Note that $\left[ Y^{-1}(z)Y^{\prime}(z) \right]_{21}$ only involves the f\/irst column of $Y$, which is entire (see \eqref{Y definition} or equivalently \eqref{lol 51}). Thus $\left[ Y_{+}^{-1}(x)Y_{+}^{\prime}(x) \right]_{21}=\left[ Y_{-}^{-1}(x)Y_{-}^{\prime}(x) \right]_{21}$ for $x \in \mathbb{R}$, and we simply denote it by $\left[ Y^{-1}(x)Y^{\prime}(x) \right]_{21}$ without ambiguity. The same remark holds for $U$.
\end{Remark}

\subsection{Equilibrium measure}
We introduce a new parameter $\lambda \in [0,+\infty]$, def\/ined through $s = e^{-\lambda n}$, which characterizes the speed of convergence of $s$ to $0$ as $n \to \infty$. An essential tool in the RH analysis is the so-called equilibrium measure. In our case, the equilibrium measure $\mu_{V}$ is the unique minimizer of the functional
\begin{gather*}
\iint_{\mathbb{R}^{2}} \log |x-y|^{-1} {\rm d}\mu(x) {\rm d}\mu(y) + \int_{\mathbb{R}} V(x) {\rm d}\mu(x),
\end{gather*}
among all Borel probability measures $\mu$ on $\mathbb{R}$, where the potential $V$ is def\/ined by
\begin{gather*}
V(x) = \begin{cases}
2x^{2} + \lambda, & \mbox{if} \ x < t, \\
2x^{2}, & \mbox{if} \ x \geq t,
\end{cases}
\end{gather*}
and where the parameter $t$ has been def\/ined above \eqref{Y to U transformation}. The equilibrium measure is absolutely continuous with respect to the Lebesgue measure and its density will be denoted by~$\rho(x)$. The equilibrium measure and its support, denoted by~$\mathcal{S}$, are completely determined by the following Euler--Lagrange variational conditions \cite{SaTo}:
\begin{gather}
2 \int_{\mathcal{S}} \log |x-y| \rho(y){\rm d}y = V(x) - \ell, \qquad \mbox{for} \quad x \in \mathcal{S}, \label{var equality} \\
2 \int_{\mathcal{S}} \log |x-y| \rho(y){\rm d}y \leq V(x) - \ell, \qquad \mbox{for} \quad x \in \mathbb{R}\setminus \mathcal{S}, \label{var inequality}
\end{gather}
where $\ell$ is a constant. Proposition~\ref{prop: equilibrium measure} below shows that the equilibrium measure depends crucially on whether $\lambda \geq \lambda_{c}(t)$ or $0< \lambda < \lambda_{c}(t)$. If $\lambda = 0$, the potential is simply $V(x) = 2x^{2}$ and the equilibrium measure is the semicircle law supported on $(-1,1)$ (see, e.g.,~\cite{SaTo}). For convenience we also include it in Proposition~\ref{prop: equilibrium measure} (see case (3)), but without giving a proof of it.
\begin{Proposition}\label{prop: equilibrium measure}\quad
\begin{itemize}\itemsep=0pt
\item[$(1)$] If $t \in (-1,\infty)$ and $\lambda \geq \lambda_{c}(t)$, the density of the equilibrium measure $\rho(x) = \rho(x;t)$ is independent of $\lambda$ and is given by
\begin{gather}\label{lol 35}
\rho(x) = \frac{2}{\pi} ( x - \overline{b}) \frac{\sqrt{\overline{c}-x}}{\sqrt{x-t}},
\end{gather}
supported on $\mathcal{S} = [t, \overline{c}]$, with
\begin{gather*}
\overline{b} = \overline{b}(t) = \frac{t - \sqrt{3+t^{2}}}{3}, \qquad \overline{c} = \overline{c}(t) = \frac{t + 2\sqrt{3+t^{2}}}{3}.
\end{gather*}
The constant $\ell = \ell(t)$ in the variational conditions \eqref{var equality} and \eqref{var inequality} is given by
\begin{gather}\label{EL constant in region 1}
\ell = 1 + \frac{2}{3} t \big(\sqrt{3+t^{2}}+2t\big)+2\log \big( 2\big(t+\sqrt{3+t^{2}}\big) \big).
\end{gather}
Furthermore, the variational inequality \eqref{var inequality} is strict for all $x \in \mathbb{R}\setminus \mathcal{S}$ if $\lambda > \lambda_{c}$, and if $\lambda = \lambda_{c}$ then \eqref{var inequality} is strict for all $x \in \mathbb{R}\setminus (\mathcal{S}\cup \{ \overline{b} \})$ and \eqref{var inequality} is an equality at $x = \overline{b}$.
\item[$(2)$] If $t \in (-1,1)$ and $0< \lambda <\lambda_{c}(t)$, the density of the equilibrium measure $\rho(x) = \rho(x;t,\lambda)$ is given by
\begin{gather}\label{lol 45}
\rho(x) = \frac{2}{\pi} \sqrt{c-x} \sqrt{\frac{x-b}{x-t}} \sqrt{x-a},
\end{gather}
supported on two disjoint intervals
\begin{gather*}
\mathcal{S} = [a,b] \cup [t,c], \qquad a < b < t < c,
\end{gather*}
and $a$, $b$, $c$, depending on $\lambda$ and $t$, are uniquely determined by the following equations:
\begin{gather}
 t = a+b+c,\label{abc 1} \\
 2 = a^{2} + b^{2} + c^{2} - t^{2}, \label{abc 2} \\
 \lambda = 4 \int_{b}^{t} \frac{\sqrt{c-x}}{\sqrt{t-x}}\sqrt{x-b}\sqrt{x-a} {\rm d}x. \label{abc lambda}
\end{gather}
Furthermore, for a fixed $t$, the function $\lambda \mapsto b = b(a(\lambda),c(\lambda),\lambda)$ given by the system \eqref{abc 1}--\eqref{abc lambda} is strictly decreasing from $\lambda \in (0,\lambda_{c}(t))$ to $b \in (\overline{b},t)$. The constant $\ell = \ell(t,\lambda)$ is given by
\begin{gather}
\ell = -2 \int_{S} \log|x-t| \rho(x){\rm d}x + 2t^{2} \nonumber\\
\hphantom{\ell}{} = -2 \log |c| + 2c^{2} + \int_{c}^{\infty}\left( 4x - \frac{2}{x} - 4 \frac{\sqrt{x-c}}{\sqrt{x-t}}\sqrt{x-b}\sqrt{x-a} \right){\rm d}x,\label{Euler-Lagrange constant}
\end{gather}
and \eqref{var inequality} is strict.
\item[$(3)$] If $t\in(-1,1)$ and $\lambda = 0$, then we have the semi-circle law
\begin{gather}\label{sc law}
\rho(x) = \frac{2}{\pi} \sqrt{1-x^{2}}, \qquad \mathcal{S} = [-1,1], \qquad \ell = 1 + \log 4,
\end{gather}
and \eqref{var inequality} is strict for $x \in \mathbb{R}\setminus \mathcal{S}$.
\end{itemize}
\end{Proposition}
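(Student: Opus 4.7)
The natural strategy is to verify the Euler--Lagrange conditions \eqref{var equality}--\eqref{var inequality} by constructing the resolvent
\[
G(z) = \int_{\mathcal{S}} \frac{\rho(y)}{y-z}\,{\rm d}y
\]
as an explicit algebraic function. The three characterizing features are (i)~$G(z) = -z^{-1}+\mathcal{O}(z^{-2})$ at infinity (normalization of $\rho$), (ii)~$G_{+}(x)+G_{-}(x) = -V'(x) = -4x$ on the interior of $\mathcal{S}$ (the differentiated variational equality, uniform across $x=t$ because $V'$ is continuous there), and (iii)~$G_{+}(x)-G_{-}(x) = 2\pi i \rho(x)$ on $\mathcal{S}$ (Plemelj). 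The discontinuity of $V$ itself at $t$ enters only as a consistency condition integrated across the gap between support components.

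For Case~(1) I would take the one-cut ansatz
\[
G(z) = -2z + 2(z-\overline{b})\sqrt{(z-\overline{c})/(z-t)},
\]
with branch cut on $[t,\overline{c}]$ and $\sqrt{\cdot}\to 1$ at infinity. Conditions (ii) and (iii) are built in and produce exactly the density \eqref{lol 35}. Expanding at infinity, $G(z) = (t-\overline{c}-2\overline{b})+\mathcal{O}(z^{-1})$, so (i) forces two polynomial equations in $\overline{b},\overline{c}$ whose solution is the stated one. The constant $\ell$ of \eqref{EL constant in region 1} is then obtained by evaluating \eqref{var equality} at a convenient point of $\mathcal{S}$, for example $x=t$. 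To verify \eqref{var inequality}, set $\psi(x):=2\int\log|x-y|\rho(y)\,{\rm d}y - V(x)+\ell$; a short computation from the explicit $G$ yields $\psi'(x) = -4(x-\overline{b})\sqrt{(x-\overline{c})/(x-t)}$ for $x\notin[t,\overline{c}]$. This is strictly negative on $(\overline{c},\infty)$, while on $(-\infty,t)$ it changes sign at $\overline{b}$, so the maximum of $\psi$ on $(-\infty,t)$ is attained at $\overline{b}$. Using $\psi(t^{+})-\psi(t^{-})=\lambda$ and $\psi(t^{+})=0$, one finds
\[
\psi(\overline{b}) = -\lambda + 4\int_{\overline{b}}^{t}(x-\overline{b})\sqrt{(\overline{c}-x)/(t-x)}\,{\rm d}x,
\]
and evaluating the integral explicitly via a suitable algebraic substitution produces $\lambda_{c}(t)$ of \eqref{critical lambda}. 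Hence $\psi(\overline{b})\le 0 \Leftrightarrow \lambda\ge\lambda_{c}(t)$, strict for $\lambda>\lambda_{c}(t)$ and an equality at $\overline{b}$ precisely when $\lambda=\lambda_{c}(t)$.

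For Case~(2) the analogous ansatz is
\[
G(z) = -2z + 2\sqrt{R(z)}, \qquad R(z) = (z-a)(z-b)(z-c)/(z-t),
\]
with $\sqrt{R(z)}\sim z$ at infinity and branch cuts on $[a,b]\cup[t,c]$. Conditions (ii) and (iii) again hold by construction and yield \eqref{lol 45} through the jump. Expanding at infinity, the vanishing of the constant term and the normalization of the $1/z$ coefficient to $-1$ translate into $a+b+c=t$ and $ab+ac+bc=-1$, equivalent to \eqref{abc 1}--\eqref{abc 2}. The third equation \eqref{abc lambda} arises from integrating $\psi'(x)=-4\sqrt{R(x)}$ (with $R(x)>0$ on $(b,t)$) across the gap and equating $\psi(t^{-})-\psi(b)=-\lambda$. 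Existence, uniqueness and the monotonicity $\partial_{\lambda}b<0$ of $(a,b,c)$ follow from an implicit-function argument anchored by the limiting configurations: as $\lambda\to\lambda_{c}(t)^{-}$ one has $a,b\to\overline{b}$ and $c\to\overline{c}$, reducing to Case~(1), and as $\lambda\to 0^{+}$ one has $b\to t$, $a\to -1$, $c\to 1$, recovering the semicircle law of Case~(3). The variational inequality off $[a,b]\cup[t,c]$ is verified by tracking the sign of $\sqrt{R(x)}$ on each complementary interval, and the formula \eqref{Euler-Lagrange constant} for $\ell$ follows from \eqref{var equality} at $x=c$ or equivalently by integrating the asymptotic expansion of $G$ from $c$ to $+\infty$.

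The main obstacle I anticipate is reducing the explicit algebraic and integral expressions to the stated closed forms: the evaluation of $\lambda_{c}(t)$ in \eqref{critical lambda}, the constants $\ell$ in \eqref{EL constant in region 1} and \eqref{Euler-Lagrange constant}, and the careful analysis of the implicit system \eqref{abc 1}--\eqref{abc lambda} needed to guarantee the ordering $a<b<t<c$ uniformly in $\lambda\in(0,\lambda_{c}(t))$.
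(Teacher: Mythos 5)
Your proposal is correct and follows essentially the same route as the paper: the paper computes the derivative of the variational functional via a contour-deformation/residue evaluation of the Cauchy transform of $\rho$ (which is exactly your resolvent $G$), fixes $\overline{b},\overline{c}$ resp.\ $a,b,c$ from the expansion at infinity, identifies $\lambda_{c}(t)$ from the same integral $4\int_{\overline{b}}^{t}(x-\overline{b})\sqrt{(\overline{c}-x)/(t-x)}\,{\rm d}x$, and reads off $\ell$ by evaluating the equality condition on the support. The only place where you are less explicit than the paper is the uniqueness/monotonicity of $(a,b,c)$ in case (2), which the paper settles concretely by eliminating $a,c$ in favour of $b$ and showing $\partial_{b}\lambda<0$; you correctly flag this as the remaining work.
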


\begin{proof} We will start by proving case (2). From \eqref{abc 1} and \eqref{abc 2}, we can express $a$ and $c$ in terms of $b$ and $t$ as follows
\begin{gather}\label{lol 1}
a = -\frac{b}{2} + \frac{t}{2}-\frac{\sqrt{4-3b^{2}+2t b + t^{2}}}{2}, \qquad c = -\frac{b}{2} + \frac{t}{2}+\frac{\sqrt{4-3b^{2}+2t b + t^{2}}}{2}.
\end{gather}
For $t \in (-1,1)$ f\/ixed and $b \in (\overline{b},t)$, a direct check from \eqref{lol 1} shows that $a<b<t<c$, $\partial_{b} c <0$ and $-1 < \partial_{b} a$. This implies from \eqref{abc lambda} that
\begin{gather*}
\partial_{b} \lambda = 4 \int_{b}^{t} \frac{\sqrt{c-x}}{\sqrt{t-x}}\sqrt{x-b}\sqrt{x-a} \left( \frac{\partial_{b}c}{2(c-x)} - \frac{1}{2(x-b)} - \frac{\partial_{b}a}{2(x-a)} \right){\rm d}x < 0.
\end{gather*}
If $b \nearrow t$, equation \eqref{abc lambda} implies $\lambda \to 0$. On the other hand, if $b \searrow \overline{b}$, equations \eqref{abc 1} and \eqref{abc 2} imply $a \to \overline{b}$ and $c \to \overline{c}$. Again from \eqref{abc lambda}, we thus have
\begin{gather}\label{lol 3}
\lambda \to 4 \int_{\overline{b}}^{t} \frac{\sqrt{\overline{c}-x}}{\sqrt{t-x}}(x-\overline{b}){\rm d}x = \lambda_{c}(t), \qquad \mbox{as} \quad b \to \overline{b},
\end{gather}
where $\lambda_{c}(t)$ is given by \eqref{critical lambda}. This proves that the function $\lambda \mapsto b(a(\lambda),c(\lambda),\lambda)$ is a decreasing bijection from $\lambda \in (0,\lambda_{c}(t))$ to $b \in (\overline{b},t)$. In particular, given $t \in (-1,1)$ and $0 < \lambda < \lambda_{c}(t)$, $a$, $b$ and $c$ are uniquely determined by the equations \eqref{abc 1}--\eqref{abc lambda}. Equations~\eqref{abc 1} and~\eqref{abc 2} imply also that~$\rho$ is a density. Indeed, with a contour deformation and a residue calculation at~$\infty$, we obtain
\begin{gather*}
\int_{\mathcal{S}} \rho(x){\rm d}x = \frac{a^{2}+b^{2}+c^{2}-2ab-2ac-2bc + 2(a+b+c)t-3t^{2}}{4} = 1.
\end{gather*}
Now, we def\/ine
\begin{gather*}\label{f in proof}
f(x) = 2 \int_{\mathcal{S}} \log |x-y| \rho(y) {\rm d}y - 2 x^{2},
\end{gather*}
where $\rho$ is given by~\eqref{lol 45}. Its derivative $f^{\prime}(x)$ can be explicitly evaluated. Consider the function $\tilde{\rho}(z) = \frac{2}{\pi} \sqrt{z-c} \frac{\sqrt{z-b}}{\sqrt{z-t}}\sqrt{z-a}$, such that $\tilde{\rho}_{\pm}(x) = \pm i \rho(x)$ for $x \in \mathcal{S}$. From \eqref{f in proof}, we have that
\begin{gather*}
f^{\prime}(x) = \frac{1}{2\pi i} \int_{\Sigma} \frac{-2\pi \tilde{\rho}(w)}{x-w}{\rm d}w - 4x,
\end{gather*}
where $\Sigma$ consists of two circles surrounding $\mathcal{S}$ in the counter-clockwise direction, and if $x \notin \mathcal{S}$, then $x$ does not lie in the interior region of $\Sigma$. Also, note that
\begin{gather*}
\operatorname{Res}\left(\frac{-2\pi \tilde{\rho}(w)}{x-w},w = \infty\right) = -2(a+b+c-t-2x) = 4x,
\end{gather*}
where we have used \eqref{abc 1}. Therefore, $f^{\prime}(x) = 0$ for $x \in \mathcal{S}$ and from a contour deformation and a residue calculation, we obtain
\begin{gather}\label{f prime in proof}
f^{\prime}(x) = \begin{cases}
\displaystyle 4 \frac{\sqrt{c-x}}{\sqrt{t-x}}\sqrt{b-x}\sqrt{a-x}, & x < a, \\
 0, & a \leq x \leq b, \\
\displaystyle -4 \frac{\sqrt{c-x}}{\sqrt{t-x}}\sqrt{x-b}\sqrt{x-a}, & b < x < t, \\
 0, & t \leq x \leq c, \\
\displaystyle -4 \frac{\sqrt{x-c}}{\sqrt{x-t}}\sqrt{x-b}\sqrt{x-a}, & c < x.
\end{cases}
\end{gather}
Note furthermore that from \eqref{var equality}, we have
\begin{gather*}
\lambda = - \int_{b}^{t} f^{\prime}(x){\rm d}x = - (f(t)-f(b)).
\end{gather*}
This implies that \eqref{var equality} and \eqref{var inequality} are satisf\/ied, with a strict inequality in \eqref{var inequality}. From \eqref{var equality}, we have $-\ell = f(x)$, for any $x \in [t,c]$. In particular we have
\begin{gather}\label{EL constant integral form}
\ell = -f(t) = -2 \int_{\mathcal{S}} \log|x-t| \rho(x) {\rm d}x + 2t^{2},
\end{gather}
which is \eqref{Euler-Lagrange constant}. To prove the other expression for $\ell$ in \eqref{Euler-Lagrange constant}, we note that
\begin{gather}\label{f to f tilde}
f(x) = 2\log|x| - 2x^{2} + \tilde f(x),
\end{gather}
where $\tilde f(x) = 2 \int_{\mathcal{S}}\log \left| 1-\frac{y}{x} \right| \rho(y){\rm d}y$. Using the fact that $\lim\limits_{x\to\infty} \tilde f(x) = 0$ and the equa\-tions~\eqref{f prime in proof} and~\eqref{f to f tilde}, we obtain
\begin{gather*}
0 = \tilde f(c) + \int_{c}^{\infty} \tilde{f}^{\prime}(x){\rm d}x = -\ell - 2 \log |c| + 2c^{2} + \int_{c}^{\infty} \left( 4x- \frac{2}{x}+f^{\prime}(x) \right){\rm d}x,
\end{gather*}
from which we f\/ind the second expression in \eqref{Euler-Lagrange constant}. This f\/inishes the proof of case (2).

The case (1) is similar and simpler. In this case, we def\/ine a function $f$ as in \eqref{f in proof}, where $\rho(x) = \frac{2}{\pi} ( x - \overline{b}) \frac{\sqrt{\overline{c}-x}}{\sqrt{x-t}}$. The derivative of $f$ is equal to
\begin{gather}\label{f prime in proof 2}
f^{\prime}(x) = \begin{cases}
\displaystyle -4 \frac{\sqrt{\overline{c}-x}}{\sqrt{t-x}}(x-\overline{b}), & x < t, \\
 0, & t \leq x \leq \overline{c}, \\
\displaystyle -4 \frac{\sqrt{x-\overline{c}}}{\sqrt{x-t}}(x-\overline{b}), & \overline{c} < x.
\end{cases}
\end{gather}
The Euler--Lagrange constant can also be written as in \eqref{EL constant integral form}, but in case~(1) the integral can be explicitly evaluated with a primitive and gives \eqref{EL constant in region 1}. Since $\lambda \geq \lambda_{c}(t)$, from~\eqref{f prime in proof 2} and by the formula for $\lambda_{c}(t)$ given by \eqref{lol 3}, \eqref{var inequality} is strictly satisf\/ied in case~(1) except at $x = \overline{b}$ if $\lambda = \lambda_{c}$. This f\/inishes the proof of Proposition~\ref{prop: equilibrium measure}.
\end{proof}

\begin{Remark}
Here we just comment brief\/ly on what happens for $t \geq 1$ in parts (2) and (3) of Proposition~\ref{prop: equilibrium measure}, even though we will only focus on $t \in (-1,1)$ in the present paper, as mentioned in Remark~\ref{remark: not case t geq 1}. If $t>1$, a critical situation occurs if the endpoints $t$ and $c$ merge together. From equations~\eqref{abc 1} and~\eqref{abc 2}, in this case we have $a = -1$ and $b = 1$. From~\eqref{abc lambda}, this corresponds to
\begin{gather*}
\lambda = 4 \int_{1}^{t} \sqrt{x^{2}-1}{\rm d}x = 2t \sqrt{t^{2}-1}-2\log\big(t + \sqrt{t^{2}-1}\big) =:\epsilon(t).
\end{gather*}
With increasing ef\/fort (by adapting the above proof), it is possible to show that part (2) of Proposition~\ref{prop: equilibrium measure} remains valid for $t \geq 1$, provided that $\lambda \in (\epsilon(t),\lambda_{c}(t))$, and that part (3) remains valid for $t \geq 1$, provided that $\lambda \in [0,\epsilon(t)]$.
\end{Remark}
\begin{Remark}\label{remark : pointwise convergence}
We will prove part 2 of Theorem \ref{thm asymptotics for P with (t,lambda)} (that is, \eqref{lol 40}) by using the dif\/ferential identity \eqref{diff identity 2} (after the change of variables $s = e^{-\lambda n}$). We a priori need large $n$ asymptotics of $\partial_{\lambda}\log H_{n}\big(\sqrt{2n}t,e^{-\lambda n},\alpha\big)$ uniformly in $\lambda \in [0,\lambda_{c}(t)]$. As it can be seen from Proposition \ref{prop: equilibrium measure}, in this case the support of the equilibrium measure consists of two intervals. The region where $\lambda_{c}(t)-\lambda >0$ is small corresponds to the ``birth of a cut", which was studied in \cite{BerLee, Claeys, Mo}, and when $\lambda_{c}(t)-\lambda >0$ becomes larger, $\theta$-functions appear in the analysis. It is a technical task to obtain uniform asymptotics in these regions as $\lambda$ approaches $\lambda_c(t)$. Nevertheless, follo\-wing~\cite{ChCl2}, we will only need pointwise convergence in $\lambda \in (0,\lambda_{c}(t))$ and apply Lebesgue's dominated convergence theorem.
\end{Remark}

\subsection[First transformation: $U \mapsto T$]{First transformation: $\boldsymbol{U \mapsto T}$}
The f\/irst step of the steepest descent analysis consists of normalizing the RH problem at $\infty$, which can be done by using a so-called $g$-function. We def\/ine it by
\begin{gather}\label{g function def}
g(z) = \int_{\mathcal{S}} \log (z-y) \rho(y) {\rm d}y,
\end{gather}
where the principal branch is chosen for the logarithm. The density $\rho$ and its support $\mathcal{S}$ are def\/ined in Proposition~\ref{prop: equilibrium measure}. The $g$-function is analytic in $\mathbb{C}\setminus (-\infty,\sup \mathcal{S}]$ and possesses the following properties
\begin{gather}
g_{+}(x) + g_{-}(x) = 2 \int_{S} \log |x-y| \rho(y){\rm d}y, \qquad x \in \mathbb{R}, \label{g+ + g-} \\
g_{+}(x) - g_{-}(x) = 2 \pi i, \qquad x < \inf \mathcal{S}, \label{g+ - g- 1} \\
g_{+}(x) - g_{-}(x) = 2 \pi i \int_{x}^{\sup \mathcal{S}} \rho(y){\rm d}y, \qquad x \in [\inf \mathcal{S},\sup \mathcal{S}], \label{g+ - g- 2}\\
g_{+}(x) - g_{-}(x) = 0, \qquad \sup \mathcal{S} < x, \label{g+ - g- 3}
\end{gather}
Furthermore, by expanding the $g$-function as $z\to\infty$ in \eqref{g function def}, we have
\begin{gather}
e^{ng(z)} = z^{n}\left(1-\frac{n}{z} \int_{\mathcal{S}}x\rho(x){\rm d}x + \mathcal{O}\big(z^{-2}\big)\right), \qquad z\to\infty.\label{asymptotics g}
\end{gather}
We apply a f\/irst transformation on $U$ by
\begin{gather*}
T(z) = e^{\frac{n\ell}{2}\sigma_{3}}U(z)e^{-ng(z)\sigma_{3}}e^{-\frac{n\ell}{2}\sigma_{3}}.
\end{gather*}
$T$ satisf\/ies the following RH problem.
\subsubsection*{RH problem for $\boldsymbol{T}$}
\begin{itemize}\itemsep=0pt
\item[(a)] $T \colon \mathbb{C}\setminus \mathbb{R} \to \mathbb{C}^{2\times 2}$ is analytic.
\item[(b)] $T$ has the following jumps:
\begin{gather*}
T_{+}(x) = T_{-}(x) J_{T}(x), \qquad \mbox{for} \quad x \in \mathbb{R}\setminus \{t\},
\end{gather*}
where
\begin{gather*}
 J_{T}(x) = \begin{pmatrix}
e^{-n(g_{+}(x)-g_{-}(x))} & |x-t|^{\alpha}e^{n(g_{+}(x)+g_{-}(x)+\ell - V(x))} \\
0 & e^{n(g_{+}(x)-g_{-}(x))}
\end{pmatrix}, \qquad \mbox{for} \quad x \in \mathbb{R}\setminus\{t\}.
\end{gather*}
\item[(c)] As $z \to \infty$, we have $T(z) = I + \mathcal{O}\big(z^{-1}\big)$.
\item[(d)] As $z$ tends to $t$, the behaviour of $T$ is
\begin{gather*}
 T(z) = \begin{pmatrix}
\mathcal{O}(1) & \mathcal{O}(\log (z-t)) \\
\mathcal{O}(1) & \mathcal{O}(\log (z-t))
\end{pmatrix}, \qquad \mbox{if} \quad \alpha = 0, \\
 T(z) = \begin{pmatrix}
\mathcal{O}(1) & \mathcal{O}(1)+\mathcal{O}((z-t)^{\alpha}) \\
\mathcal{O}(1) & \mathcal{O}(1)+\mathcal{O}((z-t)^{\alpha})
\end{pmatrix}, \qquad \mbox{if} \quad \alpha \neq 0.
\end{gather*}
\end{itemize}
From now on, we will separate the analysis into two parts, depending on whether $\lambda \geq \lambda_{c}(t)$ or $0<\lambda < \lambda_{c}(t)$.

\section[RH analysis for $\lambda \geq \lambda_{c}(t)$]{RH analysis for $\boldsymbol{\lambda \geq \lambda_{c}(t)}$}\label{Section: RH analysis region 1}
In this section, $t$ lies in a compact subset of $(-1,\infty)$ and $\lambda$ lies in $[\lambda_{c}(t),\infty]$ as $n \to \infty$. The parameter $\lambda$ is not necessarily bounded, and the case $\lambda = +\infty$ (i.e., $s = 0$) is also included in this analysis. First, we express the jumps for $T$ in terms of $\xi(z)$, which is def\/ined by
\begin{gather}\label{lol 29}
\xi(z) = - \pi \int_{\overline{c}}^{z}\tilde{\rho}(w){\rm d}w,
\end{gather}
where the path of integration lies in $\mathbb{C}\setminus (-\infty,\overline{c})$ and $\tilde{\rho}$ is given by
\begin{gather}\label{lol 48}
\tilde{\rho}(z) = \frac{2}{\pi} ( z - \overline{b}) \frac{\sqrt{z-\overline{c}}}{\sqrt{z-t}},
\end{gather}
where in the above expression the principal branch is chosen for each square root. The func\-tion~$\xi$ is analytic in $\mathbb{C}\setminus (-\infty,\overline{c}]$, and since $\tilde{\rho}_{\pm}(x) = \pm i \rho(x)$ for $x \in \mathcal{S} = [t,\overline{c}]$, we have
\begin{gather} \label{xi +}
2\xi_{\pm}(x) = \pm(g_{+}(x)-g_{-}(x)) = 2g_{\pm}(x) + \ell - 2 x^{2}, \qquad x \in \mathcal{S},
\end{gather}
where we have used \eqref{var equality}, \eqref{g+ + g-} and \eqref{g+ - g- 2}. Thus the function~$\xi(z)-g(z)$ has no jump along~$\mathcal{S}$ and can be analytically continued on $\mathbb{C}$. This implies the following relation
\begin{gather}\label{lol 14}
\xi(z) = g(z) + \frac{\ell}{2} - z^{2}, \qquad \mbox{for all} \quad z \in \mathbb{C}\setminus(-\infty,\overline{c}).
\end{gather}
The jump matrix $J_{T}$ can be rewritten in terms of $\xi$ as follows:
\begin{gather}
J_{T}(x) = \begin{cases}
 \begin{pmatrix}
1 & |x-t|^{\alpha}e^{n(\xi_{+}(x)+\xi_{-}(x)-\lambda)} \\
0 & 1
\end{pmatrix}, & \mbox{if } x < t, \vspace{1mm}\\
\begin{pmatrix}
e^{-2n\xi_{+}(x)} & |x-t|^{\alpha} \\
0 & e^{2n\xi_{+}(x)}
\end{pmatrix}, & \mbox{if } t < x < \overline{c}, \vspace{1mm}\\
 \begin{pmatrix}
1 & |x-t|^{\alpha}e^{2n\xi(x)} \\
0 & 1
\end{pmatrix}, & \mbox{if } \overline{c}<x. \\
\end{cases}
\end{gather}
As $\xi(z)$ appears in the jumps for $T$ (and in the subsequent transformations), it will be useful for us to make the following observations. From \eqref{var inequality} together with Proposition~\ref{prop: equilibrium measure}, and from~\eqref{g+ + g-} and~\eqref{lol 14}, we have that
\begin{gather}
 \xi(x)<0, \qquad \mbox{for} \quad x > \overline{c}, \label{lol 27} \\
\xi_{+}(x)+\xi_{-}(x)-\lambda < 0, \qquad \mbox{for} \quad x < t, \label{lol 28}
\end{gather}
except if $\lambda = \lambda_{c}(t)$ and $x = \overline{b}$, in which case \eqref{lol 28} becomes an equality. Also, if $x \in (t,\overline{c})$, from the def\/inition \eqref{lol 29} we have $\xi_{\pm}(x) \in \pm i \mathbb{R}^{+}$ and by Cauchy-Riemann equations we have
\begin{gather*}
 \partial_{x} \Im \xi_{+}(x) = - \pi \rho(x) = -\partial_{y} \Re \xi(x+iy) \big|_{y=0}, \\
 \partial_{x} \Im \xi_{-}(x) = \pi \rho(x) = \partial_{y} \Re \xi(x-iy) \big|_{y=0}.
\end{gather*}
In particular, this implies that there exists an open neighbourhood $W$ of $(t,\overline{c})$ such that we have
\begin{gather} \label{lol 30}
\Re \xi(z) >0, \qquad \mbox{for} \quad z \in W\setminus (t,\overline{c}).
\end{gather}
\subsection[Second transformation: $T \mapsto S$]{Second transformation: $\boldsymbol{T \mapsto S}$}
We will use the following factorization of $J_{T}(x)$ for $x \in \mathcal{S}$
\begin{gather*}
\begin{pmatrix}
e^{-2n\xi_{+}(x)} & |x-t|^{\alpha} \\
0 & e^{-2n\xi_{-}(x)}
\end{pmatrix} = \begin{pmatrix}
1 & 0 \\ |x-t|^{-\alpha}e^{-2n \xi_{-}(x)} & 1
\end{pmatrix} \\
\hphantom{\begin{pmatrix}
e^{-2n\xi_{+}(x)} & |x-t|^{\alpha} \\
0 & e^{-2n\xi_{-}(x)}
\end{pmatrix} =}{}
\times \begin{pmatrix}
0 & |x-t|^{\alpha} \\ -|x-t|^{-\alpha} & 0
\end{pmatrix} \begin{pmatrix}
1 & 0 \\ |x-t|^{-\alpha}e^{-2n \xi_{+}(x)} & 1
\end{pmatrix}.
\end{gather*}
\begin{figure}[t]\centering
 \setlength{\unitlength}{1truemm}
 \begin{picture}(100,48)(28,10)
 \put(58,40){\line(1,0){70}}
 \put(60,40){\line(-1,0){40}}
 \put(85,40){\thicklines\vector(1,0){.0001}}
 \put(82,42.5){$\mathcal{S}$}
 \put(120,40){\thicklines\vector(1,0){.0001}}
 \put(35,40){\thicklines\vector(1,0){.0001}}
 \put(58,40){\thicklines\circle*{1.2}}
 \put(57.2,36.4){$t$}
 \put(110,40){\thicklines\circle*{1.2}}
 \put(109,36.4){$\overline{c}$}
 \qbezier(58,40)(84,68)(110,40)
 \put(85,54){\thicklines\vector(1,0){.0001}}
 \put(82,55.5){$\gamma_{+}$}
 \qbezier(58,40)(84,12)(110,40)
 \put(85,26){\thicklines\vector(1,0){.0001}}
 \put(82,22){$\gamma_{-}$}
 \end{picture}
 \vspace{-13mm}

 \caption{Jump contour for $S$. \label{fig open lens contours}}
\end{figure}
We open the lenses with $\gamma_{+}$ and $\gamma_{-}$ around $S$ as illustrated in Fig.~\ref{fig open lens contours}, such that $\gamma_{+}\cup\gamma_{-} \subset W$ and we def\/ine
\begin{gather*}
S(z) = T(z) \begin{cases}
\begin{pmatrix}
1 & 0 \\ -(z-t)^{-\alpha}e^{-2n \xi(z)} & 1
\end{pmatrix}, & \mbox{if } z \mbox{ is inside the lenses}, \ \Im z > 0, \\
\begin{pmatrix}
1 & 0 \\ (z-t)^{-\alpha}e^{-2n \xi(z)} & 1
\end{pmatrix}, & \mbox{if } z \mbox{ is inside the lenses}, \ \Im z < 0, \\
I, & \mbox{if } z \mbox{ is outside the lenses}, \\
\end{cases}
\end{gather*}
where the principal branch is taken for $(z-t)^{-\alpha}$. $S$ satisf\/ies the following RH problem.
\subsubsection*{RH problem for $\boldsymbol{S}$}
\begin{itemize}\itemsep=0pt
\item[(a)] $S \colon \mathbb{C}\setminus (\mathbb{R} \cup \gamma_{+} \cup \gamma_{-}) \to \mathbb{C}^{2\times 2}$ is analytic, where $\gamma_{+}$ and $\gamma_{-}$ are shown in Fig.~\ref{fig open lens contours}.
\item[(b)] $S$ has the following jumps:
\begin{gather*}
 S_{+}(z) = S_{-}(z)\begin{pmatrix}
1 & |z-t|^{\alpha}e^{n(\xi_{+}(z)+\xi_{-}(z)-\lambda)} \\
0 & 1
\end{pmatrix}, \qquad \mbox{if} \quad z < t, \\
 S_{+}(z) = S_{-}(z)\begin{pmatrix}
1 & |z-t|^{\alpha}e^{2n\xi(z)} \\
0 & 1
\end{pmatrix}, \qquad \mbox{if} \quad \overline{c} < z, \\
 S_{+}(z) = S_{-}(z)\begin{pmatrix}
0 & |z-t|^{\alpha} \\ -|z-t|^{-\alpha} & 0
\end{pmatrix}, \qquad \mbox{if}\quad t < z < \overline{c}, \\
 S_{+}(z) = S_{-}(z)\begin{pmatrix}
1 & 0 \\ (z-t)^{-\alpha}e^{-2n \xi(z)} & 1
\end{pmatrix}, \qquad \mbox{if} \quad z \in \gamma_{+} \cup \gamma_{-}.
\end{gather*}
\item[(c)] As $z \to \infty$, we have $S(z) = I + \mathcal{O}\big(z^{-1}\big)$.
\item[(d)] As $z$ tends to $t$, we have{\samepage
\begin{gather}
 S(z) = \begin{cases}
\begin{pmatrix}
\mathcal{O}(1) & \mathcal{O}\big(\log (z-t)\big) \\
\mathcal{O}(1) & \mathcal{O}\big(\log (z-t)\big)
\end{pmatrix}, & z \ \mbox{outside the lenses}, \\
\begin{pmatrix}
\mathcal{O}\big(\hspace*{-0.05cm}\log (z-t)\big) & \mathcal{O}\big(\hspace*{-0.05cm}\log (z-t)\big) \\
\mathcal{O}\big(\hspace*{-0.05cm}\log (z-t)\big) & \mathcal{O}\big(\hspace*{-0.05cm}\log (z-t)\big)
\end{pmatrix}, & z \ \mbox{inside the lenses},
\end{cases} \qquad \mbox{if} \quad \alpha = 0, \nonumber\\
 S(z) = \begin{cases}
\begin{pmatrix}
\mathcal{O}(1) & \mathcal{O}(1) \\
\mathcal{O}(1) & \mathcal{O}(1)
\end{pmatrix}
, & z \ \mbox{outside the lenses}, \\
\begin{pmatrix}
\mathcal{O}\big((z-t)^{- \alpha }\big) & \mathcal{O}(1) \\
\mathcal{O}\big((z-t)^{- \alpha }\big) & \mathcal{O}(1)
\end{pmatrix}
, & z \ \mbox{inside the lenses},
\end{cases} \qquad \mbox{if}\quad \alpha > 0, \nonumber\\
 S(z) = \begin{pmatrix}
\mathcal{O}(1) & \mathcal{O}\big((z-t)^{ \alpha }\big) \\
\mathcal{O}(1) & \mathcal{O}\big((z-t)^{ \alpha }\big)
\end{pmatrix}
, \qquad \mbox{if} \quad \alpha < 0.\label{behaviour of S near t}
\end{gather}
As $z$ tends to $\overline{c}$, we have $S(z) = \mathcal{O}(1)$.}
\end{itemize}
From \eqref{lol 27}, \eqref{lol 28} and \eqref{lol 30}, we have that the jumps for $S(z)$ on the boundary of the lenses $\gamma_{+}\cup\gamma_{-}$ tend to the identity matrix as $n \to \infty$, and that the $(1,2)$ entry of the jumps on $\mathbb{R}\setminus ([t,\overline{c}]\cup\{\overline{b}\})$ tends to $0$ as $n \to \infty$. This convergence is slower when $z$ approaches $t$ and $\overline{c}$, and also when $z$ approaches $\overline{b}$ if $\lambda = \lambda_{c}(t)$. The jump for $S$ on $(t,\overline{c})$ is independent of $n$ and dif\/ferent from the identity matrix.

\subsection{Global parametrix}
Ignoring the exponentially small terms as $n \to \infty$ in the jumps of $S$ and a small neighbourhood of $\overline{b}$, $t$ and $\overline{c}$, we are left to consider the following RH problem, whose solution $P^{(\infty)}$ is a good approximation of $S$ away from a neighbourhood of $\overline{b}$, $t$ and $\overline{c}$.
\subsubsection*{RH problem for $\boldsymbol{P^{(\infty)}}$}
\begin{itemize}\itemsep=0pt
\item[(a)] $P^{(\infty)} \colon \mathbb{C}\setminus [t,\overline{c}] \to \mathbb{C}^{2\times 2}$ is analytic.
\item[(b)] $P^{(\infty)}$ has the following jumps:
\begin{gather}
 P^{(\infty)}_{+}(z) = P^{(\infty)}_{-}(z)\begin{pmatrix}
0 & |z-t|^{\alpha} \\ -|z-t|^{-\alpha} & 0
\end{pmatrix}, \qquad \mbox{if} \quad t < z < \overline{c}. \label{lol 26}
\end{gather}
\item[(c)] As $z \to \infty$, we have $P^{(\infty)}(z) = I + \mathcal{O}\big(z^{-1}\big)$.
\item[(d)]
As $z$ tends to $t$, we have $P^{(\infty)}(z) = \mathcal{O}\big((z-t)^{-1/4}\big)(z-t)^{-\frac{\alpha}{2}\sigma_{3}}$.

As $z$ tends to $\overline{c}$, we have $P^{(\infty)}(z) = \mathcal{O}\big((z-\overline{c})^{-1/4}\big)$.
\end{itemize}
The construction of the solution of the above RH problem is now standard, and can be done similarly as in \cite{AtkChaZoh, Krasovsky, KMcLVAV}. We def\/ine $\beta(z) = \sqrt[4]{\frac{z-t}{z- \overline{c}}}$, analytic on $\mathbb{C}\setminus [t,\overline{c}]$ and such that $\beta(z) \sim 1$ as $z \to \infty$. It can be checked that the unique solution of the above RH problem is given by
\begin{gather}
P^{(\infty)}(z) =\frac{1}{2} \left(\frac{4}{\overline{c}-t}\right)^{-\frac{\alpha}{2}\sigma_{3}}
\begin{pmatrix}
\beta(z)+\beta^{-1}(z) & i(\beta(z)-\beta^{-1}(z)) \\
-i(\beta(z)-\beta^{-1}(z)) & \beta(z)+\beta^{-1}(z)
\end{pmatrix} \nonumber\\
\hphantom{P^{(\infty)}(z) =}{} \times \varphi \left( \frac{2}{\overline{c}-t} \left(z-\frac{\overline{c}+t}{2}\right) \right)^{\frac{\alpha}{2}\sigma_{3}}(z-t)^{-\frac{\alpha}{2}\sigma_{3}},\label{Pinf Region 1}
\end{gather}
where $\varphi(z) = z + \sqrt{z^{2}-1}$ is analytic in $\mathbb{C}\setminus [-1,1]$ and such that $\varphi(z) \sim 2z$ as $z \to \infty$. We will later need the following expansion as $z\to \infty$:
\begin{gather}\label{asymptotics Pinf}
P_{11}^{(\infty)}(z) = 1 - \frac{\alpha(\overline{c}-t)}{4z} + \mathcal{O}\big(z^{-2}\big).
\end{gather}
\subsection[Local parametrix near $t$]{Local parametrix near $\boldsymbol{t}$}
Note that the assumption at the beginning of the section, i.e., that $t$ lies in a compact subset of $(-1,\infty)$ and $\lambda \geq \lambda_{c}(t)$ as $n \to \infty$, implies from Proposition \ref{prop: equilibrium measure} that there exists a constant $\delta >0$ independent of $n$ such that
\begin{gather*}
\delta < \min (t-\overline{b},\overline{c}-t).
\end{gather*}
Inside a disk $D_{t}$ around $t$, of radius f\/ixed but smaller than $\delta/3$, we want the local parametrix $P$ to satisfy exactly the same jumps as $S$ and to have the same behaviour as $S$ near $t$. Furthermore, the local parametrix $P$ should be close to the global parametrix on the boundary of the disk.
\subsubsection*{RH problem for $\boldsymbol{P}$}
\begin{itemize}\itemsep=0pt
\item[(a)] $P \colon D_{t}\setminus (\mathbb{R} \cup \gamma_{+} \cup \gamma_{-}) \to \mathbb{C}^{2\times 2}$ is analytic.
\item[(b)] $P$ has the following jumps:
\begin{gather}
 P_{+}(z) = P_{-}(z)\begin{pmatrix}
1 & |z-t|^{\alpha}e^{n(\xi_{+}(z)+\xi_{-}(z)-\lambda)} \\
0 & 1
\end{pmatrix}, \qquad \mbox{if} \quad z \in (-\infty,t)\cap D_{t}, \label{P1}\\
 P_{+}(z) = P_{-}(z)\begin{pmatrix}
0 & |z-t|^{\alpha} \\ -|z-t|^{-\alpha} & 0
\end{pmatrix}, \qquad \mbox{if} \quad z \in (t,\infty)\cap D_{t}, \nonumber\\
 P_{+}(z) = P_{-}(z)\begin{pmatrix}
1 & 0 \\ (z-t)^{-\alpha}e^{-2n \xi(z)} & 1
\end{pmatrix}, \qquad \mbox{if}\quad z \in (\gamma_{+} \cup \gamma_{-}) \cap D_{t}.\nonumber
\end{gather}
\item[(c)] As $n \to \infty$, we have $P(z) = \big(I + \mathcal{O}\big(n^{-1}\big)\big)P^{(\infty)}(z)$ uniformly for $z \in \partial D_{t}$.
\item[(d)] As $z$ tends to $t$, we have $S(z)P(z)^{-1} = \mathcal{O}(1)$.
\end{itemize}
The construction of a local parametrix associated with a FH singularity has been studied in \cite{ItsKrasovsky} when the singularity is a pure jump, and then in~\cite{FouMarSou} and~\cite{DIK} for the general case, and involves hypergeometric functions. On the other hand, the construction of a local parametrix associated to a pure root-type FH singularity involves Bessel functions \cite{KMcLVAV}. In the present case, we are in a presence of a FH singularity of both root-type and jump-type, but the signif\/icant dif\/ference is that the parameter $s$ (which parametrizes the jump) is exponentially small as $n\to\infty$. The solution of the present local parametrix will be expressed in terms of Bessel functions, exactly as for a pure root-type singularity. Nevertheless, as the $(1,2)$ element of the jump matrix for~$P$ in~\eqref{P1} is not zero (if $\lambda \neq +\infty$, i.e., $s \neq 0$), the construction of the solution of the above RH problem is not standard. It was done in \cite{ChCl1} for the case $\alpha = 0$. We generalize here the construction for a general $\alpha >-1$.

We will need a modif\/ied version of the Bessel model RH problem $P_{\mathrm{Be}}$, which is presented in Appendix~\ref{ApB}. We search for a matrix function $\widehat{P}_{\mathrm{Be}}$ which satisf\/ies the same jumps as $P_{\mathrm{Be}}$, see~\eqref{Jump for P_Be}, and an extra jump on $\mathbb{R}^{+}$ given by
\begin{gather*}
\widehat{P}_{\mathrm{Be}}(z)_{+} = \widehat{P}_{\mathrm{Be}}(z)_{-}\begin{pmatrix}
1 & e^{-\lambda n} \\ 0 & 1
\end{pmatrix}, \qquad z \in (0,\infty),
\end{gather*}
where the orientation of $(0,\infty)$ is taken from $0$ to $\infty$.
\subsubsection*{Modif\/ied Bessel model RH problem}
We def\/ine
\begin{gather}\label{def of F}
F(z) = P_{\mathrm{Be}}(z) K(z)^{-1} \begin{pmatrix}
1 & -h(z) \\ 0 & 1
\end{pmatrix}z^{-\frac{\alpha}{2}\sigma_{3}},
\end{gather}
where
\begin{gather*}
K(z) = \begin{cases}
I, & | \arg z | < \frac{2\pi}{3}, \\
\begin{pmatrix}
1 & 0 \\ -e^{\pi i \alpha} & 1
\end{pmatrix}, & \frac{2\pi}{3}< \arg z < \pi, \vspace{1mm}\\
\begin{pmatrix}
1 & 0 \\ e^{-\pi i \alpha} & 1
\end{pmatrix}, & -\pi < \arg z < -\frac{2\pi }{3},
\end{cases}\qquad h(z) = \begin{cases}
\displaystyle \frac{1}{2i \sin(\pi \alpha)}, & \mbox{if} \ \alpha \notin \mathbb{N}, \vspace{1mm}\\
\displaystyle \frac{(-1)^{\alpha}}{2\pi i} \log z, & \mbox{if} \ \alpha \in \mathbb{N}.
\end{cases}
\end{gather*}
From the jumps for $P_{\mathrm{Be}}$, given by \eqref{Jump for P_Be}, it can be checked that $F$ has no jumps at all on $\mathbb{C}$. Also, the behaviour of $P_{\mathrm{Be}}(z)$ as $z\to 0$, given by \eqref{local behaviour near 0 of P_Be}, implies that $0$ is a removable singularity of $F$, and thus $F$ is an entire function. We def\/ine $\widehat{P}_{\mathrm{Be}}$ by
\begin{gather*}
\widehat{P}_{\mathrm{Be}}(z) = (I+A(z))P_{\mathrm{Be}}(z),
\end{gather*}
where
\begin{gather}\label{Az}
A(z) = -e^{-\lambda n}h(-z) (-z)^{\alpha} F(z) \begin{pmatrix}
0 & 1 \\ 0 & 0
\end{pmatrix}F^{-1}(z),
\end{gather}
and if $\alpha \notin \mathbb{Z}$, $(-z)^{\alpha}$ is chosen with a branch cut on $[0,\infty)$ such that $(-z)^{\alpha} >0$ for $z < 0$.
Since~$F$ is entire, $A$ is analytic on $\mathbb{C}\setminus \mathbb{R}^{+}$. Therefore, it can be checked that $\widehat{P}_{\mathrm{Be}}$ is the solution of the following RH problem.
\subsubsection*{RH problem for $\boldsymbol{\widehat{P}_{\mathrm{Be}}}$}
\begin{itemize}\itemsep=0pt
\item[(a)] $\widehat{P}_{\mathrm{Be}}\colon \mathbb{C} \setminus (\Sigma_{B}\cup(0,\infty)) \to \mathbb{C}^{2\times 2}$ is analytic, where the orientation of $(0,\infty)$ is from $0$ to\-wards~$\infty$ and $\Sigma_{B}$ is the jump contour for $P_{\mathrm{Be}}$, shown in Fig.~\ref{figBessel}.
\item[(b)] $\widehat{P}_{\mathrm{Be}}$ satisf\/ies the jump conditions
\begin{gather*}
 \widehat{P}_{\mathrm{Be},+}(z) = \widehat{P}_{\mathrm{Be},-}(z) \begin{pmatrix}
0 & 1 \\ -1 & 0
\end{pmatrix}, \qquad z \in \mathbb{R}^{-}, \\
 \widehat{P}_{\mathrm{Be},+}(z) = \widehat{P}_{\mathrm{Be},-}(z) \begin{pmatrix}
1 & 0 \\ e^{\pi i \alpha} & 1
\end{pmatrix},\qquad z \in e^{\frac{2\pi i }{3}}\mathbb{R}^{+}, \\
 \widehat{P}_{\mathrm{Be},+}(z) = \widehat{P}_{\mathrm{Be},-}(z) \begin{pmatrix}
1 & 0 \\ e^{-\pi i \alpha} & 1
\end{pmatrix}, \qquad z \in e^{-\frac{2\pi i }{3}}\mathbb{R}^{+}, \\
\widehat{P}_{\mathrm{Be},+}(z) = \widehat{P}_{\mathrm{Be},-}(z) \begin{pmatrix}
1 & e^{-\lambda n} \\ 0 & 1
\end{pmatrix}, \qquad z \in \mathbb{R}^{+}.
\end{gather*}
\item[(c)] As $z \to \infty$, $z \notin \Sigma_{B}\cup(0,\infty)$, we have
\begin{gather*}
\widehat{P}_{\mathrm{Be}}(z) = (I+A(z))\big( 2\pi z^{\frac{1}{2}} \big)^{-\frac{\sigma_{3}}{2}} N \big(
I+\mathcal{O} \big(z^{-\frac{1}{2}}\big)\big) e^{2z^{\frac{1}{2}}\sigma_{3}},
\end{gather*}
where $N = \frac{1}{\sqrt{2}}\left(\begin{smallmatrix}
1 & i \\ i & 1
\end{smallmatrix}\right)$.
\item[(d)]
As $z$ tends to 0, the behaviour of $\widehat{P}_{\mathrm{Be}}(z)$ is
\begin{gather*}
 P_{\mathrm{Be}}(z)^{-1}\widehat{P}_{\mathrm{Be}}(z) = \mathcal{O}(\log z), \qquad \mbox{if} \quad \alpha \in \mathbb{N}, \\
P_{\mathrm{Be}}(z)^{-1}\widehat{P}_{\mathrm{Be}}(z) = \mathcal{O}(1), \qquad \mbox{if} \quad \alpha \notin \mathbb{N}.
\end{gather*}
\end{itemize}
\subsubsection*{Construction of the local parametrix}
We consider the function
\begin{gather}\label{lol 43}
f(z) = -\frac{1}{4}\tilde{\xi}(z)^{2}, \qquad \mbox{where} \quad \tilde{\xi}(z) = \begin{cases}
\xi(z)-\xi_{+}(t), & \mbox{if } \Im z > 0, \\
\xi(z)-\xi_{-}(t), & \mbox{if } \Im z < 0. \\
\end{cases}
\end{gather}
This a conformal map from $D_{t}$ to a neighbourhood of $0$, and as $z \to t$, we have
\begin{gather}\label{lol 24}
f(z) = k_{1}^{2}(z-t)(1+\mathcal{O}(z-t)), \qquad \mbox{where} \quad k_{1} = 2(t-\overline{b}) \sqrt{\overline{c}-t}.
\end{gather}
To construct the solution $P$ of the above RH problem, it is important to note that $A(-n^{2}f(z))$ remains small as $n \to \infty$ uniformly for $z \in \partial D_{t}$. More precisely, from \eqref{def of F} and \eqref{large z asymptotics Bessel}, as $n \to \infty$ we have
\begin{gather}
F\big({-}n^{2}f(z)\big) = \mathcal{O}(\log n) \mathcal{O}\big(n^{| \alpha|}\big) \mathcal{O}\big(P_{\mathrm{Be}}\big({-}n^{2}f(z)\big)\big) = \mathcal{O}\big(e^{(d+\epsilon)n}\big),
\end{gather}
where $\epsilon >0$ can be chosen arbitrary small but f\/ixed, and $d = \max\limits_{z \in \partial D_{t}}\big|2\sqrt{-f(z)}\big|$. Similarly, we have the estimate $F^{-1}\big({-}n^{2}f(z)\big) = \mathcal{O}\big(e^{(d+\epsilon)n}\big)$. We choose the radius of $D_{t}$ f\/ixed but suf\/f\/iciently small such that $d < \frac{\lambda_{c}(t)}{3}\leq \frac{\lambda}{3}$. This implies from \eqref{Az} that $A\big({-}n^{2}f(z)\big) = \mathcal{O}\big(e^{-\frac{\lambda}{3}n}\big)$ as $n \to \infty$, uniformly for $z \in \partial D_{t}$. The local parametrix is given by
\begin{gather}\label{local param near t, Region 1}
P(z) = E(z) \sigma_{3} \widehat{P}_{\mathrm{Be}}\big({-}n^{2}f(z)\big)\sigma_{3}e^{-n\xi(z)\sigma_{3}} e^{\frac{\pi i \alpha}{2}\tilde\theta(z)\sigma_{3}}(z-t)^{-\frac{\alpha}{2}\sigma_{3}},
\end{gather}
where
\begin{gather}
\tilde\theta(z) = \begin{cases}
+1, & \mbox{if } \Im z >0, \\
-1, & \mbox{if } \Im z <0,
\end{cases}
\end{gather}
and the function $E(z)$ is def\/ined for $z \in D_{t}$ by
\begin{gather}\label{E in local param near t, Region 1}
E(z) = (-1)^{n}P^{(\infty)}(z)(z-t)^{\frac{\alpha}{2}\sigma_{3}}e^{-\frac{\pi i \alpha}{2}\tilde\theta(z)\sigma_{3}}N \big( 2\pi n (-f(z))^{1/2} \big)^{\sigma_{3}/2}.
\end{gather}
It can be checked directly from the jumps for $P^{(\infty)}$ \eqref{lol 26} that $E$ has no jump at all in $D_{t}$. Furthermore, from the behaviour of $P^{(\infty)}(z)$ near $t$ and from~\eqref{lol 24}, one has $E(z) = \mathcal{O}\big((z-t)^{-\frac{1}{2}}\big)$ as $z \to t$. Thus, $t$ is a removable singularity of $E$ and $E$ is analytic in the whole disk $D_{t}$. Since $A(-n^{2}f(z))$ is exponentially small in $n$ uniformly for $z\in \partial D_{t}$, it doesn't contribute to the $n^{-1}$ term in the condition (c) of the RH problem for $P$. Using the large $\zeta$ asymptotics for the Bessel model RH problem given by~\eqref{large z asymptotics Bessel}, we obtain as $n \to \infty$ that
\begin{gather}
P(z)P^{\infty}(z)^{-1} = I + \frac{1}{n(-f(z))^{1/2}}P^{(\infty)}(z)(z-t)^{\frac{\alpha}{2}\sigma_{3}}e^{-\frac{\pi i \alpha}{2}\tilde\theta(z)\sigma_{3}}\sigma_{3}B_{1}\sigma_{3}\nonumber\\
\hphantom{P(z)P^{\infty}(z)^{-1} =}{}\times e^{\frac{\pi i \alpha}{2}\tilde\theta(z)\sigma_{3}}(z-t)^{-\frac{\alpha}{2}\sigma_{3}}P^{(\infty)}(z)^{-1} + \mathcal{O}\big(n^{-2}\big),\label{jump for R on partial D_t}
\end{gather}
uniformly for $z \in \partial D_{t}$, where $B_{1} = \frac{1}{16}\left(\begin{smallmatrix}
-(1+4\alpha^{2}) & -2i \\ -2i & 1+4\alpha^{2}
\end{smallmatrix}\right)$.

\subsection[Local parametrix near $\overline{c}$]{Local parametrix near $\boldsymbol{\overline{c}}$}
Inside a disk $D_{\overline{c}}$ around $\overline{c}$, of radius f\/ixed but smaller than $\delta/3$, we want the local paramet\-rix~$P$ to satisfy the following RH problem.
\subsubsection*{RH problem for $\boldsymbol{P}$}
\begin{itemize}\itemsep=0pt
\item[(a)] $P \colon D_{\overline{c}}\setminus (\mathbb{R} \cup \gamma_{+} \cup \gamma_{-}) \to \mathbb{C}^{2\times 2}$ is analytic.
\item[(b)] $P$ has the following jumps:
\begin{gather*}
 P_{+}(z) = P_{-}(z)\begin{pmatrix}
0 & |z-t|^{\alpha} \\ -|z-t|^{-\alpha} & 0
\end{pmatrix}, \qquad \mbox{if} \quad z \in (-\infty,\overline{c})\cap D_{\overline{c}}, \\
 P_{+}(z) = P_{-}(z)\begin{pmatrix}
1 & |z-t|^{\alpha}e^{2n\xi(z)} \\
0 & 1
\end{pmatrix}, \qquad \mbox{if} \quad z \in (\overline{c},\infty)\cap D_{\overline{c}},\\
 P_{+}(z) = P_{-}(z)\begin{pmatrix}
1 & 0 \\ (z-t)^{-\alpha}e^{-2n \xi(z)} & 1
\end{pmatrix}, \qquad \mbox{if} \quad z \in (\gamma_{+} \cup \gamma_{-})\cap D_{\overline{c}}.
\end{gather*}
\item[(c)] As $n \to \infty$, we have $P(z) = \big(I + \mathcal{O}\big(n^{-1}\big)\big)P^{(\infty)}(z)$ uniformly for $z \in \partial D_{\overline{c}}$.
\item[(d)] As $z$ tends to $\overline{c}$, we have $P(z) = \mathcal{O}(1)$.
\end{itemize}
The solution $P$ of the above RH problem is standard \cite{DKMVZ1} and can be constructed in terms of Airy functions and the associated Airy model RH problem, whose solution is denoted $P_{\mathrm{Ai}}$ and is presented in Appendix \ref{ApA}. Let us f\/irst def\/ine the function
\begin{gather*}
f(z) = \left( -\frac{3}{2}\xi(z) \right)^{2/3}.
\end{gather*}
This is a conformal map from $D_{\overline{c}}$ to a neighbourhood of $0$, and as $z \to \overline{c}$ we have
\begin{gather*}
f(z) = k_{2}^{2/3} (z-\overline{c}) \left[ 1+ \frac{2}{5}k_{3} (z-\overline{c}) + \mathcal{O}\big((z-\overline{c})^{2}\big) \right],
\end{gather*}
with
\begin{gather*}
k_{2} = 2 \frac{\overline{c}-\overline{b}}{\sqrt{\overline{c}-t}}, \qquad
k_{3} = \frac{1}{\overline{c}-\overline{b}}-\frac{1}{2(\overline{c}-t)}.
\end{gather*}
It can be verif\/ied that
\begin{gather*}
P(z) = E(z)P_{\mathrm{Ai}}\big(n^{2/3}f(z)\big)e^{-n\xi(z)\sigma_{3}}(z-t)^{-\frac{\alpha}{2}\sigma_{3}},
\end{gather*}
satisf\/ies the above RH problem, where
\begin{gather*}
E(z) = P^{(\infty)}(z)(z-t)^{\frac{\alpha}{2}\sigma_{3}}N^{-1}f(z)^{\frac{\sigma_{3}}{4}}n^{\frac{\sigma_{3}}{6}}.
\end{gather*}
Again, one can show that $E$ has no jump at all inside $D_{\overline{c}}$ and has a removable singularity at $\overline{c}$, and therefore $E$ is analytic in the whole disk $D_{\overline{c}}$. We will also need explicitly the f\/irst term in the large $n$ expansion of $P(z)P^{(\infty)}(z)^{-1}$ on $\partial D_{\overline{c}}$. As $n \to \infty$, by \eqref{Asymptotics Airy} we have
\begin{gather}
P(z)P^{(\infty)}(z)^{-1} = I + \frac{1}{nf(z)^{3/2}}P^{(\infty)}(z)(z-t)^{\frac{\alpha}{2}\sigma_{3}} A_{1} (z-t)^{-\frac{\alpha}{2}\sigma_{3}}P^{(\infty)}(z)^{-1} \nonumber\\
\hphantom{P(z)P^{(\infty)}(z)^{-1} =}{} + \mathcal{O}\big(n^{-2}\big). \label{jump for R on partial D_c}
\end{gather}
uniformly for $z \in \partial D_{\overline{c}}$, and where $A_{1} = \frac{1}{8} \left(\begin{smallmatrix}
\frac{1}{6} & i \\ i & -\frac{1}{6}
\end{smallmatrix}\right)$,

\subsection[Local parametrix near $\overline{b}$]{Local parametrix near $\boldsymbol{\overline{b}}$}
The local parametrix $P$ in a f\/ixed disk $D_{\overline{b}}$ around $\overline{b}$ can be constructed explicitly. The construction is similar to the one done in \cite{ChCl1} and it is valid for every $\lambda \geq \lambda_{c}$ but is only needed for $\lambda$ close to $\lambda_{c}$.
\subsubsection*{RH problem for $\boldsymbol{P}$}
\begin{itemize}\itemsep=0pt
\item[(a)] $P \colon D_{\overline{b}}\setminus \mathbb{R} \to \mathbb{C}^{2\times 2}$ is analytic.
\item[(b)] $P$ has the following jumps:
\begin{gather*}
P_{+}(z) = P_{-}(z)\begin{pmatrix}
1 & |z-t|^{\alpha}e^{n(\xi_{+}(z)+\xi_{-}(z)-\lambda)} \\
0 & 1
\end{pmatrix}, \qquad \mbox{if} \quad z \in \mathbb{R}\cap D_{\overline{b}}.
\end{gather*}
\item[(c)] As $n \to \infty$, we have $P(z) = (I + o(1))P^{(\infty)}(z)$ uniformly for $z \in \partial D_{\overline{b}}$.
\end{itemize}
Note that for $z \in D_{\overline{b}}$, we have
\begin{gather}\label{lol 31}
\phi(z) = \xi_{+}(z) + \xi_{-}(z) = 2\pi \int_{z}^{t}\tilde{\rho}(w){\rm d}w,
\end{gather}
and $\phi$ is analytic in $D_{\overline{b}}$. The unique solution of the above RH problem is given by
\begin{gather*}
P(z) = P^{(\infty)}(z) \begin{pmatrix}
1 & l(z) \\ 0 & 1,
\end{pmatrix}
\end{gather*}
where
\begin{gather*}
l(z) = \frac{(z-t)^{\alpha}e^{-\pi i \alpha \tilde\theta(z)} e^{-n(\lambda - \lambda_{c})}}{2\pi i} \int_{\mathbb{R}\cap D_{\overline{b}}}\frac{e^{n(\phi(x)-\lambda_{c})}}{x-z}{\rm d}x.
\end{gather*}
From \eqref{lol 28} and \eqref{lol 31}, we have $\phi(\overline{b}) = \lambda_{c}$, $\phi^{\prime}(\overline{b}) = 0$ and $\phi^{\prime\prime}(\overline{b})<0$. Thus, as $n \to \infty$ we have $l(z) = \mathcal{O}(n^{-1/2}e^{-n(\lambda-\lambda_{c})})$ and
\begin{gather*}
P(z) = \big(I + \mathcal{O}\big(n^{-1/2}e^{-n(\lambda-\lambda_{c})}\big)\big)P^{(\infty)}(z), \qquad \mbox{uniformly for} \quad z \in \partial D_{\overline{b}}.
\end{gather*}
\subsection{Small norm RH problem}\label{subsection: small norm region 1}
The f\/inal transformation of the RH analysis is given by
\begin{gather*}
R(z) = \begin{cases}
S(z) P^{(\infty)}(z)^{-1}, & z \in \mathbb{C}\setminus (\overline{D_{t}}\cup \overline{D_{\overline{c}}} \cup \overline{D_{\overline{b}}}), \\
S(z) P(z)^{-1}, & z \in D_{t} \cup D_{\overline{c}} \cup D_{\overline{b}}.
\end{cases}
\end{gather*}
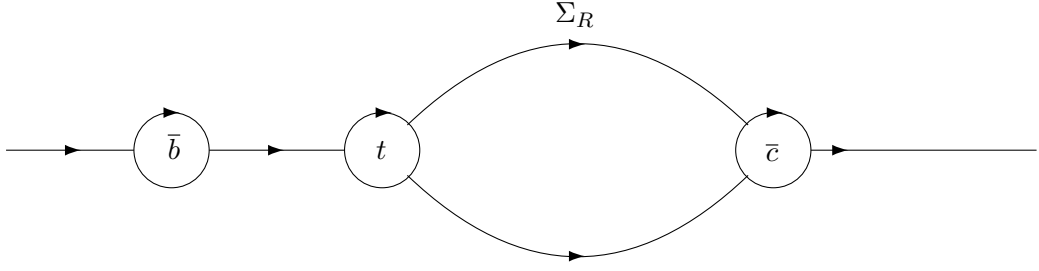
\begin{figure}[t]\centering
 \setlength{\unitlength}{1truemm}
 \begin{picture}(100,38)(28,23)
 \put(81,57){$\Sigma_{R}$}
 \put(115,40){\line(1,0){30}}
 \put(53,40){\line(-1,0){18}}
 \put(25,40){\line(-1,0){17}}
 \put(120,40){\thicklines\vector(1,0){.0001}}
 \put(18,40){\thicklines\vector(1,0){.0001}}
 \put(45,40){\thicklines\vector(1,0){.0001}}
 \put(29.5,38.8){$\overline{b}$} \put(30,40){\circle{10}}\put(31.1,45){\thicklines\vector(1,0){.0001}}
 \put(58,40){\circle{10}} \put(59.3,45){\thicklines\vector(1,0){.0001}}
 \put(57.2,38.8){$t$}
 \put(110,40){\circle{10}} \put(111.3,45){\thicklines\vector(1,0){.0001}}
 \put(109,38.5){$\overline{c}$}
 \qbezier(61.35,43.35)(84,65)(106.65,43.35)
 \put(85,54.1){\thicklines\vector(1,0){.0001}}
 \qbezier(61.35,36.65)(84,15)(106.65,36.65)
 \put(85,25.9){\thicklines\vector(1,0){.0001}}
 \end{picture}
 \vspace{-3mm}

 \caption{Jump contours for the RH problem for $R$. The circles are oriented in clockwise direction. \label{Fig:R region 1}}
\end{figure}
Since $P$ has exactly the same jumps as $S$ inside $D_{t} \cup D_{\overline{c}} \cup D_{\overline{b}}$, and the same behaviour near~$t$ and $\overline{c}$, $R$ has no jumps and is analytic inside these disks, except possibly at $\overline{b}$, $\overline{c}$ and $t$. From the RH problem for $S$, and from the local parametrices around~$\overline{b}$ and~$\overline{c}$, we have that~$S(z)$ and~$P(z)$ are bounded as $z \to \overline{b}$ and as $z \to \overline{c}$. From \eqref{behaviour of S near t}, \eqref{local param near t, Region 1} and \eqref{local behaviour near 0 of P_Be}, as $z \to t$ from outside the lenses, we have
\begin{gather}\label{growth of R near t}
R(z) = \begin{cases}
 \begin{pmatrix}
\mathcal{O}\big(\log (z-t)\big) & \mathcal{O}\big(\log (z-t)\big) \\
\mathcal{O}\big(\log (z-t)\big) & \mathcal{O}\big(\log (z-t)\big)
\end{pmatrix}, & \alpha = 0,\\
\begin{pmatrix}
\mathcal{O}(1) & \mathcal{O}(1) \\
\mathcal{O}(1) & \mathcal{O}(1)
\end{pmatrix}, & \alpha > 0, \\
\begin{pmatrix}
\mathcal{O}\big((z-t)^{\alpha}\big) & \mathcal{O}\big((z-t)^{\alpha}\big) \\
\mathcal{O}\big((z-t)^{\alpha}\big) & \mathcal{O}\big((z-t)^{\alpha}\big)
\end{pmatrix}, & \alpha < 0.
\end{cases}
\end{gather}
By \eqref{growth of R near t}, and since $R(z)$ is bounded as $z \to \overline{b}$ and as $z \to \overline{c}$, the three isolated singularities at $\overline{b}$, $\overline{c}$ and $t$ are removable. On the circles $\partial D_{t}$, $\partial D_{\overline{c}}$ and $\partial D_{\overline{c}}$, we choose the clockwise orientation, as shown in Fig.~\ref{Fig:R region 1}. $R$ satisf\/ies the following RH problem:
\subsubsection*{RH problem for $\boldsymbol{R}$}
\begin{itemize}\itemsep=0pt
\item[(a)] $R \colon \mathbb{C}\setminus \Sigma_{R} \to \mathbb{C}^{2\times 2}$ is analytic, where the contour $\Sigma_{R}$ is shown in Fig.~\ref{Fig:R region 1}.
\item[(b)] The jumps $J_R(z):=R_-^{-1}(z)R_+(z)$ satisfy the following large $n$ asymptotics for $z\in\Sigma_R$:
\begin{gather*}
 J_R(z)=I+\mathcal{O}\big(e^{-cn}\big), \qquad \mbox{uniformly for} \quad z \in (\gamma_{+} \cup \gamma_{-}\cup \mathbb{R})\setminus (\mathcal{S}\cup \overline{D_{t}} \cup \overline{D_{\overline{c}}} \cup \overline{D_{\overline{b}}} ), \nonumber \\
 J_R(z)= I + \mathcal{O}\big(n^{-1}\big), \qquad \mbox{uniformly for} \quad z \in \partial D_{t} \cup \partial D_{\overline{c}}, \nonumber\\
 J_R(z)=I + \mathcal{O}\big(n^{-1/2}e^{-n(\lambda-\lambda_{c})}\big), \qquad \mbox{uniformly for} \quad z \in \partial D_{\overline{b}} \nonumber,
\end{gather*}where $c>0$ is a constant.
\item[(c)] As $z \to \infty$, we have $R(z) = I+\mathcal{O}\big(z^{-1}\big)$.
\end{itemize}
From the standard theory for small-norm RH problems \cite{DKMVZ1}, $R$ exists for all $n$ suf\/f\/iciently large and as $n\to\infty$ we have
\begin{gather}
R(z) = I + \mathcal{O}\big(n^{-1}\big) + \mathcal{O}\big(n^{-1/2}e^{-n(\lambda - \lambda_{c})}\big), \nonumber\\
R^{\prime}(z) = \mathcal{O}\big(n^{-1}\big) + \mathcal{O}\big(n^{-1/2}e^{-n(\lambda - \lambda_{c})}\big),\label{lol 8}
\end{gather}
uniformly for $z \in \mathbb{C}\setminus \Sigma_{R}$, and uniformly for $t$ in compact subsets of $(-1,\infty)$ and for $\lambda \geq \lambda_{c}(t)$. In the rest of this section, we consider the case when $\lambda$ lies in a compact subset of $(\lambda_{c}(t),\infty]$ as $n \to \infty$, i.e., when $\lambda$ is bounded away from $\lambda_{c}(t)$. In this case, the jumps for $R$ on $\partial D_{\overline{b}}$ are exponentially small. On the other hand, the jumps for~$R$ on $\partial D_{t}\cup \partial D_{\overline{c}}$ have a series expansion of the form
\begin{gather*}
J_R(z) = I+\sum_{j=1}^{r} J_{R}^{(j)}(z)n^{-j} + \mathcal{O}\big(n^{-r-1}\big),
\end{gather*}
for any $r \in \mathbb{N}$, where $J_{R}^{(j)}(z) = \mathcal{O}(1)$ as $n \to \infty$ uniformly for $z \in \partial D_{t}\cup \partial D_{\overline{c}}$. Thus, $R$ admits a~series expansion of the form
\begin{gather*}
R(z) = I + \sum_{j=1}^{r} R^{(j)}(z) n^{-j} + \mathcal{O}\big(n^{-r-1}\big), \qquad \mbox{as} \quad n \to \infty,
\end{gather*}
for any $r \in \mathbb{N}$, where $R^{(j)}(z) = \mathcal{O}(1)$ as $n \to \infty$ uniformly for $z \in \mathbb{C}\setminus \Sigma_{R}$. By a perturbative analysis of the RH problem for $R$, the f\/irst term $R^{(1)}(z)$ is given by
\begin{gather*}
R^{(1)}(z) = \frac{1}{2\pi i}\int_{\partial D_{t}\cup \partial D_{\overline{c}}} \frac{J_{R}^{(1)}(w)}{w-z}{\rm d}w.
\end{gather*}
We can evaluate this integral explicitly. When $z$ is outside the disks, we have from~\eqref{jump for R on partial D_t} and~\eqref{jump for R on partial D_c} that
\begin{gather}
 \left(\frac{4}{\overline{c}-t}\right)^{\frac{\alpha}{2}\sigma_{3}} R^{(1)}(z)\left(\frac{4}{\overline{c}-t}\right)^{-\frac{\alpha}{2}\sigma_{3}}\nonumber\\
 = \frac{\sqrt{\overline{c}-t}}{z-t} \frac{4\alpha^{2}-1}{32 k_{1}} \begin{pmatrix}
1 & i \\ i & -1
\end{pmatrix} + \frac{1}{(z-\overline{c})^{2}} \frac{5\sqrt{\overline{c}-t}}{96 k_{2}}\begin{pmatrix}
-1 & i \\ i & 1
\end{pmatrix}\label{R first correction} \\
{} +\frac{1}{z-\overline{c}} \frac{1}{64 \sqrt{\overline{c}-t} k_{2}}\begin{pmatrix}
-8\alpha^{2}+2(\overline{c}-t)k_{3}+3 & \!\! \frac{i}{3}\big( 24(\alpha+2)\alpha - 6(\overline{c}-t)k_{3}+19 \big) \\ \frac{i}{3}\big( 24(\alpha-2)\alpha - 6(\overline{c}-t)k_{3}+19 \big)\!\! & 8\alpha^{2}-2(\overline{c}-t)k_{3}-3
\end{pmatrix}.\nonumber
\end{gather}
\section[RH analysis for $0 < \lambda < \lambda_{c}(t)$]{RH analysis for $\boldsymbol{0 < \lambda < \lambda_{c}(t)}$}
\label{Section: RH analysis region 2}
In this section, we analyse the case when $(t,\lambda)$ is in a compact subset of
\begin{gather*}
\mathcal{R} = \{ (t,\lambda)\colon t \in (-1,1) \mbox{ and } 0<\lambda < \lambda_{c}(t) \}
\end{gather*}
as $n \to \infty$. We def\/ine the function
\begin{gather*}
\tilde{\rho}(z) = \frac{2}{\pi}\frac{\sqrt{z-c}\sqrt{z-b}\sqrt{z-a}}{\sqrt{z-t}},
\end{gather*}
analytic in $\mathbb{C}\setminus ([a,b]\cup[t,c])$ such that $\tilde{\rho}(z) \sim \frac{2}{\pi}z$ as $z \to +\infty$. We will adopt the same approach as in Section \ref{Section: RH analysis region 1}, and we will rewrite the jumps for $T$ in terms of the following two functions:
\begin{gather}\label{def of xi1 and xi2}
\xi_{1}(z) = - \pi \int_{c}^{z}\tilde{\rho}(w){\rm d}w, \qquad \xi_{2}(z) = - \pi \int_{b}^{z}\tilde{\rho}(w){\rm d}w.
\end{gather}
For $\xi_{1}$ the path of integration is chosen to be in $\mathbb{C}\setminus (-\infty,c)$, and for $\xi_{2}$ the path lies in $\mathbb{C}\setminus ((-\infty,b)\cup[t,\infty))$. Therefore, $\xi_{1}$ is analytic in $\mathbb{C}\setminus (-\infty,c)$, satisf\/ies $\xi_{1}(z) < 0$ for $z>c$ and $\xi_{2}$ is analytic in $\mathbb{C}\setminus ((-\infty,b)\cup[t,\infty))$. Similarly to \eqref{xi +}, note that $\xi_{1}(z)-g(z)$ satisf\/ies
\begin{gather}\label{ana cont}
\xi_{1,+}(z) - g_{+}(z) = \xi_{1,-}(z) - g_{-}(z) = \frac{\ell}{2}-z^{2}, \qquad z \in (t,c).
\end{gather}
Analytically continuing $\xi_{1}(z)-g(z)$ in \eqref{ana cont} for $z$ outside $(-\infty,t)$, we have
\begin{gather}\label{xi_1 in terms of g}
\xi_{1}(z) = g(z) + \frac{\ell}{2}-z^{2}, \qquad z \in \mathbb{C}\setminus (-\infty,c].
\end{gather}
It will be useful later to notice the connection formula between $g(z)$, $\xi_{1}(z)$ and $\xi_{2}(z)$ for $z \in (b,t)$:
\begin{gather}\label{connection xi1 and xi2}
g_{+}(z) + g_{-}(z) + \ell - 2z^{2} = \xi_{1,+}(z) + \xi_{1,-}(z) = 2 \xi_{2}(z) + 2\pi \int_{b}^{t}\tilde{\rho}(x){\rm d}x = 2\xi_{2}(z) + \lambda,
\end{gather}
where we have used \eqref{abc lambda}, \eqref{f in proof}, \eqref{f prime in proof} and \eqref{g+ + g-}. Furthermore, by \eqref{var equality} and \eqref{g+ + g-}, we have
\begin{gather*}
\xi_{2,+}(z) + \xi_{2,-}(z) = 0 = g_{+}(z)+g_{-}(z)-2z^{2}+\ell - \lambda, \qquad z \in (a,b),
\end{gather*}
and by \eqref{f prime in proof}, we also have
\begin{gather*}
(\xi_{2,+}(z)+\xi_{2,-}(z))^{\prime} = -2\pi \tilde{\rho}(z) = (g_{+}(z) + g_{-}(z) - 2 z^{2}+\ell - \lambda)^{\prime}, \qquad z<a.
\end{gather*}
Thus, we have the following identity between $\xi_{2}$ and $g$ on $(-\infty,a)$:
\begin{gather}\label{lol 34}
\xi_{2,+}(z)+\xi_{2,-}(z) = g_{+}(z) + g_{-}(z) - 2 z^{2}+\ell - \lambda < 0, \qquad z<a.
\end{gather}
The mass of $\rho$ on the interval $(a,b)$ will play an important role later and will appear in the jumps of the subsequent RH problems, and we denote it as follows:
\begin{gather}\label{Omega def}
\Omega(t,\lambda) = \int_{a}^{b} \rho(x;t,\lambda){\rm d}x.
\end{gather}
We will sometimes omit the dependence of $\Omega(t,\lambda)$ in $t$ and $\lambda$ and simply write $\Omega$ when there is no confusion. For $z \in (a,b)$, by the relation \eqref{g+ - g- 2}, we have this identity
\begin{gather*}
2\xi_{2,+}(z) = 2\pi i \int_{z}^{b} \rho(x){\rm d}x = g_{+}(z)-g_{-}(z) + 2\pi i \Omega - 2\pi i.
\end{gather*}
Also, by Cauchy--Riemann equations, we can show similarly to equation \eqref{lol 30} that there exists an open neighbourhood $W_{1}$ of $(t,c)$ and an open neighbourhood $W_{2}$ of $(a,b)$ such that
\begin{gather}
 \Re \xi_{1}(z) > 0, \qquad \mbox{for} \quad z \in W_{1}\setminus (t,c), \label{lol 32} \\
 \Re \xi_{2}(z) > 0, \qquad \mbox{for} \quad z \in W_{2}\setminus (a,b). \label{lol 33}
\end{gather}
The jumps for $T$ can now be rewritten as
\begin{gather*}
J_{T}(x) = \begin{cases}
\begin{pmatrix}
1 & |x-t|^{\alpha}e^{n(\xi_{2,+}(x)+\xi_{2,-}(x))} \\
0 & 1
\end{pmatrix}, & \mbox{if} \ x<a, \vspace{1mm}\\
\begin{pmatrix}
e^{2\pi i \Omega n}e^{-2n\xi_{2,+}(x)} & |x-t|^{\alpha} \\
0 & \displaystyle e^{-2\pi i \Omega n}e^{2n\xi_{2,+}(x)}
\end{pmatrix}, & \mbox{if} \ a < x < b, \vspace{1mm}\\
\begin{pmatrix}
e^{2\pi i \Omega n} & |x-t|^{\alpha}e^{n(\xi_{1,+}(x)+\xi_{1,-}(x)-\lambda)} \\
0 & \displaystyle e^{-2\pi i \Omega n}
\end{pmatrix}, & \mbox{if} \ b<x<t, \vspace{1mm}\\
\begin{pmatrix}
e^{-2n \xi_{1,+}(x)} & |x-t|^{\alpha} \\ 0 & e^{2n \xi_{1,+}(x)}
\end{pmatrix}, & \mbox{if} \ t<x<c, \vspace{1mm}\\
\begin{pmatrix}
 1 & |x-t|^{\alpha}e^{2n\xi_{1}(x)} \\
 0 & 1
\end{pmatrix}, & \mbox{if} \ c<x.
\end{cases}
\end{gather*}
For $x \in (b,t)$, by \eqref{connection xi1 and xi2}, one can also rewrite $J_{T}(x)$ as
\begin{gather*}
J_{T}(x) = \begin{pmatrix}
e^{2\pi i \Omega n} & |x-t|^{\alpha}e^{2n\xi_{2}(x)} \\
0 & e^{-2\pi i \Omega n}
\end{pmatrix}.
\end{gather*}
\subsection[Second transformation: $T \mapsto S$]{Second transformation: $\boldsymbol{T \mapsto S}$}
We proceed to the opening of the lenses with $\gamma_{+}$, $\gamma_{-}$, $\tilde{\gamma}_{+}$, and $\tilde{\gamma}_{-}$ as shown in Fig.~\ref{fig open lens contours Region 2}, such that $\gamma_{+}\cup \gamma_{-} \subset W_{1}$ and $\tilde\gamma_{+}\cup \tilde\gamma_{-} \subset W_{2}$. The next transformation $S$ is def\/ined by
\begin{gather}\label{lol 17}
S(z) = T(z) \begin{cases}
\begin{pmatrix}
1 & 0 \\ -(z-t)^{-\alpha}e^{-2n \xi_{1}(z)} & 1
\end{pmatrix}, & \mbox{if } z \in \mathcal{I}_{1}, \vspace{1mm}\\
\begin{pmatrix}
1 & 0 \\ (z-t)^{-\alpha}e^{-2n \xi_{1}(z)} & 1
\end{pmatrix}, & \mbox{if } z \in \mathcal{I}_{2}, \vspace{1mm}\\
\begin{pmatrix}
1 & 0 \\ -(z-t)^{-\alpha}e^{-2n \xi_{2}(z)}e^{2\pi i \Omega n} & 1
\end{pmatrix}, & \mbox{if } z \in \widetilde{\mathcal{I}}_{1}, \vspace{1mm}\\
\begin{pmatrix}
1 & 0 \\ (z-t)^{-\alpha}e^{-2n \xi_{2}(z)}e^{-2\pi i \Omega n} & 1
\end{pmatrix}, & \mbox{if } z \in \widetilde{\mathcal{I}}_{2}, \\
I, & \mbox{if } z \mbox{ is outside the lenses},
\end{cases}
\end{gather}
where the sectors $\mathcal{I}_{1}$, $\mathcal{I}_{2}$, $\widetilde{\mathcal{I}}_{1}$, and $\widetilde{\mathcal{I}}_{2}$ are inside the lenses and are shown in Fig.~\ref{fig open lens contours Region 2}.
\begin{figure}[t]\centering
 \setlength{\unitlength}{1truemm}
 \begin{picture}(100,38)(0,20)
 \put(58,40){\line(1,0){70}}
 \put(60,40){\line(-1,0){80}}
 \put(85,39.9){\thicklines\vector(1,0){.0001}}
 \put(120,39.9){\thicklines\vector(1,0){.0001}}
 \put(45,39.9){\thicklines\vector(1,0){.0001}}
 \put(58,40){\thicklines\circle*{1.2}}
 \put(57.2,36.4){$t$}
 \put(110,40){\thicklines\circle*{1.2}}
 \put(109,36.4){$c$}
 \qbezier(58,40)(84,68)(110,40)
 \put(85,54){\thicklines\vector(1,0){.0001}}
 \put(82,55.5){$\gamma_{+}$}
 \qbezier(58,40)(84,12)(110,40)
 \put(85,26){\thicklines\vector(1,0){.0001}}
 \put(82,22){$\gamma_{-}$}
 \put(30,40){\thicklines\circle*{1.2}}
 \put(30,36.4){$b$}
 \put(0,40){\thicklines\circle*{1.2}}
 \put(0,36.4){$a$}
 \qbezier(0,40)(15,63)(30,40)
 \qbezier(0,40)(15,17)(30,40)
 \put(-10,39.9){\thicklines\vector(1,0){.0001}}
 \put(16,39.9){\thicklines\vector(1,0){.0001}}
 \put(16,51.5){\thicklines\vector(1,0){.0001}}
 \put(16,28.5){\thicklines\vector(1,0){.0001}}
 \put(16,53){$\tilde\gamma_{+}$}
 \put(16,25){$\tilde\gamma_{-}$}
 \put(82,45.4){$\mathcal{I}_{1}$}
 \put(82,32){$\mathcal{I}_{2}$}
 \put(14,44.5){$\widetilde{\mathcal{I}}_{1}$}
 \put(14,33){$\widetilde{\mathcal{I}}_{2}$}
 \end{picture}
 \vspace{-3mm}

 \caption{Jump contour for $S$. \label{fig open lens contours Region 2}}
\end{figure}
$S$ satisf\/ies the following RH problem.
\subsubsection*{RH problem for $\boldsymbol{S}$}
\begin{itemize}\itemsep=0pt
\item[(a)] $S \colon \mathbb{C}\setminus (\mathbb{R} \cup \gamma_{+} \cup \gamma_{-}\cup \tilde\gamma_{+} \cup \tilde\gamma_{-}) \to \mathbb{C}^{2\times 2}$ is analytic, see Fig.~\ref{fig open lens contours Region 2}.
\item[(b)] $S$ has the following jumps:
\begin{gather*}
 S_{+}(z) = S_{-}(z)\begin{pmatrix}
1 & |z-t|^{\alpha}e^{n(\xi_{2,+}(z)+\xi_{2,-}(z))} \\
0 & 1
\end{pmatrix}, \qquad \mbox{if} \quad z < a, \\
 S_{+}(z) = S_{-}(z)\begin{pmatrix}
e^{2\pi i \Omega n} & |z-t|^{\alpha}e^{n(\xi_{1,+}(z)+\xi_{1,-}(z)-\lambda)} \\
0 & \displaystyle e^{-2\pi i \Omega n}
\end{pmatrix}, \qquad \mbox{if} \quad b<z<t, \\
 S_{+}(z) = S_{-}(z)\begin{pmatrix}
1 & |z-t|^{\alpha}e^{2n\xi_{1}(z)} \\
0 & 1
\end{pmatrix}, \qquad \mbox{if} \quad c < z, \\
 S_{+}(z) = S_{-}(z)\begin{pmatrix}
0 & |z-t|^{\alpha} \\ -|z-t|^{-\alpha} & 0
\end{pmatrix}, \qquad \mbox{if} \quad z \in \mathcal{S}, \\
 S_{+}(z) = S_{-}(z)\begin{pmatrix}
1 & 0 \\ (z-t)^{-\alpha}e^{-2n \xi_{2}(z)}e^{2\pi i \Omega n} & 1
\end{pmatrix}, \qquad \mbox{if} \quad z \in \tilde\gamma_{+}, \\
 S_{+}(z) = S_{-}(z)\begin{pmatrix}
1 & 0 \\ (z-t)^{-\alpha}e^{-2n \xi_{2}(z)}e^{-2\pi i \Omega n} & 1
\end{pmatrix}, \qquad \mbox{if} \quad z \in \tilde\gamma_{-}, \\
 S_{+}(z) = S_{-}(z)\begin{pmatrix}
1 & 0 \\ (z-t)^{-\alpha}e^{-2n \xi_{1}(z)} & 1
\end{pmatrix}, \qquad \mbox{if} \quad z \in \gamma_{+} \cup \gamma_{-}.
\end{gather*}
\item[(c)] As $z \to \infty$, we have $S(z) = I + \mathcal{O}\big(z^{-1}\big)$.

\item[(d)] As $z$ tends to $t$, we have
\begin{gather*}
 S(z) = \begin{cases}
\begin{pmatrix}
\mathcal{O}(1) & \mathcal{O}\big( \log (z-t)\big) \\
\mathcal{O}(1) & \mathcal{O}\big( \log (z-t)\big)
\end{pmatrix}, & z \mbox{outside the lenses}, \\
\begin{pmatrix}
\mathcal{O}\big(\log (z-t)\big) & \mathcal{O}\big(\log (z-t)\big) \\
\mathcal{O}\big(\log (z-t)\big) & \mathcal{O}\big(\log (z-t)\big)
\end{pmatrix}, & z \mbox{ inside the lenses},
\end{cases} \qquad \mbox{if} \quad \alpha = 0, \\
 S(z) = \begin{cases}
\begin{pmatrix}
\mathcal{O}(1) & \mathcal{O}(1) \\
\mathcal{O}(1) & \mathcal{O}(1)
\end{pmatrix}, & z \mbox{ outside the lenses}, \\
\begin{pmatrix}
\mathcal{O}\big((z-t)^{- \alpha }\big) & \mathcal{O}(1) \\
\mathcal{O}\big((z-t)^{- \alpha }\big) & \mathcal{O}(1)
\end{pmatrix}
, & z \mbox{ inside the lenses},
\end{cases} \qquad \mbox{if} \quad \alpha > 0, \\
 S(z) = \begin{pmatrix}
\mathcal{O}(1) & \mathcal{O}\big((z-t)^{ \alpha }\big) \\
\mathcal{O}(1) & \mathcal{O}\big((z-t)^{ \alpha }\big)
\end{pmatrix}
, \qquad \mbox{if} \quad \alpha < 0.
\end{gather*}
As $z$ tends to $a$, $b$ or $c$, we have $S(z) = \mathcal{O}(1)$.
\end{itemize}
From \eqref{lol 34}, \eqref{lol 32}, \eqref{lol 33} and the fact that $\xi_{1}(z)<0$ for $z >c$, the jumps for $S$ on the boundary of the lenses are exponentially close to the identity matrix as $n \to \infty$ and the $(1,2)$ entries of the jumps on~$\mathbb{R}\setminus \mathcal{S}$ are exponentially small as $n \to \infty$, but these convergences are not uniform for~$z$ in a neighbourhood of~$a$, $b$, $t$ and $c$. By ignoring the exponentially small terms in the jumps for~$S$, we are left with a simpler RH problem, whose solution $P^{(\infty)}$ will be a good approximation of~$S$ away from neighbourhoods of $a$, $b$, $t$ and $c$. We construct $P^{(\infty)}$ explicitly in Section~\ref{subsection: global param two cuts}.
\subsection{Global parametrix}\label{subsection: global param two cuts}
If we ignore the exponentially small terms as $n \to \infty$ in the jumps for $S$ and a small neighbourhood of $a$, $b$, $t$ and $c$, we are left with a simpler RH problem, whose solution $P^{(\infty)}$ is called the global parametrix, and is a good approximation of $S$ away from a neighbourhood of $a$, $b$, $t$ and~$c$.
\subsubsection*{RH problem for $\boldsymbol{P^{(\infty)}}$}
\begin{itemize}\itemsep=0pt
\item[(a)] $P^{(\infty)} \colon \mathbb{C}\setminus [a,c] \to \mathbb{C}^{2\times 2}$ is analytic.
\item[(b)] $P^{(\infty)}$ has the following jumps:
\begin{gather*}
 P^{(\infty)}_{+}(z) = P^{(\infty)}_{-}(z)\begin{pmatrix}
0 & |z-t|^{\alpha} \\ -|z-t|^{-\alpha} & 0
\end{pmatrix}, \qquad \mbox{if} \quad z \in \mathcal{S}, \\
 P^{(\infty)}_{+}(z) = P^{(\infty)}_{-}(z)e^{2\pi i \Omega n \sigma_{3}}, \qquad \mbox{if} \quad z \in (b,t).
\end{gather*}
\item[(c)] As $z \to \infty$, we have $P^{(\infty)}(z) = I + \mathcal{O}\big(z^{-1}\big)$.
\item[(d)]
As $z$ tends to $\tilde z \in \{a,b,c\}$, we have $P^{(\infty)}(z) = \mathcal{O}\big((z-\tilde z)^{-1/4}\big)$.

As $z$ tends to $t$, we have $P^{(\infty)}(z) = \mathcal{O}\big((z-t)^{-1/4}\big)(z-t)^{-\frac{\alpha}{2}\sigma_{3}}$.
\end{itemize}
The construction of $P^{(\infty)}$ has been done in similar situations in \cite{DKMVZ2} for $\alpha = 0$ and in \cite{KuijVanLess} for $\alpha \neq 0$. It involves $\theta$-functions and quantities related to a Riemann surface. Let $X$ be the two sheeted Riemann surface of genus one associated to $\sqrt{R(z)}$, with
\begin{gather*}
R(z) = (z-c)(z-t)(z-b)(z-a),
\end{gather*}
and we let $\sqrt{R(z)} \sim z^{2}$ as $z \to \infty$ on the f\/irst sheet. We also def\/ine cycles $A$ and $B$ such that they form a canonical homology basis of $X$. The upper part of the cycle $A$ (the dashed line in Fig.~\ref{fig cycles Riemann surface}) lies on the second sheet, and the lower part lies on the f\/irst sheet. The cycle $B$ surrounds $(a,b)$ in the clockwise direction, and lie in the f\/irst sheet.
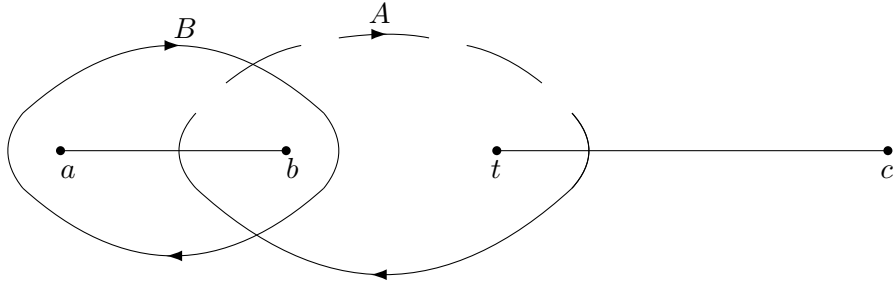
\begin{figure}[t]\centering
 \setlength{\unitlength}{1truemm}
 \begin{picture}(100,40)(0,20)
 \put(58,40){\line(1,0){52}}
 \put(0,40){\line(1,0){30}}
 \put(58,40){\thicklines\circle*{1.2}}
 \put(57.2,36.4){$t$}
 \put(110,40){\thicklines\circle*{1.2}}
 \put(109,36.4){$c$}
 \put(30,40){\thicklines\circle*{1.2}}
 \put(30,36.4){$b$}
 \put(0,40){\thicklines\circle*{1.2}}
 \put(0,36.4){$a$}
 \qbezier(-5,45)(15,63)(35,45)
 \qbezier(-5,35)(15,17)(35,35)
 \qbezier(-5,35)(-9,40)(-5,45)
 \qbezier(35,35)(39,40)(35,45)
 \put(15,55){$B$}
 \put(16,54){\thicklines\vector(1,0){.0001}}
 \put(14,26){\thicklines\vector(-1,0){.0001}}
 \qbezier(68,35)(43,12)(18,35)
 \qbezier(68,35)(72.5,40)(68,45)
 \qbezier(68,35)(72.5,40)(68,45)
 \qbezier(18,35)(13.5,40)(18,45)

 \qbezier(22,49)(27,53)(32,54) 
 \qbezier(37,55)(44,56)(49,55) 
 \qbezier(54,54)(59,53)(64,49) 
 \put(41.2,23.5){\thicklines\vector(-1,0){.0001}}
 \put(43.5,55.5){\thicklines\vector(1,0){.0001}}
 \put(41,57){$A$}
 \end{picture}
 \vspace{-3mm}

 \caption{The cycles $A$ and $B$. The solid line of $A$ is in the f\/irst sheet and the dashed line is in the second sheet. The cycle $B$ lies on the f\/irst sheet. \label{fig cycles Riemann surface}}
\end{figure}
The unique $A$-normalized holomorphic one-form $\omega$ on $X$ is given by
\begin{gather*}
\omega = \frac{c_{0}{\rm d}z}{\sqrt{R(z)}}, \qquad c_{0} = \left( \int_{A} \frac{1}{\sqrt{R(z)}}{\rm d}z \right)^{-1}.
\end{gather*}
By construction $\int_{A} \omega = 1$ and the lattice parameter is given by $\tau = \int_{B} \omega$. A direct calculation shows that
\begin{gather*}
 c_{0} = \left( \int_{b}^{t} \frac{2}{\sqrt{|R(x)|}}{\rm d}x \right)^{-1} \in \mathbb{R}^{+},
\qquad \tau = \int_{a}^{b} \frac{2 i c_{0}}{\sqrt{|R(x)|}}{\rm d}x \in i \mathbb{R}^{+}.
\end{gather*}
The associated $\theta$-function of the third kind $\theta(z) = \theta(z;\tau)$ is given by
\begin{gather*}
\theta(z) = \sum_{m=-\infty}^{\infty} e^{2\pi i mz}e^{\pi i m^{2} \tau}.
\end{gather*}
It is an entire function which satisf\/ies
\begin{gather}\label{lol 49}
\theta(z+1) = \theta(z), \qquad \theta(z+\tau) = e^{-2\pi i z}e^{-\pi i \tau}\theta(z), \qquad \mbox{for all} \quad z \in \mathbb{C}.
\end{gather}
We also need the function
\begin{gather*}
u(z) = \int_{c}^{z}\omega,
\end{gather*}
where the path of integration lies in $\mathbb{C}\setminus [a,c)$. Since $\int_{C}\omega = 0$ for any circle $C$ winding around~$\mathcal{S}$, $u$ is a single valued function for $z \in \mathbb{C}\setminus [a,c)$. For $z$ on the f\/irst sheet, it satisf\/ies
\begin{gather*}
 u_{+}(z) + u_{-}(z) = 0, \qquad z \in (t,c), \\
 u_{+}(z) + u_{-}(z) = 1, \qquad z \in (a,b), \\
 u_{+}(z) - u_{-}(z) = \tau, \qquad z \in (b,t), \\
 \lim_{z\to\infty}u(z) = u_{\infty} \in \mathbb{C}.
\end{gather*}
We def\/ine $\beta(z) = \sqrt[4]{\frac{z-a}{z-b}\frac{z-t}{z-c}}$, such that $\beta(z) \sim 1$ as $z \to \infty$ on the f\/irst sheet. On this f\/irst sheet, it can be verif\/ied that the function $\beta(z)+\beta^{-1}(z)$ never vanishes, while $\beta(z)-\beta^{-1}(z)$ vanishes at a single point $z_{\star}$, given by
\begin{gather*}
z_{\star} = \frac{c-t}{c-t+b-a}b + \frac{b-a}{c-t + b-a} t \in (b,t).
\end{gather*}
This observation will be useful later, because the solution $P^{(\infty)}$ will involve functions of the form $1/\theta(u(z) \pm d)$, where $d = -\frac{1}{2}-\frac{\tau}{2} + \int_{c}^{z_{\star}}\omega$. The function $1/\theta(u(z)+d)$ has no pole on the f\/irst sheet, while the function $1/\theta(u(z)-d)$ has a pole at $z_{\star}$, see \cite{DKMVZ2}. Therefore, the functions $\frac{\beta(z) \pm \beta^{-1}(z)}{\theta(u(z)\pm d)}$ are analytic on the f\/irst sheet.

Before stating the solution $P^{(\infty)}$, we follow \cite{KuijVanLess} and introduce a scalar Szeg\H{o} function $D$ which satisf\/ies
\begin{itemize}\itemsep=0pt
\item[(a)] $D \colon \mathbb{C}\setminus [a,c] \to \mathbb{C}$ is analytic.
\item[(b)] $D$ has the following jumps:
\begin{gather*}
 D_{+}(z)D_{-}(z) = |z-t|^{\alpha}, \qquad \mbox{if}\quad z \in \mathcal{S}, \\
 D_{+}(z) = D_{-}(z)e^{2\pi i \tilde \alpha}, \qquad \mbox{if}\quad z \in (b,t).
\end{gather*}
\item[(c)] As $z \to \infty$, we have $D(z) = D_{\infty} + \mathcal{O}\big(z^{-1}\big)$, $D_{\infty} \neq 0$.
\item[(d)] As $z \to t$, $D(z) = (z-t)^{\frac{\alpha}{2}}d_{t} + o\big((z-t)^{\frac{\alpha}{2}}\big)$, $d_{t} \neq 0$.

As $z$ tends to $\tilde z \in \{a,b,c\}$, $D(z)$ is bounded.
\end{itemize}
The new parameter $\tilde \alpha$ is not arbitrary and is completely determined to ensure the existence of~$D$. The solution is given by
\begin{gather}\label{Szego function}
D(z) = \exp \left( \sqrt{R(z)} \left[ \frac{1}{2\pi i} \int_{\mathcal{S}}\frac{\alpha \log|x-t|}{\sqrt{R(x)}_{+}}\frac{{\rm d}x}{x-z} + \tilde\alpha \int_{b}^{t} \frac{1}{\sqrt{R(x)}}\frac{{\rm d}x}{x-z} \right] \right),
\end{gather}
see \cite{KuijVanLess} for a detailed proof of it, and $\tilde \alpha$ is chosen such that $D(z)$ remains bounded as $z\to \infty$:
\begin{gather*}
\tilde \alpha = - \left( \int_{b}^{t} \frac{1}{\sqrt{R(x)}}{\rm d}x \right)^{-1} \frac{1}{2\pi i}\int_{\mathcal{S}} \frac{\alpha \log|x-t|}{\sqrt{R(x)}_{+}}{\rm d}x.
\end{gather*}
From \eqref{Szego function}, by expanding $\frac{1}{x-z}=-\frac{1}{z}\big(1+\frac{x}{z} + \mathcal{O}\big(z^{-2}\big)\big)$ as $z \to \infty$, we can evaluate $D_{\infty}$. We obtain
\begin{gather*}
D_{\infty} = \exp \left( - \frac{1}{2\pi i} \int_{\mathcal{S}} \frac{\alpha x \log |x-t|}{\sqrt{R(x)}_{+}}{\rm d}x - \tilde\alpha \int_{b}^{t} \frac{x}{\sqrt{R(x)}}{\rm d}x \right).
\end{gather*}
The solution $P^{(\infty)}$ is given by (see \cite[Section 4]{KuijVanLess}, and \cite[Lemma 4.3]{DKMVZ2})
\begin{gather}\label{Pinf}
P^{(\infty)}(z) =
\frac{D_{\infty}^{\sigma_{3}}}{2}\begin{pmatrix}
\beta(z)+\beta^{-1}(z)\Theta_{11}(z) & i(\beta(z)-\beta^{-1}(z))\Theta_{12}(z) \\
-i(\beta(z)-\beta^{-1}(z))\Theta_{21}(z) & \beta(z)+\beta^{-1}(z)\Theta_{22}(z)
\end{pmatrix}D(z)^{-\sigma_{3}},
\end{gather}
where
\begin{gather*}
 \Theta_{11}(z) = \frac{\theta(u_{\infty}+d)\theta(u(z)+d-\Omega n-\tilde\alpha)}{\theta(u_{\infty}+d-\Omega n-\tilde\alpha)\theta(u(z)+d)}, \qquad \Theta_{12}(z) = \frac{\theta(u_{\infty}+d)\theta(u(z)-d+\Omega n+\tilde\alpha)}{\theta(u_{\infty}+d-\Omega n-\tilde\alpha)\theta(u(z)-d)}, \\
 \Theta_{21}(z) = \frac{\theta(u_{\infty}+d)\theta(u(z)-d-\Omega n-\tilde\alpha)}{\theta(u_{\infty}+d+\Omega n+\tilde\alpha)\theta(u(z)-d)}, \qquad \Theta_{22}(z) = \frac{\theta(u_{\infty}+d)\theta(u(z)+d+\Omega n+\tilde\alpha)}{\theta(u_{\infty}+d+\Omega n+\tilde\alpha)\theta(u(z)+d)}.
\end{gather*}
\subsection{Local parametrices}
By the assumption made at the beginning of Section~\ref{Section: RH analysis region 2}, i.e., that $(t,\lambda)$ is in a compact subset of~$\mathcal{R}$ as $n \to \infty$, by Proposition~\ref{prop: equilibrium measure}, there exists $\delta > 0$ independent of $n$ such that
\begin{gather*}
\delta < \min\{ b-a, t-b, c-t \}.
\end{gather*}
We consider some disks $D_{a}$, $D_{b}$, $D_{t}$, $D_{c}$ around $a$, $b$, $t$ and $c$ respectively, such that the radii are smaller than $\delta/3$ but f\/ixed. Inside these disks, we require the local parametrices $P$ to have the same jumps as $S$, and the same behaviour near $a$, $b$, $t$ and $c$. Furthermore, uniformly for $z$ on the boundary of these disks, $P$ satisf\/ies the matching condition
\begin{gather*}
P(z) = \big(I+\mathcal{O}\big(n^{-1}\big)\big)P^{(\infty)}(z), \qquad \mbox{as} \quad n \to \infty.
\end{gather*}
The solution $P$ of the local parametrices around $a$, $b$ and $c$ is constructed in terms of the model Airy RH
problem, and the local parametrix~$P$ around~$t$ in terms of the modif\/ied Bessel model RH problem.
As these constructions do not present any additional technicalities other than the ones in Section~\ref{Section: RH analysis region 1}, and as we will not use them explicitly later, we have decided not to include them in the present article.

\subsection{Small norm RH problem}
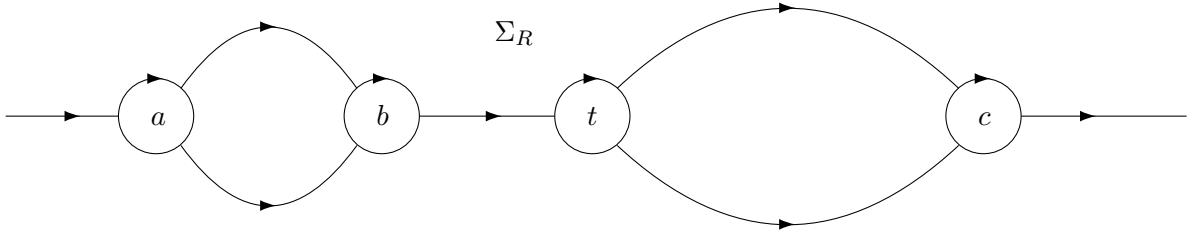
\begin{figure}[t]\centering
 \setlength{\unitlength}{1truemm}
 \begin{picture}(115,32)(0,25)
 \put(115,40){\line(1,0){22}}
 \put(35,40){\line(1,0){18}}
 \put(-5,40){\line(-1,0){15}}
 \put(45,50){$\Sigma_{R}$}
 \put(125,40){\thicklines\vector(1,0){.0001}}
 \put(57.3,38.8){$t$} \put(58,40){\circle{10}}\put(59,45){\thicklines\vector(1,0){.0001}}
 \put(109.3,38.8){$c$} \put(110,40){\circle{10}}\put(111,45){\thicklines\vector(1,0){.0001}}
 \qbezier(61.3,43.7)(84,65)(106.7,43.7)
 \put(85,54.4){\thicklines\vector(1,0){.0001}}
 \qbezier(61.3,36.1)(84,15)(106.7,36.1)
 \put(85,25.6){\thicklines\vector(1,0){.0001}}
 \put(29.3,38.8){$b$} \put(30,40){\circle{10}}\put(31,45){\thicklines\vector(1,0){.0001}}
 \put(-0.7,38.8){$a$} \put(0,40){\circle{10}}\put(1,45){\thicklines\vector(1,0){.0001}}
 \qbezier(3.3,43.7)(15,60)(26.7,43.7)
 \qbezier(3.3,36.1)(15,20)(26.7,36.1)
 \put(-10,40){\thicklines\vector(1,0){.0001}}
 \put(46,40){\thicklines\vector(1,0){.0001}}
 \put(16,51.9){\thicklines\vector(1,0){.0001}}
 \put(16,28.1){\thicklines\vector(1,0){.0001}}
 \end{picture}
 \vspace{-3mm}

 \caption{Jump contour for $R$. The orientation on the circles is clockwise. \label{fig R Region 2}}
\end{figure}
The f\/inal transformation is given by
\begin{gather*}
R(z) = \begin{cases}
S(z) P^{(\infty)}(z)^{-1}, & z \in \mathbb{C}\setminus \overline{D}, \\
S(z) P(z)^{-1}, & z \in D,
\end{cases}
\end{gather*}
where $D = D_{a}\cup D_{b} \cup D_{t} \cup D_{c}$. Since $P$ has exactly the same jumps as $S$ inside $D$, and the same behaviour near $a$, $b$, $t$ and $c$, we show in a similar way as done at the beginning of Section~\ref{subsection: small norm region 1} that $R$ is analytic inside these disks and it satisf\/ies the following RH problem.
\subsubsection*{RH problem for $\boldsymbol{R}$}

\begin{itemize}\itemsep=0pt
\item[(a)] $R \colon \mathbb{C}\setminus \Sigma_{R} \to \mathbb{C}^{2\times 2}$ is analytic, with the contour $\Sigma_{R}$ shown in Fig.~\ref{fig R Region 2}.
\item[(b)] The jumps $J_{R}(z) :=R_{-}^{-1}(z)R_{+}(z)$ satisfy the following large $n$ asymptotics for $z\in \Sigma_{R}$:
\begin{gather*}
 J_{R}(z) = I+\mathcal{O}\big(e^{-cn}\big), \qquad \mbox{uniformly for} \quad z \in (\gamma_{+}\cup\gamma_{-}\cup \tilde\gamma_{+}\cup\tilde\gamma_{-}\cup \mathbb{R})\setminus (\mathcal{S}\cup \overline{D}), \nonumber \\
 J_{R}(z) = I + \mathcal{O}\big(n^{-1}\big), \qquad \mbox{uniformly for}\quad z \in \partial D,
\end{gather*}
where $c>0$ is a constant.
\item[(c)] As $z \to \infty$, we have $R(z) = I+\mathcal{O}\big(z^{-1}\big)$.
\end{itemize}
Again, from standard theory for small-norm RH problems \cite{DKMVZ1}, $R$ exists for suf\/f\/iciently large $n$ and we have
\begin{gather}\label{lol 50}
R(z) = I + \mathcal{O}\big(n^{-1}\big), \qquad R^{\prime}(z) = \mathcal{O}\big(n^{-1}\big),
\end{gather}
uniformly for $z \in \mathbb{C}\setminus \Sigma_{R}$, and uniformly for $(t,\lambda)$ in a compact subset of $\mathcal{R}$.

\section[Asymptotics for the Hankel determinant $H_{n}(v,s,\alpha)$]{Asymptotics for the Hankel determinant $\boldsymbol{H_{n}(v,s,\alpha)}$}\label{Sec_asympHn}
\subsection{Proof of Theorem \ref{Th1}}
In this section, we use the RH analysis done in Section \ref{Section: RH analysis region 1} with $\lambda = +\infty$ (i.e., $s = 0$) and the dif\/ferential identity
\begin{gather}\label{lol 37}
\partial_{t} \log H_{n}\big(\sqrt{2n}t,0,\alpha\big) = 4n U_{1,11},
\end{gather}
which was obtained in \eqref{lol 36}. Inverting the transformations $R \mapsto S \mapsto T \mapsto U$, we obtain for $z$ outside the disks and outside the lenses that
\begin{gather*}
U(z) = e^{-\frac{n\ell}{2}\sigma_{3}}R(z)P^{(\infty)}(z)e^{ng(z)\sigma_{3}}e^{\frac{n\ell}{2}\sigma_{3}}.
\end{gather*}
In particular, the $(1,1)$ entry in the above expression is given by
\begin{gather*}
U_{11}(z) = e^{ng(z)}P_{11}^{(\infty)}(z) \left( R_{11}(z) + R_{12}(z) \frac{P_{21}^{(\infty)}(z)}{P_{11}^{(\infty)}(z)} \right).
\end{gather*}
Thus, by equations \eqref{asymptotics g} and \eqref{asymptotics Pinf}, we have
\begin{gather*}
U_{1,11} = -n \int_{\mathcal{S}}x\rho(x){\rm d}x - \frac{\alpha(\overline{c}-t)}{4} + \frac{R^{(1)}_{1,11}}{n} + \mathcal{O}\big(n^{-2}\big), \qquad \mbox{as} \quad n \to \infty,
\end{gather*}
where the $\mathcal{O}\big(n^{-2}\big)$ is uniform for $t$ in compact subsets of $(-1,\infty)$, and $R^{(1)}_{1,11}$ is the coef\/f\/icient of the $z^{-1}$ term in the large $z$ expansion of $R^{(1)}_{11}(z)$. From~\eqref{R first correction}, it is given by
\begin{gather*}
R^{(1)}_{1,11} = \sqrt{\overline{c}-t} \frac{4\alpha^{2}-1}{32 k_{1}} + \frac{-8\alpha^{2}+2(\overline{c}-t)k_{3}+3}{64 \sqrt{\overline{c}-t} k_{2}}.
\end{gather*}
On the other hand, by using \eqref{lol 35} we can calculate explicitly the f\/irst moment of $\rho$, we have
\begin{gather*}
\int_{t}^{\overline{c}} x \rho(x){\rm d}x = \frac{2}{27}\big(\sqrt{3 + t^2}-t\big)\big(3 + 5 t^2 + 4 t \sqrt{3 + t^2}\big).
\end{gather*}
Thus, we can rewrite the dif\/ferential identity \eqref{lol 37} more explicitly:
\begin{gather}\label{lol 7}
\partial_{t}\log H_{n}\big(\sqrt{2n}t,0,\alpha\big) = u_{1}(t) n^{2} + u_{2}(t,\alpha)n + u_{3}(t,\alpha) + \mathcal{O}\big(n^{-1}\big),
\end{gather}
where
\begin{gather*}
\displaystyle u_{1}(t) = -\frac{8}{27}\big(\sqrt{3+t^{2}}-t\big)\big(3+5t^{2}+4t\sqrt{3+t^{2}}\big), \\
\displaystyle u_{2}(t,\alpha) = -\frac{2\alpha}{3}\big(\sqrt{3+t^{2}}-t\big), \\
\displaystyle u_{3}(t,\alpha) = \frac{\big(\sqrt{3+t^{2}}-t\big) \big(t+\sqrt{3+t^{2}}\big(6\alpha^{2}-1\big)\big)}{12\big(3+t^{2}\big)\big(2t+\sqrt{3+t^{2}}\big)}.
\end{gather*}
Note that for $t>-1$ we have $\int_{0}^{t}u_{1}(x){\rm d}x = C_{1}(t)$, $\int_{0}^{t}u_{2}(x,\alpha){\rm d}x = C_{2}(t,\alpha)$ and $\int_{0}^{t}u_{3}(x,\alpha){\rm d}x = C_{3}(t,\alpha)$, where $C_{1}(t)$, $C_{2}(t,\alpha)$ and $C_{3}(t,\alpha)$ have been def\/ined in \eqref{C_1}, \eqref{C_2} and \eqref{C_3} respectively. Since the $\mathcal{O}\big(n^{-1}\big)$ term in~\eqref{lol 7} is uniform for $t$ in compact subsets of $(-1,\infty)$ and for $\lambda \geq \lambda_{c}(t)$, this gives the result.
\subsection{Proof of Theorem \ref{thm asymptotics for P with (t,lambda)}(1)}
In this section, we again use the RH analysis done in Section \ref{Section: RH analysis region 1}. In order to use the dif\/ferential identity \eqref{diff identity 2}, we need to obtain large $n$ asymptotics for $U$ uniformly on $(-\infty,t)$. This can be achieved by inverting the transformations $R \mapsto S \mapsto T \mapsto U$ in dif\/ferent regions. For $z \in D_{\overline{b}}\setminus \mathbb{R}$, we have
\begin{gather}\label{inv transformation region 1}
U(z) = e^{-\frac{n\ell}{2}\sigma_{3}}R(z)P^{(\infty)}(z)\begin{pmatrix}
1 & l(z) \\ 0 & 1
\end{pmatrix}e^{ng(z)\sigma_{3}}e^{\frac{n\ell}{2}\sigma_{3}}.
\end{gather}
For $z$ outside the lenses and outside the disks, $z \notin \mathbb{R}$, the expression for $U$ in terms of $R$ is
\begin{gather}\label{lol 46}
U(z) = e^{-\frac{n\ell}{2}\sigma_{3}}R(z)P^{(\infty)}(z)e^{ng(z)\sigma_{3}}e^{\frac{n\ell}{2}\sigma_{3}}.
\end{gather}
Thus, from \eqref{inv transformation region 1} and \eqref{lol 46}, for all $z$ such that $z \notin \mathbb{R}$ and $\Re z < t$, $z \notin D_{t}$, we obtain
\begin{gather}\label{lala1}
\big[U^{-1}(z)U^{\prime}(z)\big]_{21}\! =\! e^{n\ell}e^{2ng(z)} \big[P^{(\infty)}(z)^{-1}R(z)^{-1}R^{\prime}(z)P^{(\infty)}(z) \!+\! P^{(\infty)}(z)^{-1}P^{(\infty)}(z)^{\prime}\big]_{21}.\!\!\!
\end{gather}
From \eqref{Pinf Region 1}, as $n \to \infty$ we have the following bounds for the global parametrix:
\begin{gather*}
P^{(\infty)}(z) = \mathcal{O}(1), \qquad P^{(\infty)}(z)^{\prime} = \mathcal{O}(1), \qquad \mbox{uniformly for} \ z \ \mbox{outside the disks}.
\end{gather*}
Furthermore, by using the estimate for $R$ given by \eqref{lol 8}, and by taking the limit $z \to x \in \mathbb{R}$ in equation \eqref{lala1} (the limits from the upper and lower half plane are the same, see Remark \ref{remark: notation + and - not needed}), we obtain
\begin{gather}\label{lala6}
\frac{\tilde{w}(x)}{2\pi i}\left[ U^{-1}(x)U^{\prime}(x) \right]_{21} = |x-t|^{\alpha} e^{n(g_{+}(x)+g_{-}(x)+\ell-V(x))} \mathcal{O}(1), \qquad \mbox{as} \quad n \to \infty,
\end{gather}
where the $\mathcal{O}(1)$ is uniform for $x \in (-\infty,t)$, $x \notin D_{t}$. It is more complicated to obtain similar asymptotics for $\frac{\tilde{w}(x)}{2\pi i}\big[ U^{-1}(x)U^{\prime}(x) \big]_{21}$, uniformly for $x \in (-\infty,t)\cap D_{t}$. We will need the following lemma.
\begin{Lemma}\label{Prop: Bound on the OPs}
As $n \to \infty$, we have
\begin{align}
& U_{+}(x)\begin{pmatrix}
1 \\ 0
\end{pmatrix} = e^{\frac{n\ell}{2}}e^{ng_{+}(x)}e^{-\frac{n\ell}{2}\sigma_{3}}\begin{pmatrix}
\mathcal{O}\big(n^{\frac{1}{2}+\max(\alpha, 0)}\big) \\
\mathcal{O}\big(n^{\frac{1}{2}+\max(\alpha, 0)}\big)
\end{pmatrix}, \label{bound first column U}\\
& U_{+}^{\prime}(x)\begin{pmatrix}
1 \\ 0
\end{pmatrix} = e^{\frac{n\ell}{2}}e^{ng_{+}(x)}e^{-\frac{n\ell}{2}\sigma_{3}}\begin{pmatrix}
\mathcal{O}\big(n^{\frac{5}{2}+\max(\alpha, 0)}\big) \\
\mathcal{O}\big(n^{\frac{5}{2}+\max(\alpha, 0)}\big)
\end{pmatrix}, \label{bound second column U}
\end{align}
uniformly for $x \in (-\infty,t) \cap D_{t}$.
\end{Lemma}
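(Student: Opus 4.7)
The plan is to invert the chain of transformations $R\mapsto S\mapsto T\mapsto U$ in order to express $U_+(x)$ near $t$ in terms of the local Bessel parametrix, and then to extract the polynomial-in-$n$ bounds from the explicit form \eqref{local param near t, Region 1} of $P$. For $x\in(-\infty,t)\cap D_t$, the point $x$ lies outside the lens but inside $D_t$, so the lens transformation is trivial and inside the disk one has $S=RP$. Consequently $T_+(x)=S_+(x)=R(x)P_+(x)$, and combining this with the definitions of the $T$- and $U$-transformations and with the identity $\xi_+(x)=g_+(x)+\ell/2-x^2$ coming from \eqref{lol 14}, a direct computation yields
\begin{gather*}
U_+(x)\begin{pmatrix}1\\0\end{pmatrix}=e^{n\ell/2}e^{ng_+(x)}e^{-\frac{n\ell}{2}\sigma_3}R(x)E(x)\sigma_3\widehat{P}_{\mathrm{Be},+}\bigl(-n^{2}f(x)\bigr)\begin{pmatrix}1\\0\end{pmatrix}(x-t)^{-\alpha/2}e^{\pi i\alpha/2}.
\end{gather*}
It therefore suffices to bound the non-exponential factor on the right by $\mathcal{O}\bigl(n^{1/2+\max(\alpha,0)}\bigr)$ uniformly for $x\in(-\infty,t)\cap D_t$.

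For this, the small-norm estimate \eqref{lol 8} provides $R(x)=\mathcal{O}(1)$, and the matrix $E$ is analytic on $D_t$ with $E(x)=\mathcal{O}\bigl(n^{1/2}\bigr)$, the only $n$-dependence coming from the factor $(2\pi n(-f(x))^{1/2})^{\sigma_3/2}$ in \eqref{E in local param near t, Region 1}, the apparent singular factors $(z-t)^{-1/4}$ and $(z-t)^{-\alpha\sigma_3/2}$ in $P^{(\infty)}(z)(z-t)^{\alpha\sigma_3/2}$ being absorbed by the vanishing of $f$ at $t$. It remains to control $\sigma_3\widehat{P}_{\mathrm{Be},+}(-n^{2}f(x))\begin{pmatrix}1\\0\end{pmatrix}(x-t)^{-\alpha/2}$, which I would do by splitting according to whether $\zeta:=-n^{2}f(x)>0$ is bounded or not. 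In the outer zone, where $|x-t|$ is bounded away from $0$ and $\zeta\asymp n^{2}$, the large-$\zeta$ asymptotic expansion of $\widehat{P}_{\mathrm{Be}}$ produces a dominant factor $e^{2\sqrt{\zeta}\sigma_3}$ which is exactly compensated by the choice of $E$, while the correction $A(\zeta)$ in \eqref{Az} is $\mathcal{O}\bigl(e^{-\lambda n/3}\bigr)$ uniformly and hence negligible. In the inner zone $|x-t|\lesssim n^{-2}$, I would use the explicit local behaviour of $P_{\mathrm{Be}}$ near $0$ (recorded in \eqref{local behaviour near 0 of P_Be}): the first column of $P_{\mathrm{Be}}(\zeta)$ is of $I_\alpha$-type and is $\mathcal{O}\bigl(\zeta^{\alpha/2}\bigr)$ for $\alpha\geq0$, while for $\alpha<0$ it remains bounded; together with $(x-t)^{-\alpha/2}$ and the scaling $\zeta\asymp n^{2}|x-t|$, the total contribution is $\mathcal{O}\bigl(n^{\max(\alpha,0)}\bigr)$. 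Combining this with the $\mathcal{O}\bigl(n^{1/2}\bigr)$ from $E$ produces precisely the bound announced in \eqref{bound first column U}.

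The derivative estimate \eqref{bound second column U} is obtained by differentiating the same representation. The only new contribution comes from the chain-rule factor $-n^{2}f'(x)$ appearing when $\partial_x$ hits $\widehat{P}_{\mathrm{Be}}\bigl(-n^{2}f(x)\bigr)$, which provides an additional $\mathcal{O}\bigl(n^{2}\bigr)$; differentiation of $R$ gives $R'(x)=\mathcal{O}\bigl(n^{-1}\bigr)$ by \eqref{lol 8}, while the derivatives of $E(x)$ and $(x-t)^{-\alpha/2}$ contribute at worst factors that combine safely with the Bessel behaviour analysed above. Altogether this multiplies the previous bound by $\mathcal{O}\bigl(n^{2}\bigr)$, giving the stated $\mathcal{O}\bigl(n^{5/2+\max(\alpha,0)}\bigr)$.

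The main technical obstacle is making the two-zone analysis of $\widehat{P}_{\mathrm{Be}}$ uniform across the full interval $(-\infty,t)\cap D_t$; this is handled by matching the large-$\zeta$ expansion with the local behaviour at $0$ using the explicit integral representations of the modified Bessel functions, and by checking that the correction $A(\zeta)$ never generates an extra polynomial factor thanks to its uniform exponential smallness.
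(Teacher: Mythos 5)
Your overall strategy is the same as the paper's: express $U_+(x)\begin{pmatrix}1\\0\end{pmatrix}$ on $(-\infty,t)\cap D_t$ through the local Bessel parametrix \eqref{local param near t, Region 1}, peel off the scalar factors $e^{n\ell/2}e^{ng_+(x)}$, and bound the remaining matrix/vector factors by splitting into zones according to the size of $-n^2f(x)$. The paper does exactly this, with three zones ($n\tilde\xi\lesssim 1$, $n\tilde\xi\asymp 1$, $n\tilde\xi\gtrsim 1$) where you use two; that difference is cosmetic.

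The genuine gap is in how you bound the factor $E$. You invoke only the coarse operator bound $E(x)=\mathcal{O}\bigl(n^{1/2}\bigr)$, but this is not sharp enough in your outer zone. There, after writing $\sigma_3\widehat P_{\mathrm{Be},+}\bigl(-n^2f(x)\bigr)\begin{pmatrix}1\\0\end{pmatrix}$ as $\sigma_3 P_{\mathrm{Be}}(\zeta)\begin{pmatrix}1\\0\end{pmatrix}$ up to the negligible $A$-correction, and using the large-$\zeta$ expansion \eqref{large z asymptotics Bessel}, the resulting vector, once multiplied by $e^{-n\tilde\xi(x)}$, has first entry $\mathcal{O}\bigl((n\tilde\xi)^{-1/2}\bigr)$ and \emph{second} entry $\mathcal{O}\bigl((n\tilde\xi)^{1/2}\bigr)$, i.e.\ the second component is a factor $n\tilde\xi$ larger than the first. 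Acting with $\mathcal{O}\bigl(n^{1/2}\bigr)$ on the whole vector therefore produces $\mathcal{O}\bigl(n^{1/2}\cdot(n\tilde\xi)^{1/2}\bigr)$, which for $\tilde\xi\asymp 1$ is $\mathcal{O}(n)$, not the required $\mathcal{O}\bigl(n^{1/2}\bigr)$. What is needed — and what the paper uses, via equation \eqref{bound for E in D_t Region 1} — is the column-refined estimate $E(z)=\mathcal{O}(1)\,n^{\sigma_3/2}$: the \emph{second} column of $E$ is $\mathcal{O}\bigl(n^{-1/2}\bigr)$, precisely cancelling the extra factor $n$ carried by the second component of the Bessel vector. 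Your phrase ``exactly compensated by the choice of $E$'' hints at this, but you then explicitly fall back to the coarse $\mathcal{O}\bigl(n^{1/2}\bigr)$ bound when combining the pieces, so as written the argument does not yield the stated exponent.

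Two smaller points. First, the assertion that for $\alpha<0$ the first column of $P_{\mathrm{Be}}(\zeta)$ ``remains bounded'' near $\zeta=0$ is incorrect — $I_\alpha\bigl(2\zeta^{1/2}\bigr)=\mathcal{O}\bigl(\zeta^{\alpha/2}\bigr)\to\infty$ for $\alpha<0$; the final $\mathcal{O}\bigl(n^{\max(\alpha,0)}\bigr)$ estimate is still correct, but only because the divergence is compensated by the $(x-t)^{-\alpha/2}$ factor, which is exactly how the paper handles it. Second, in the derivative bound you should note that part of the $\mathcal{O}\bigl(n^2\bigr)$ gain is standard chain rule ($\tilde\xi'(x)=\mathcal{O}\bigl(n\bigr)$ in the inner zone and $n\tilde\xi'(x)$ from differentiating $I_\alpha(n\tilde\xi(x))$), and part comes from $n(g_+'-\tilde\xi')$ in the scalar prefactor; the paper isolates these two contributions as $\widetilde U_1,\widetilde U_2$. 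This is bookkeeping rather than a conceptual issue, but the split $E=\mathcal{O}(1)n^{\sigma_3/2}$ is again needed there.
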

\begin{proof}
For $z \in D_{t}$, $z$ outside the lenses, we have
\begin{gather}\label{lol 10}
U(z)\begin{pmatrix}
1 \\ 0
\end{pmatrix} = e^{\frac{n\ell}{2}}e^{ng(z)}e^{-\frac{n\ell}{2}\sigma_{3}}R(z)P(z)\begin{pmatrix}
1 \\ 0
\end{pmatrix}.
\end{gather}
If furthermore, $\Im z >0$, by \eqref{local param near t, Region 1} and \eqref{Psi explicit} we have
\begin{gather}\label{lol 9}
P(z)\begin{pmatrix}
1 \\ 0
\end{pmatrix} = e^{\frac{\pi i \alpha}{2}}(z-t)^{-\frac{\alpha}{2}}e^{-n\xi(z)}E(z) \begin{pmatrix}
I_{\alpha}(2n\sqrt{-f(z)}) \\ -2\pi i n \sqrt{-f(z)}I_{\alpha}^{\prime}(2n\sqrt{-f(z)})
\end{pmatrix}.
\end{gather}
Let $x \in (-\infty,t)\cap D_{t}$. Note that $\tilde\xi(x) > 0$ ($\tilde\xi$ is def\/ined in \eqref{lol 43}) and thus $\sqrt{-f(x)}_{+} = \frac{1}{2}\tilde\xi(x)$. Inserting \eqref{lol 9} into \eqref{lol 10}, we can take the limit $z\to x$, this gives
\begin{gather}
U_{+}(x)\begin{pmatrix}
1 \\ 0
\end{pmatrix} = (-1)^{n}e^{\frac{n\ell}{2}}e^{ng_{+}(x)}e^{-n\tilde\xi(x)}(t-x)^{-\frac{\alpha}{2}}e^{-\frac{n\ell}{2}\sigma_{3}}\nonumber\\
\hphantom{U_{+}(x)\begin{pmatrix}
1 \\ 0
\end{pmatrix} =}{} \times R(x)E(x)\begin{pmatrix}
I_{\alpha}(n\tilde\xi(x)) \\ - \pi i n \tilde\xi(x) I_{\alpha}^{\prime}(n\tilde\xi(x))
\end{pmatrix}.\label{lol 11}
\end{gather}
Since $E$ is analytic in $D_{t}$, one has from \eqref{E in local param near t, Region 1} that as $n\to \infty$
\begin{gather}\label{bound for E in D_t Region 1}
E(z) = \mathcal{O}(1)n^{\frac{\sigma_{3}}{2}}, \qquad E^{\prime}(z) = \mathcal{O}(1)n^{\frac{\sigma_{3}}{2}}, \qquad \mbox{uniformly for }z \in D_{t}.
\end{gather}
To obtain a uniform bound from \eqref{lol 11}, we distinguish three cases. Let $M>0$ be an arbitrary large but f\/ixed constant and let $m>0$ be an arbitrary small but f\/ixed constant.

\textbf{Case (a)}: $n \tilde \xi(x) \geq M$ as $n \to \infty$.
In this case we need large $\zeta$ asymptotics for~$I_{\alpha}(\zeta)$ and~$I_{\alpha}^{\prime}(\zeta)$. From \eqref{large z asymptotics Bessel}, we have
\begin{gather}\label{lol 12}
I_{\alpha}(\zeta) = \frac{e^{\zeta}}{\sqrt{2\pi\zeta}}\big(1+\mathcal{O}\big(\zeta^{-1}\big)\big), \qquad I_{\alpha}^{\prime}(\zeta) = \frac{e^{\zeta}}{\sqrt{2\pi\zeta}}\big(1+\mathcal{O}\big(\zeta^{-1}\big)\big), \qquad \mbox{as} \quad \zeta \to \infty.
\end{gather}
If we insert \eqref{lol 12} into \eqref{lol 11}, the result follows for Case (a) from \eqref{lol 8}, \eqref{bound for E in D_t Region 1} and from the fact that $(t-x)^{-\frac{\alpha}{2}} = \mathcal{O}(n^{\max( \alpha,0)})$.

\textbf{Case (b)}: $m \leq n \tilde \xi(x) \leq M$ as $n \to \infty$.
In this case we have $I_{\alpha}(n\tilde\xi(x)) \!=\! \mathcal{O}(1)$, $n \tilde\xi(x) I_{\alpha}^{\prime}(n\tilde\xi(x))$ $= \mathcal{O}(1)$, $e^{-n\tilde\xi(x)} = \mathcal{O}(1)$, $(t-x)^{-\frac{\alpha}{2}}=\mathcal{O}(n^{\alpha})$. Again from~\eqref{lol 8} and \eqref{bound for E in D_t Region 1}, we obtain
\begin{gather}\label{lol 13}
U_{+}(x)\begin{pmatrix}
1 \\ 0
\end{pmatrix} = e^{\frac{n\ell}{2}}e^{ng_{+}(x)}e^{-\frac{n\ell}{2}\sigma_{3}}\begin{pmatrix}
\mathcal{O}\big(n^{\frac{1}{2}+ \alpha}\big) \\ \mathcal{O}\big(n^{-\frac{1}{2}+ \alpha}\big)
\end{pmatrix},
\end{gather}
which is even slightly better than \eqref{bound first column U}.

\textbf{Case (c)}: $n \tilde \xi(x) \leq m$ as $n \to \infty$.
From \cite[formula (10.25.2)]{NIST}, we have
\begin{gather*}
I_{\alpha}(\zeta) = \left( \frac{\zeta}{2} \right)^{ \alpha} \left(\frac{1}{\Gamma(1+\alpha)}+\mathcal{O}\big(\zeta^{2}\big)\right), \\ I_{\alpha}^{\prime}(\zeta) = \left( \frac{\zeta}{2} \right)^{\alpha-1} \left(\frac{\alpha}{\Gamma(1+\alpha)}+\mathcal{O}\big(\zeta^{2}\big)\right), \qquad \mbox{as} \quad \zeta \to 0.
\end{gather*}
From the above expansion, we have for Case~(c) that
\begin{gather*}
\frac{I_{\alpha}(n\tilde\xi(x))}{(t-x)^{\frac{\alpha}{2}}} = \mathcal{O}\big(n^{ \alpha}\big), \qquad \frac{n \tilde\xi(x) I_{\alpha}^{\prime}(n\tilde\xi(x))}{(t-x)^{\frac{\alpha}{2}}} = \mathcal{O}\big(n^{ \alpha}\big)
\end{gather*}
and $e^{-n\tilde\xi(x)} = \mathcal{O}(1)$. Thus, from \eqref{lol 8} and \eqref{bound for E in D_t Region 1} we obtain again \eqref{lol 13}, which f\/inishes the proof of \eqref{bound first column U}. We now turn to the proof of \eqref{bound second column U}. From \eqref{lol 11}, we have $U_{+}^{\prime}(x)\left(\begin{smallmatrix}
1 \\ 0
\end{smallmatrix}\right) = \widetilde{U}_{1}(x) + \widetilde{U}_{2}(x)$,
where
\begin{gather*}
 \widetilde{U}_{1}(x) = n(g_{+}^{\prime}(x)-\tilde{\xi}^{\prime}(x))U_{+}(x) \begin{pmatrix}
1 \\ 0
\end{pmatrix} + (-1)^{n}e^{\frac{n\ell}{2}}e^{ng_{+}(x)}e^{-n\tilde{\xi}(x)}(t-x)^{-\frac{\alpha}{2}}e^{-\frac{n\ell}{2}\sigma_{3}} \\
\hphantom{\widetilde{U}_{1}(x) =}{} \times(R^{\prime}(x)E(x) + R(x)E^{\prime}(x)) \begin{pmatrix}
I_{\alpha}(n\tilde\xi(x)) \\ - \pi i n \tilde\xi(x) I_{\alpha}^{\prime}(n\tilde\xi(x))
\end{pmatrix}, \\
 \widetilde{U}_{2}(x) = (-1)^{n}e^{\frac{n\ell}{2}}e^{ng_{+}(x)}e^{-n\tilde{\xi}(x)} e^{-\frac{n\ell}{2}\sigma_{3}} R(x)E(x) \begin{pmatrix}
\displaystyle \left( \frac{I_{\alpha}(n\tilde\xi(x))}{(t-x)^{\frac{\alpha}{2}}} \right)^{\prime} \\
\displaystyle \left( \frac{- \pi i n \tilde\xi(x) I_{\alpha}^{\prime}(n\tilde\xi(x))}{(t-x)^{\frac{\alpha}{2}}} \right)^{\prime}
\end{pmatrix}.
\end{gather*}
The analysis of $\widetilde{U}_{1}(x)$ and $\widetilde{U}_{2}(x)$ can be done very similarly to the f\/irst part of the proof and we do not provide here all the details. From \eqref{lol 14}, one has $g_{+}^{\prime}(x) - \tilde\xi^{\prime}(x) = 2x$ and thus by~\eqref{lol 8}, \eqref{bound first column U} and~\eqref{bound for E in D_t Region 1},
\begin{gather*}
\widetilde{U}_{1}(x) = e^{\frac{n\ell}{2}}e^{ng_{+}(x)}e^{-\frac{n\ell}{2}\sigma_{3}}\begin{pmatrix}
\mathcal{O}\big(n^{\frac{3}{2}+\max(\alpha, 0)}\big) \\
\mathcal{O}\big(n^{\frac{3}{2}+\max(\alpha, 0)}\big)
\end{pmatrix}.
\end{gather*}
Again, by splitting the analysis into the same three cases as in the f\/irst part of the proof, we obtain the estimates
\begin{align*}
& \left( \frac{I_{\alpha}(n\tilde\xi(x))}{(t-x)^{\frac{\alpha}{2}}} \right)^{\prime} = \mathcal{O}\left(n^{2}\left( \frac{I_{\alpha}(n\tilde\xi(x))}{(t-x)^{\frac{\alpha}{2}}} \right)\right), \\
& \left( \frac{-i\pi n \tilde \xi(x) I_{\alpha}^{\prime}(n\xi(x))}{(t-x)^{\frac{\alpha}{2}}} \right)^{\prime}= \mathcal{O}\left(n^{2}\left( \frac{-i\pi n \tilde \xi(x) I_{\alpha}^{\prime}(n\xi(x))}{(t-x)^{\frac{\alpha}{2}}} \right)\right),
\end{align*}
which yields $\widetilde{U}_{2}(x) = e^{\frac{n\ell}{2}}e^{ng_{+}(x)}e^{-\frac{n\ell}{2}\sigma_{3}}\begin{pmatrix}
\mathcal{O}\big(n^{\frac{5}{2}+\max(\alpha,0)}\big) \\
\mathcal{O}\big(n^{\frac{5}{2}+\max(\alpha,0)}\big)
\end{pmatrix}$ and f\/inishes the proof.
\end{proof}

Note that $g_{+}(x)+g_{-}(x)-2x^{2}+\ell$ is continuous on $\mathbb{R}$ and equal to $0$ at $x = t$ by~\eqref{var equality}. Thus, from \eqref{g+ + g-} and \eqref{var inequality} and the fact that $V(x)$ has a jump discontinuity at $x = t$, we have
\begin{gather}\label{lol 66}
\lim_{\substack{x\to t \\ x < t}} g_{+}(x) + g_{-}(x) -V(x)+\ell = - \lambda < -\lambda_{c} < 0.
\end{gather}
Therefore, by using f\/irst Lemma \ref{Prop: Bound on the OPs} and then \eqref{lol 66}, there exists $c \in (0,\lambda_{c})$ such that
\begin{gather}
\frac{\tilde{w}(x)}{2\pi i}\big[U^{-1}(x)U^{\prime}(x)\big]_{21} = |x-t|^{\alpha}e^{n(g_{+}(x)+g_{-}(x)-V(x)+\ell)}\mathcal{O}\big(n^{3+2\max(\alpha,0)}\big)\nonumber\\
\hphantom{\frac{\tilde{w}(x)}{2\pi i}\big[U^{-1}(x)U^{\prime}(x)\big]_{21}}{} = |x-t|^{\alpha}\mathcal{O}\big(e^{-(\lambda-c)n}\big),\label{lol 15}
\end{gather}
as $n \to \infty$ uniformly for $x \in D_{t}\cap (-\infty,t)$. Now, we will split the integral of the dif\/ferential identity \eqref{diff identity 2} into two parts:
\begin{gather*}
 s\partial_{s} \log H_{n}\big(\sqrt{2n}t,s\big) = I_{1}(s)+I_{2}(s), \\
 I_{1}(s) = \int_{(-\infty,t)\setminus D_{t}}\frac{\widetilde{w}(x)}{2\pi i}\big[ U^{-1}(x)U^{\prime}(x) \big]_{21}{\rm d}x, \\
 I_{2}(s) = \int_{(-\infty,t)\cap D_{t}}\frac{\widetilde{w}(x)}{2\pi i}\big[ U^{-1}(x)U^{\prime}(x) \big]_{21}{\rm d}x.
\end{gather*}
The f\/irst integral can be evaluated using \eqref{lala6}. By \eqref{lol 29}, \eqref{lol 48}, \eqref{lol 14} and \eqref{lol 28} (see also the comment just after), we have $g_{+}(\overline{b})+g_{-}(\overline{b})+\ell-V(\overline{b}) = -(\lambda-\lambda_{c})$ and
\begin{gather*}
(g_{+}(x)+g_{-}(x)+\ell-V(x))^{\prime}\big|_{x=\overline{b}} = 0, \qquad
(g_{+}(x)+g_{-}(x)+\ell-V(x)^{\prime\prime}\big|_{x=\overline{b}} < 0.
\end{gather*}
Therefore, we obtain
\begin{gather*}
\big|I_{1}\big(s = e^{-\lambda n}\big)\big| = \mathcal{O}\big(n^{-1/2}e^{-n(\lambda-\lambda_{c})}\big), \qquad \mbox{as} \quad n \to \infty.
\end{gather*}
On the other hand, from \eqref{lol 15}, it immediately follows that
\begin{gather*}
\big|I_{2}\big(s = e^{-\lambda n}\big)\big| = \mathcal{O}\big(e^{-(\lambda-c)n}\big), \qquad \mbox{as} \quad n \to \infty,
\end{gather*}
where $c \in (0,\lambda_{c})$. Therefore, the dif\/ferential identity becomes
\begin{gather*}
\partial_{s} \log H_{n}(v,s,\alpha) \big|_{s = e^{-\lambda n}} = \mathcal{O}\big(n^{-1/2}e^{n\lambda_{c}}\big), \qquad \mbox{as} \quad n \to \infty,
\end{gather*}
where in the above expression the $\mathcal{O}$ term is uniform for $t$ in a compact subset of $(-1,\infty)$ and for $\lambda \geq \lambda_{c}(t)$. Thus, we can integrate it from $s = 0$ to $s = e^{-\lambda n}$, and it gives
\begin{gather*}
\log H_{n}\big(\sqrt{2n}t,e^{-\lambda n},\alpha\big) = \log H_{n}\big(\sqrt{2n}t,0,\alpha\big) + \mathcal{O}\big(n^{-1/2}e^{-n(\lambda-\lambda_{c}(t))}\big), \qquad \mbox{as} \quad n \to \infty,
\end{gather*}
which is the claim \eqref{lala10}.
\subsection{Proof of Theorem \ref{thm asymptotics for P with (t,lambda)}(2)}
In this section we use the RH analysis done in Section \ref{Section: RH analysis region 2}.
\begin{Proposition} \label{prop: convergence of Kn to rho} Let $\mathcal{W} \subset \mathbb{R}$ be an arbitrary small but fixed neighbourhood of the four points $\{a,b,t,c\}$. We have as $n \to \infty$
\begin{gather*}
\frac{\tilde{w}(x)}{2\pi i}\big[ U^{-1}(x)U^{\prime}(x) \big]_{21}-n\rho(x)\chi_{\mathcal{S}}(x) = e^{n(g_{+}(x)+g_{-}(x)+\ell - V(x))}\mathcal{O}(1),
\end{gather*}
uniformly for $x \in \mathbb{R}\setminus \mathcal{W}$.
\end{Proposition}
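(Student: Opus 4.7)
The plan is to invert the transformations $R \mapsto S \mapsto T \mapsto U$ from Section~\ref{Section: RH analysis region 2}, apply the small-norm estimate~\eqref{lol 50}, and compute $[U^{-1}(x)U'(x)]_{21}$ region-by-region. Since the first column of $U$ is entire (cf.\ Remark~\ref{remark: notation + and - not needed}), we may evaluate this quantity from the upper-half-plane limit. Starting from $U(z) = e^{-n\ell \sigma_{3}/2}T(z)e^{ng(z)\sigma_{3}}e^{n\ell \sigma_{3}/2}$, a short computation that uses the vanishing of the $(2,1)$-entry of $ng'(z)\sigma_{3}$ yields the identity
\begin{gather*}
[U^{-1}U'](z)_{21} = e^{n\ell + 2ng(z)}[T^{-1}T'](z)_{21}.
\end{gather*}

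For $x$ in a compact subset of $\mathbb{R}\setminus(\mathcal{S}\cup \overline{\mathcal{W}})$, the point lies outside all lenses and disks, so $T = RP^{(\infty)}$ with $P^{(\infty)}, P^{(\infty)\prime} = \mathcal{O}(1)$ by Section~\ref{subsection: global param two cuts} and $R^{-1}R' = \mathcal{O}(n^{-1})$ by~\eqref{lol 50}, giving $[T^{-1}T'](x)_{21} = \mathcal{O}(1)$. Multiplying by $\tilde w(x)/(2\pi i) = |x-t|^{\alpha}e^{-nV(x)}/(2\pi i)$ and using $g_{+}(x) = g_{-}(x)$ there produces the claimed bound $\mathcal{O}(e^{n(g_{+}+g_{-}+\ell - V)})$, consistent with $\rho(x)\chi_{\mathcal{S}}(x) = 0$ on this set.

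For $x$ in the interior of a cut, the variational equality forces $g_{+}+g_{-}+\ell - V = 0$ on $\mathcal{S}$, so the exponential prefactor reduces to the bounded phase $e^{n(g_{+}-g_{-})}$ and the leading $\mathcal{O}(n)$ term must come from elsewhere. For $x \in (t,c)\setminus \overline{\mathcal{W}}$, the $+$-limit of $T$ is obtained from inside the upper lens $\mathcal{I}_{1}$ via $T = RP^{(\infty)}L^{-1}$, where $L$ is the opening matrix from~\eqref{lol 17}. Expanding $T^{-1}T'$ produces three contributions: $LP^{(\infty)-1}R^{-1}R'P^{(\infty)}L^{-1} = \mathcal{O}(n^{-1})$, $LP^{(\infty)-1}P^{(\infty)\prime}L^{-1} = \mathcal{O}(1)$, and $-L'L^{-1}$, whose $(2,1)$-entry equals
\begin{gather*}
-\alpha(z-t)^{-\alpha-1}e^{-2n\xi_{1}(z)} - 2n\xi_{1}'(z)(z-t)^{-\alpha}e^{-2n\xi_{1}(z)}
\end{gather*}
and contains the dominant $\mathcal{O}(n)$ term. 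Taking the $+$-limit, using $\xi_{1,+}'(x) = -i\pi\rho(x)$ together with the identity $e^{-2n\xi_{1,+}(x)} = e^{-n(g_{+}(x)-g_{-}(x))}$ (which follows from~\eqref{xi_1 in terms of g} and~\eqref{g+ - g- 2}), the phase cancels against the prefactor $e^{n(g_{+}-g_{-})}$, and the leading contribution to $\tilde w(x)[U^{-1}U'](x)_{21}/(2\pi i)$ becomes precisely $n\rho(x)$, with $\mathcal{O}(1)$ remainder. The analysis on $(a,b)\setminus \overline{\mathcal{W}}$ proceeds analogously with $\xi_{1}$ replaced by $\xi_{2}$, the opening matrix from $\widetilde{\mathcal{I}}_{1}$, and the additional bounded phase $e^{2\pi i \Omega n}$; the connection formula~\eqref{connection xi1 and xi2} engineers the same cancellation.

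The main technical obstacle will be the careful bookkeeping of phases -- arising from the principal branch of $(z-t)^{-\alpha}$, the global-parametrix jump $e^{\pm 2\pi i \Omega n \sigma_{3}}$ on $(b,t)$, and the Szeg\H{o} function $D(z)$ appearing in~\eqref{Pinf} -- to verify that these combine so that the leading $\mathcal{O}(n)$ term is exactly the real quantity $n\rho(x)$ on both components of $\mathcal{S}$. The uniform $\mathcal{O}(1)$ control of the remainder away from $\{a,b,t,c\}$ then follows from the boundedness of $P^{(\infty)}$, $P^{(\infty)-1}$, and $P^{(\infty)\prime}$ on $\mathcal{S}\setminus \overline{\mathcal{W}}$, from~\eqref{lol 50}, and from the exponential smallness of the lens-boundary contributions inherited from~\eqref{lol 32}--\eqref{lol 33}.
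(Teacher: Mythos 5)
Your proposal is correct and follows essentially the same route as the paper: invert $R\mapsto S\mapsto T\mapsto U$ region by region, use the small-norm estimate \eqref{lol 50} and the boundedness of $P^{(\infty)}$, $P^{(\infty)\prime}$ (via quasi-periodicity of the $\theta$-functions) to get the $\mathcal{O}(1)$ bound off $\mathcal{S}$, and on the bulk of each cut extract the leading term $n\rho(x)$ from the lens-opening factor using $\xi_{1,+}^{\prime}=-\pi i\rho$ (resp.\ $\xi_{2}$) together with the variational equality and \eqref{xi_1 in terms of g}. The only cosmetic difference is that you organize the computation through $[T^{-1}T^{\prime}]_{21}$ and the conjugation by the opening matrix, whereas the paper writes $U$ directly in terms of $R$, $P^{(\infty)}$ and the lens factor; the phase bookkeeping you flag at the end is in fact automatic, since the dominant term comes solely from $-L^{\prime}L^{-1}$ and is independent of $P^{(\infty)}$.
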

\begin{proof}
We can assume without loss of generality that the disks of the local parametrices are suf\/f\/iciently small such that $D \subset \mathcal{W}$. Let $z$ be outside the lenses and outside the disks. In this region, by inverting the transformations $R \mapsto S \mapsto T \mapsto U$, we have
\begin{gather*}
U(z) = e^{-\frac{n\ell}{2}\sigma_{3}}R(z)P^{(\infty)}(z) e^{ng(z)\sigma_{3}}e^{\frac{n\ell}{2}\sigma_{3}}.
\end{gather*}
Since the dependence in $n$ of the global parametrix \eqref{Pinf} appears only in the form $n\Omega \in \mathbb{R}$, and as an argument of the $\theta$-function, by the periodicity property \eqref{lol 49}, as $n \to \infty$ we have
\begin{gather*}
P^{(\infty)}(z) = \mathcal{O}(1), \qquad P^{(\infty)}(z)^{\prime} = \mathcal{O}(1), \quad \mbox{uniformly for } z \mbox{ outside the disks}.
\end{gather*}
Therefore, using also the large $n$ asymptotics for $R$ \eqref{lol 50}, we have
\begin{gather}\label{lol 16}
\big[U(z)^{-1}U^{\prime}(z)\big]_{21} = e^{n\ell} e^{2ng(z)}\mathcal{O}(1), \qquad \mbox{as} \quad n \to \infty,
\end{gather}
uniformly for $z$ outside the lenses and outside the disks.
For $x \in \mathbb{R}\setminus (\mathcal{S}\cup \mathcal{W})$, we can take the limit $z\to x$ in~\eqref{lol 16}. As $n \to \infty$, we have
\begin{gather*}
\frac{\tilde{w}(x)}{2\pi i}\big[ U^{-1}(x)U^{\prime}(x) \big]_{21} = |x-t|^{\alpha}e^{n(g_{+}(x)+g_{-}(x)+\ell - V(x))}\mathcal{O}(1),
\end{gather*}
uniformly for $x\in \mathbb{R}\setminus (\mathcal{S}\cup \mathcal{W})$. Now, we consider the case when $z$ is still outside the disks but inside $\mathcal{I}_{1}$, see \eqref{lol 17} and Fig.~\ref{fig open lens contours Region 2}. Inverting the transformations in this region, we get
\begin{gather*}
U(z) = e^{-\frac{n\ell}{2}\sigma_{3}}R(z)P^{(\infty)}(z)\begin{pmatrix}
1 & 0 \\ (z-t)^{-\alpha}e^{-2n\xi_{1}(z)} & 1
\end{pmatrix}e^{ng(z)\sigma_{3}}e^{\frac{n\ell}{2}\sigma_{3}}.
\end{gather*}
Since $P^{(\infty)}(z) = \mathcal{O}(1)$ as $n \to \infty$ uniformly for $z$ in this region, we have
\begin{gather}\label{lol 18}
\big[ U^{-1}(z)U^{\prime}(z) \big]_{21} = (z-t)^{-\alpha}e^{n(2g(z)+\ell)}\big({-}2n \xi_{1}^{\prime}(z)e^{-2n\xi_{1}(z)}+\mathcal{O}(1)\big), \qquad \mbox{as} \quad n \to \infty,\!\!\!\!
\end{gather}
where we have also used $\eqref{lol 32}$ and $\Re \xi_{1,+}(x) =0$ for $x \in (t,c)$. Note that from \eqref{def of xi1 and xi2}, we have $\xi_{1,+}^{\prime}(x) = -\pi i \rho(x)$ for $x \in (t,c)$. Thus, if we let $z \to x \in (t,c)\setminus \mathcal{W}$ in \eqref{lol 18}, from \eqref{var equality} and~\eqref{xi_1 in terms of g}, we have
\begin{gather}\label{lol 19}
\frac{\tilde{w}(x)}{2\pi i}\big[ U^{-1}(x)U^{\prime}(x) \big]_{21} = n\rho(x) \big(1+\mathcal{O}\big(n^{-1}\big)\big), \qquad \mbox{as} \quad n \to \infty,
\end{gather}
where the $\mathcal{O}$ term in the above expression is uniform for $x\in (t,c)\setminus \mathcal{W}$. For $x \in (a,b)\setminus \mathcal{W}$, we can invert the transformations for $z \in \widetilde{\mathcal{I}}_{1}$ and then take the limit $z \to x$. The computations are similar and we obtain the same asymptotics as~\eqref{lol 19}.
\end{proof}

By \eqref{Y definition} and \eqref{Y to U transformation}, note that \eqref{lol 20} can be rewritten as $\int_{\mathbb{R}}\frac{\tilde{w}(x)}{2\pi i}\big[ U^{-1}(x)U^{\prime}(x) \big]_{21}{\rm d}x = n$. Thus, a consequence of Proposition~\ref{prop: convergence of Kn to rho} (by taking $\mathcal{W}$ arbitrarily small) and \eqref{diff identity 2} is that for f\/ixed $t \in (-1,1)$ and f\/ixed $\lambda \in (0,\lambda_{c}(t))$, we have
\begin{gather}\label{lol 21}
\lim_{n\to\infty} \frac{s}{n}\partial_{s} \log H_{n}\big(\sqrt{2n}t,s,\alpha\big)\big|_{s=e^{-\lambda n}} = \lim_{n\to\infty} \int_{-\infty}^{t}\frac{\widetilde{w}(x)}{2\pi in}\big[ U^{-1}(x)U^{\prime}(x) \big]_{21}{\rm d}x = \Omega(t,\lambda).\!\!\!
\end{gather}
A simple change of variables shows that
\begin{gather*}
 \frac{s}{n}\partial_{s} \log H_{n}\big(\sqrt{2n}t,s,\alpha\big)\big|_{s=e^{-\lambda n}} = -\frac{1}{n^{2}} \partial_{\lambda} \log H_{n}\big(\sqrt{2n}t,e^{-\lambda n},\alpha\big).
\end{gather*}
By \eqref{lol 21}, for every $(t,\lambda)$ such that $t \in (-1,1)$ and $\lambda \in (0,\lambda_{c}(t))$, the right-hand side of the above expression converges to $\Omega(t,\lambda)$ as $n \to \infty$. Also, by \eqref{E in terms of Kn} and \eqref{diff identity 1}, we have
\begin{gather*}
\frac{s}{n}\partial_{s} \log H_{n}(v,s,\alpha) = \frac{\mathcal{E}_{n}(v,s,\alpha)}{n}\leq 1.
\end{gather*}
Since the constant function $1$ is integrable on any bounded interval, we can apply Lebesgue's dominated convergence theorem, and we have
\begin{gather*}
\lim_{n\to\infty} \frac{-1}{n^{2}}\int_{0}^{\lambda} \partial_{\tilde\lambda}\log H_{n}\big(\sqrt{2n}t,e^{-\tilde \lambda n},\alpha\big){\rm d}\tilde \lambda = \int_{0}^{\lambda} \Omega\big(t,\tilde \lambda\big){\rm d}\tilde \lambda,
\end{gather*}
which f\/inishes the proof.
\begin{Remark}
As mentioned in Remark \ref{remark : pointwise convergence}, we have indeed only used pointwise convergence for $\lambda \in (0,\lambda_{c}(t))$ of the quantity $ \frac{s}{n}\partial_{s} \log H_{n}\big(\sqrt{2n}t,s,\alpha\big)\big|_{s=e^{-\lambda n}}$ to $\Omega(t,\lambda)$ as $n \to \infty$ and Lebesgue's theorem. The technical RH analysis as $\lambda \to 0$ or $\lambda \to \lambda_{c}(t)$ was thus not needed.
\end{Remark}
\subsection{Direct proof of formula (\ref{lol 52})}
In this section we suppose that $t \in (-1,1)$ and $\lambda \in (0,\lambda_{c}(t))$, but as we will have to integrate in~$\lambda$ over the interval $[0,\lambda_{c}(t)]$, some quantities need also to be def\/ined for $\lambda = 0$ and for $\lambda = \lambda_{c}(t)$. The quantities $\rho(x;t,\lambda)$ and $\ell(t,\lambda)$ refer to~\eqref{lol 45} and~\eqref{Euler-Lagrange constant} if $\lambda \in (0,\lambda_{c}(t))$, to~\eqref{lol 35} and~\eqref{EL constant in region 1} if $\lambda = \lambda_{c}(t)$, and to \eqref{sc law} if $\lambda = 0$. Also, $\Omega(t,\lambda)$ is given by~\eqref{Omega def} for $\lambda \in (0,\lambda_{c}(t))$, and we def\/ine by continuity $\Omega(t,\lambda_{c}(t)) = 0$.
\begin{Lemma}\label{lemma: no riemann surface}For $t\in(-1,1)$ and $\lambda \in (0,\lambda_{c}(t))$, there holds a relation between $\Omega(t,\lambda)$, the density $\rho(x;t,\lambda)$ given by \eqref{lol 45}, and the Euler--Lagrange constant $\ell(t,\lambda)$ given by \eqref{Euler-Lagrange constant}:
\begin{gather}\label{lol 57}
\Omega(t,\lambda) = \partial_{\lambda} \ell(t,\lambda) + \partial_{\lambda}\int_{\mathcal{S}}2x^{2}\rho(x;t,\lambda){\rm d}x + \lambda \partial_{\lambda} \Omega(t,\lambda).
\end{gather}
\end{Lemma}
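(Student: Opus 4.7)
The plan is to establish \eqref{lol 57} as an envelope-type identity for the minimal equilibrium energy
\[
F(t,\lambda) = \iint_{\mathcal{S}^{2}} \log |x-y|^{-1} \rho(x;t,\lambda) \rho(y;t,\lambda) \,{\rm d}x\,{\rm d}y + \int_{\mathcal{S}} V(x) \rho(x;t,\lambda) \,{\rm d}x,
\]
where $V(x) = 2x^{2} + \lambda$ for $x<t$ and $V(x) = 2x^{2}$ for $x \geq t$. First, I would multiply the variational equality \eqref{var equality} by $\rho(x;t,\lambda)$ and integrate over $\mathcal{S}$; combined with $\int_{\mathcal{S}}\rho = 1$ and the splitting $\int_{\mathcal{S}} V\rho = \int_{\mathcal{S}} 2x^{2}\rho + \lambda\,\Omega(t,\lambda)$, this reduces $F$ to the closed form
\[
F(t,\lambda) = \frac{1}{2}\int_{\mathcal{S}} 2x^{2}\rho(x;t,\lambda) \,{\rm d}x + \frac{\lambda}{2}\,\Omega(t,\lambda) + \frac{\ell(t,\lambda)}{2}.
\]

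Next, I would compute $\partial_{\lambda}F$ a second way, treating $\rho$ as extended by zero off $\mathcal{S}$. Differentiating under the integral sign and rearranging gives
\[
\partial_{\lambda} F = \int \left( -2 \int_{\mathcal{S}} \log|x-y|\rho(y) {\rm d}y + V(x) \right) \partial_{\lambda}\rho(x) \,{\rm d}x + \int \partial_{\lambda}V(x) \cdot \rho(x) \,{\rm d}x.
\]
The bracket equals $\ell$ on $\mathcal{S}$ by \eqref{var equality}, so the first integral collapses to $\ell \int_{\mathcal{S}} \partial_{\lambda}\rho$, which vanishes because $\int_{\mathcal{S}}\rho=1$. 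Since $\partial_{\lambda}V(x) = 1$ on $(-\infty,t)$ and $0$ on $(t,\infty)$, the second integral equals $\int_{a}^{b}\rho(x) {\rm d}x = \Omega(t,\lambda)$. Hence $\partial_{\lambda}F = \Omega(t,\lambda)$. Equating this with the $\lambda$-derivative of the closed-form expression above and multiplying by $2$ yields exactly \eqref{lol 57}.

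The main technical point is justifying the differentiation under the integral sign, since the support $\mathcal{S}=[a,b]\cup[t,c]$ depends on $\lambda$. This is controlled by the smooth dependence of $a$, $b$, $c$ on $\lambda$ (which follows from the implicit function theorem applied to \eqref{abc 1}--\eqref{abc lambda}, already used in the proof of Proposition \ref{prop: equilibrium measure}) together with the square-root vanishing of $\rho$ at the soft edges $a$, $b$, $c$; the boundary contributions from the moving endpoints therefore cancel, so that the identity $\int_{\mathcal{S}} \partial_{\lambda}\rho = 0$ continues to hold and the envelope computation is legitimate.
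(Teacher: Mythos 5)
Your proposal is correct, and it rests on the same two ingredients the paper uses — the Euler--Lagrange equality \eqref{var equality}, the symmetry of the logarithmic kernel, and differentiation under the integral with the normalization $\int_{\mathcal S}\rho = 1$ — but the organizing principle is genuinely different. The paper works with the single quantity $\int_{\mathcal S} H\rho$, where $H(x) = -2\int_{\mathcal S}\log|x-y|\rho(y)\,{\rm d}y$, evaluates it once via the two pieces of \eqref{var equality} on $(t,c)$ and $(a,b)$, and a second time by differentiating the double logarithmic integral through symmetry and Lebesgue's theorem, substituting $\partial_\lambda H = \partial_\lambda\ell$ on $(t,c)$ and $\partial_\lambda H = \partial_\lambda\ell - 1$ on $(a,b)$. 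You instead package the computation as an envelope identity for the full equilibrium energy $F$: the first variation of $F$ equals the constant $\ell$ on $\mathcal S$, so the implicit $\lambda$-dependence contributes $\ell\,\partial_\lambda\!\int\rho = 0$ and only the explicit $\lambda$-dependence of $V$ survives, giving $\partial_\lambda F = \Omega$ in one line. Since $F = \tfrac{1}{2}\int_{\mathcal S} H\rho + \int_{\mathcal S} V\rho$, the two routes are linearly equivalent, but the envelope framing is conceptually cleaner: it makes the role of stationarity explicit and avoids splitting into cases over $(a,b)$ versus $(t,c)$. The price is identical to the paper's — one must justify differentiating through the $\lambda$-dependent endpoints — and you correctly identify that the square-root vanishing of $\rho$ at the soft edges together with the smooth dependence of $a,b,c$ on $\lambda$ makes this harmless.
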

\begin{proof}
Consider the function
\begin{gather*}
H(x;t,\lambda) = -2 \int_{\mathcal{S}}\log|x-y|\rho(y;t,\lambda){\rm d}y.
\end{gather*}
By the Euler--Lagrange equality \eqref{var equality}, we have
\begin{gather}
 H(x;t,\lambda) = \ell(t,\lambda) - 2x^{2}, \qquad x \in (t,c), \label{lol 53}\\
 H(x;t,\lambda) = \ell(t,\lambda) - 2x^{2} - \lambda, \qquad x \in (a,b).\label{lol 54}
\end{gather}
Thus, by integrating it with respect to $\rho(x;t,\lambda){\rm d}x$, we obtain
\begin{gather*}
\int_{\mathcal{S}}H(x;t,\lambda)\rho(x;t,\lambda){\rm d}x = \ell(t,\lambda) - \lambda \Omega(t,\lambda) - \int_{\mathcal{S}} 2x^{2}\rho(x;t,\lambda){\rm d}x.
\end{gather*}
We will evaluate $\partial_{\lambda}\int_{\mathcal{S}}H(x;t,\lambda)\rho(x;t,\lambda){\rm d}x$ in two dif\/ferent ways. From the above expression, it gives
\begin{gather}\label{lol 55}
\partial_{\lambda}\int_{\mathcal{S}}H(x;t,\lambda)\rho(x;t,\lambda){\rm d}x = \partial_{\lambda}\ell(t,\lambda) - \Omega(t,\lambda) -\lambda \partial_{\lambda}\Omega(t,\lambda) - \partial_{\lambda}\int_{\mathcal{S}} 2x^{2}\rho(x;t,\lambda){\rm d}x.\!\!\!\!\!\!
\end{gather}
On the other hand, by Lebesgue's dominated convergence theorem, and by the symmetry in $x$ and $y$, we have
\begin{gather*}
\partial_{\lambda}\int_{\mathcal{S}}\int_{\mathcal{S}}\log|x-y|\rho(y;t,\lambda)\rho(x;t,\lambda){\rm d}y{\rm d}x = 2 \int_{\mathcal{S}}\int_{\mathcal{S}}\log|x-y|\partial_{\lambda}(\rho(y;t,\lambda))\rho(x;t,\lambda){\rm d}y{\rm d}x.
\end{gather*}
Therefore, by dif\/ferentiating \eqref{lol 53} and \eqref{lol 54} with respect to $\lambda$, we obtain
\begin{gather}\label{lol 56}
\partial_{\lambda}\int_{\mathcal{S}}H(x;t,\lambda)\rho(x;t,\lambda){\rm d}x = 2 \int_{\mathcal{S}} \partial_{\lambda}\left(H(x;t,\lambda)\right)\rho(x;t,\lambda){\rm d}x = 2\partial_{\lambda}\ell(t,\lambda) - 2 \Omega(t,\lambda).\!\!\!\!
\end{gather}
Putting \eqref{lol 55} and \eqref{lol 56} together, we obtain \eqref{lol 57}.
\end{proof}

Let us consider the function
\begin{gather*}
F(t,\lambda) = \ell(t,\lambda) + \int_{\mathcal{S}}2x^{2}\rho(x;t,\lambda){\rm d}x,
\end{gather*}
\begin{Lemma}\label{lemma: Omega as a derivative}
For $t\in(-1,1)$ and $\lambda \in (0,\lambda_{c}(t))$, we have the following relation
\begin{gather}\label{lol 58}
\Omega(t,\lambda) = \partial_{\lambda} \left[ \frac{F(t,\lambda_{c}(t))}{2}\left( \frac{\lambda}{\lambda_{c}(t)} \right)^{2} + \int_{\frac{\lambda}{\lambda_{c}(t)}}^{1}\xi F\big(t,\tfrac{\lambda}{\xi}\big){\rm d}\xi \right].
\end{gather}
\end{Lemma}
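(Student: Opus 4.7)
The plan is to view the preceding Lemma~\ref{lemma: no riemann surface} as a first-order linear ODE in $\lambda$ for $\Omega(t,\lambda)$, integrate it with the boundary value $\Omega(t,\lambda_{c}(t))=0$, and then recognise the resulting formula as $\partial_{\lambda}$ of the expression in \eqref{lol 58} after a rescaling of the integration variable.

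More precisely, first I would rewrite \eqref{lol 57} as
\begin{gather*}
\partial_{\lambda}F(t,\lambda) = \Omega(t,\lambda) - \lambda\,\partial_{\lambda}\Omega(t,\lambda),
\end{gather*}
which, upon dividing by $\lambda^{2}$, is exactly $\partial_{\lambda}\bigl(-\Omega(t,\lambda)/\lambda\bigr) = \partial_{\lambda}F(t,\lambda)/\lambda^{2}$. Integrating from $\lambda$ to $\lambda_{c}(t)$ and using the continuity convention $\Omega(t,\lambda_{c}(t))=0$ yields the explicit representation
\begin{gather*}
\Omega(t,\lambda) = \lambda \int_{\lambda}^{\lambda_{c}(t)} \frac{\partial_{\mu}F(t,\mu)}{\mu^{2}}\,{\rm d}\mu.
\end{gather*}

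Next, I would compute $\partial_{\lambda}$ of the right-hand side of \eqref{lol 58} directly. The derivative of the first term is $\lambda F(t,\lambda_{c}(t))/\lambda_{c}(t)^{2}$. For the integral term, differentiating the lower limit produces the boundary contribution $-\lambda F(t,\lambda_{c}(t))/\lambda_{c}(t)^{2}$, which exactly cancels the first piece. The remaining contribution comes from differentiation under the integral sign, and after the change of variables $\mu=\lambda/\xi$ (so that ${\rm d}\xi = -\lambda\,{\rm d}\mu/\mu^{2}$, with $\xi=\lambda/\lambda_{c}(t)$ corresponding to $\mu=\lambda_{c}(t)$ and $\xi=1$ to $\mu=\lambda$) one obtains precisely $\lambda\int_{\lambda}^{\lambda_{c}(t)}\mu^{-2}\partial_{\mu}F(t,\mu)\,{\rm d}\mu$, matching the representation of $\Omega(t,\lambda)$ derived above.

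There is no substantial obstacle here; the content of the lemma is an ODE-integration together with a change of variables. The only delicate point is the behaviour at the endpoint $\lambda=\lambda_{c}(t)$: one needs the convention $\Omega(t,\lambda_{c}(t))=0$, as well as enough regularity of $F(t,\cdot)$ up to $\lambda_{c}(t)$ for the boundary cancellation to be justified. Both are guaranteed by the definitions adopted at the beginning of the subsection and by the formulas~\eqref{lol 45}, \eqref{Euler-Lagrange constant}, and their $\lambda\nearrow\lambda_{c}(t)$ limits provided by Proposition~\ref{prop: equilibrium measure}.
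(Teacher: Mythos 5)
Your argument is correct and is essentially the paper's own proof: both reduce $\partial_{\lambda}$ of the bracket in \eqref{lol 58} to $\lambda\int_{\lambda}^{\lambda_{c}(t)}u^{-2}\partial_{u}F(t,u)\,{\rm d}u$ via the cancellation of the boundary terms and the substitution $u=\lambda/\xi$, and then identify this with $\Omega(t,\lambda)$ using \eqref{lol 57} and $\Omega(t,\lambda_{c}(t))=0$. The only cosmetic difference is that you integrate the ODE for $-\Omega/\lambda$ up front, whereas the paper performs the equivalent integration by parts at the end.
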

\begin{proof}
From a direct calculation and a change of variables, the right-hand side of \eqref{lol 58} is equal to
\begin{gather}\label{lol 59}
\int_{\frac{\lambda}{\lambda_{c}(t)}}^{1} \xi \partial_{\lambda}F\big(t,\tfrac{\lambda}{\xi}\big){\rm d}\xi = \int_{\frac{\lambda}{\lambda_{c}(t)}}^{1} \partial_{u}F(t,u) \big|_{u = \frac{\lambda}{\xi}}{\rm d}\xi = \lambda \int_{\lambda}^{\lambda_{c}(t)} \frac{\partial_{u}F(t,u)}{u^{2}}{\rm d}u.
\end{gather}
By using \eqref{lol 57}, which can be rewritten as $\partial_{\lambda}F(t,\lambda) = \Omega(t,\lambda)-\lambda\partial_{\lambda}\Omega(t,\lambda)$, the right-hand side of \eqref{lol 59} becomes
\begin{gather*}
\lambda \int_{\lambda}^{\lambda_{c}(t)} \frac{\Omega(t,u)-u \partial_{u}\Omega(t,u)}{u^{2}}{\rm d}u = \Omega(t,\lambda),
\end{gather*}
where the last equality is obtained via an integration by parts, and using the identity $\Omega(t,\lambda_{c}(t))$ $= 0$.
\end{proof}

\begin{Lemma}
\begin{gather}\label{lol 60}
-\int_{0}^{\lambda_{c}(t)} \Omega(t,\lambda) {\rm d}\lambda = C_{1}(t)-\frac{\log 3}{2}.
\end{gather}
\end{Lemma}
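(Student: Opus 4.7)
The strategy is to integrate the representation of $\Omega(t,\lambda)$ coming from Lemma~\ref{lemma: Omega as a derivative}, so that the integral collapses to a difference of boundary values of the auxiliary function $F$, and then to evaluate $F(t,0)$ and $F(t,\lambda_{c}(t))$ explicitly using the equilibrium measures in parts (1) and (3) of Proposition~\ref{prop: equilibrium measure}.

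Concretely, denote
\begin{gather*}
G(t,\lambda) = \frac{F(t,\lambda_{c}(t))}{2}\left( \frac{\lambda}{\lambda_{c}(t)} \right)^{2} + \int_{\lambda/\lambda_{c}(t)}^{1}\xi\, F\big(t,\tfrac{\lambda}{\xi}\big)\,{\rm d}\xi,
\end{gather*}
so that by Lemma~\ref{lemma: Omega as a derivative}, $\Omega(t,\lambda) = \partial_{\lambda}G(t,\lambda)$ for $\lambda \in (0,\lambda_{c}(t))$. The first step is to integrate from $0$ to $\lambda_{c}(t)$. At $\lambda=\lambda_{c}(t)$ the integral term vanishes and the first term equals $F(t,\lambda_{c}(t))/2$. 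At $\lambda\to 0^{+}$, by the continuity of $F(t,\cdot)$ at $0$ (which follows from the continuous deformation of the equilibrium measure from the semicircle to the two-cut measure as $\lambda$ grows), the integral converges to $F(t,0)\int_{0}^{1}\xi\,{\rm d}\xi = F(t,0)/2$. This yields
\begin{gather*}
\int_{0}^{\lambda_{c}(t)} \Omega(t,\lambda)\,{\rm d}\lambda = \frac{F(t,\lambda_{c}(t))-F(t,0)}{2},
\end{gather*}
so \eqref{lol 60} reduces to the explicit identity $F(t,0)-F(t,\lambda_{c}(t)) = 2C_{1}(t)-\log 3$.

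The second step is to compute $F(t,0)$. Using the semicircle law \eqref{sc law}, one has $\ell(t,0)=1+\log 4$, and a direct substitution $x=\sin\theta$ gives $\int_{-1}^{1} 2x^{2}\cdot\tfrac{2}{\pi}\sqrt{1-x^{2}}\,{\rm d}x = 1/2$. Hence $F(t,0) = 3/2 + \log 4$.

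The third step is to compute $F(t,\lambda_{c}(t))$ using the one-cut formulas of Proposition~\ref{prop: equilibrium measure}(1). The constant $\ell(t,\lambda_{c}(t))$ is given by \eqref{EL constant in region 1}. It remains to evaluate $\int_{t}^{\overline{c}} 2x^{2}\rho(x)\,{\rm d}x$ with $\rho(x)=\tfrac{2}{\pi}(x-\overline{b})\sqrt{(\overline{c}-x)/(x-t)}$. This is the main computational step: I would evaluate it by a contour deformation, writing the integral as $\operatorname{Res}_{z=\infty}$ of $z^{2}\tilde\rho(z)$ minus residues at the simple pole $z=t$ of $\tilde\rho$ (defined in \eqref{lol 48}), using the expansion $\tilde\rho(z)=\tfrac{2}{\pi}(z-\overline{b})(1-\tfrac{\overline{c}+t}{2z}-\cdots)^{1/2}\cdot(1+\tfrac{t}{2z}+\cdots)$. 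Once this integral is in closed form in terms of $t$, $\overline{b}=(t-\sqrt{3+t^{2}})/3$ and $\overline{c}=(t+2\sqrt{3+t^{2}})/3$, a straightforward but tedious algebraic simplification will verify that $F(t,0)-F(t,\lambda_{c}(t)) = 2C_{1}(t)-\log 3$, yielding \eqref{lol 60}.

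The main obstacle is therefore the closed-form evaluation of the second moment of the one-cut equilibrium density (and the subsequent algebraic reduction), not the structural manipulation leading to the boundary-value formula: the latter is an immediate consequence of Lemma~\ref{lemma: Omega as a derivative}, while matching the resulting algebraic expression against $2C_{1}(t)-\log 3$ requires careful bookkeeping of the logarithmic and polynomial parts in $t$ and $\sqrt{3+t^{2}}$. A sanity check is that at $t=-1$ both sides vanish (since $\lambda_{c}(-1)=0$ and $C_{1}(-1)=\log 3/2$), which provides a useful consistency point during the algebraic verification.
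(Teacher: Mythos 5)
Your proposal is correct and follows essentially the same route as the paper: integrate the representation from Lemma~\ref{lemma: Omega as a derivative} to reduce the left-hand side to $-\tfrac{1}{2}\big(F(t,\lambda_{c}(t))-F(t,0)\big)$, then evaluate $F(t,0)=\tfrac{3}{2}+2\log 2$ from the semicircle law and $F(t,\lambda_{c}(t))$ from \eqref{lol 35} and \eqref{EL constant in region 1}. The only piece you defer, the closed-form second moment of the one-cut density, is exactly the routine residue computation the paper also leaves implicit, and your reduced identity $F(t,0)-F(t,\lambda_{c}(t))=2C_{1}(t)-\log 3$ matches the paper's \eqref{lol 62}--\eqref{lol 63}.
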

\begin{proof}
From Lemma \ref{lemma: Omega as a derivative}, we directly obtain that
\begin{gather}\label{lol 61}
-\int_{0}^{\lambda_{c}(t)} \Omega(t,\lambda) {\rm d}\lambda = -\frac{1}{2} (F(t,\lambda_{c}(t))-F(t,0)).
\end{gather}
By \eqref{lol 35} and \eqref{EL constant in region 1}, we obtain
\begin{gather}
F(t,\lambda_{c}(t)) = \frac{3}{2} + 2 \left( \frac{4}{3}t^{2}+\frac{5}{9}t\sqrt{3+t^{2}} \right) + \frac{4t^{3}}{27}\big(\sqrt{3+t^{2}}-t\big)\nonumber\\
\hphantom{F(t,\lambda_{c}(t)) =}{} +2\log \big( 2\big(t+\sqrt{3+t^{2}}\big) \big),\label{lol 62}
\end{gather}
and by \eqref{sc law}, we have
\begin{gather}\label{lol 63}
F(t,0) = \frac{3}{2} + 2\log 2.
\end{gather}
By substituting \eqref{lol 62} and \eqref{lol 63} into \eqref{lol 61}, we obtain \eqref{lol 60}.
\end{proof}

\appendix
\section{Airy model RH problem}\label{ApA}
We consider the following RH problem:

\begin{itemize}\itemsep=0pt
\item[(a)] $P_{\mathrm{Ai}} \colon \mathbb{C} \setminus \Sigma_{A} \rightarrow \mathbb{C}^{2 \times 2}$ is analytic, where $\Sigma_{A}$ is shown in Fig.~\ref{figAiry}.
\item[(b)] $P_{\mathrm{Ai}}$ has the jump relations
\begin{gather*}
P_{\mathrm{Ai},+}(\zeta) = P_{\mathrm{Ai},-}(\zeta) \begin{pmatrix}
0 & 1 \\ -1 & 0
\end{pmatrix}, \qquad \mbox{on} \quad \mathbb{R}^{-}, \\
P_{\mathrm{Ai},+}(\zeta) = P_{\mathrm{Ai},-}(\zeta) \begin{pmatrix}
 1 & 1 \\
 0 & 1
\end{pmatrix}, \qquad \mbox{on} \quad \mathbb{R}^{+}, \\
P_{\mathrm{Ai},+}(\zeta) = P_{\mathrm{Ai},-}(\zeta) \begin{pmatrix}
 1 & 0 \\ 1 & 1
\end{pmatrix}, \qquad \mbox{on} \quad e^{ \frac{2\pi i}{3} } \mathbb{R}^{+} \cup e^{ -\frac{2\pi i}{3} }\mathbb{R}^{+}.
\end{gather*}
\item[(c)] As $\zeta \to \infty$, $z \notin \Sigma_{A}$, we have
\begin{gather}\label{Asymptotics Airy}
P_{\mathrm{Ai}}(\zeta) = \zeta^{-\frac{\sigma_{3}}{4}}N \left( I + \sum_{k=1}^{\infty} A_{k} \zeta^{-3k/2} \right) e^{-\frac{2}{3}\zeta^{3/2}\sigma_{3}},
\end{gather}
where $N = \frac{1}{\sqrt{2}}\left(\begin{smallmatrix}
1 & i \\ i & 1
\end{smallmatrix}\right)$ and $A_{1} = \frac{1}{8} \left(\begin{smallmatrix}
\frac{1}{6} & i \\ i & -\frac{1}{6}
\end{smallmatrix}\right)$.
\end{itemize}
This model RH problem was introduced for the f\/irst time and solved in \cite{DKMVZ1}, and is now well-known. The unique solution of the above RH problem is given in terms of Airy functions, we have
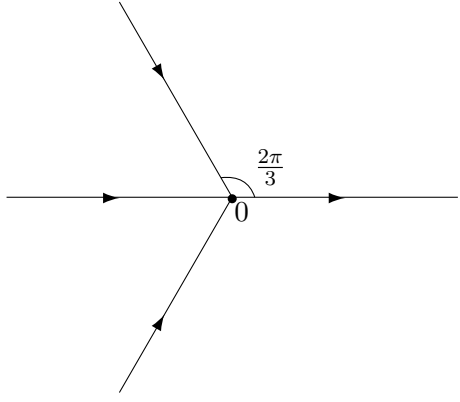
\begin{figure}[t]\centering
 \setlength{\unitlength}{1truemm}
 \begin{picture}(100,55)(-5,10)
 \put(50,40){\line(1,0){30}}
 \put(50,40){\line(-1,0){30}}
 \put(50,39.8){\thicklines\circle*{1.2}}
 \put(50,40){\line(-0.5,0.866){15}}
 \put(50,40){\line(-0.5,-0.866){15}}
 \qbezier(53,40)(52,43)(48.5,42.598)
 \put(53,43){$\frac{2\pi}{3}$}
 \put(50.3,36.8){$0$}
 \put(65,39.9){\thicklines\vector(1,0){.0001}}
 \put(35,39.9){\thicklines\vector(1,0){.0001}}
 \put(41,55.588){\thicklines\vector(0.5,-0.866){.0001}}
 \put(41,24.412){\thicklines\vector(0.5,0.866){.0001}}
 \end{picture}
 \vspace{-3mm}

 \caption{\label{figAiry}The jump contour $\Sigma_{A}$ for $P_{\mathrm{Ai}}(\zeta)$.}
\end{figure}
\begin{gather*}
P_{\mathrm{Ai}}(\zeta) := M_{A} \times \begin{cases}
\begin{pmatrix}
\mbox{Ai}(\zeta) & \mbox{Ai}(\omega^{2}\zeta) \\
\mbox{Ai}^{\prime}(\zeta) & \omega^{2}\mbox{Ai}^{\prime}(\omega^{2}\zeta)
\end{pmatrix}e^{-\frac{\pi i}{6}\sigma_{3}}, & \mbox{for } 0 < \arg \zeta < \frac{2\pi}{3}, \vspace{1mm}\\
\begin{pmatrix}
\mbox{Ai}(\zeta) & \mbox{Ai}(\omega^{2}\zeta) \\
\mbox{Ai}^{\prime}(\zeta) & \omega^{2}\mbox{Ai}^{\prime}(\omega^{2}\zeta)
\end{pmatrix}e^{-\frac{\pi i}{6}\sigma_{3}}\begin{pmatrix}
1 & 0 \\ -1 & 1
\end{pmatrix}, & \mbox{for } \frac{2\pi}{3} < \arg \zeta < \pi, \vspace{1mm}\\
\begin{pmatrix}
\mbox{Ai}(\zeta) & - \omega^{2}\mbox{Ai}(\omega \zeta) \\
\mbox{Ai}^{\prime}(\zeta) & -\mbox{Ai}^{\prime}(\omega \zeta)
\end{pmatrix}e^{-\frac{\pi i}{6}\sigma_{3}}\begin{pmatrix}
1 & 0 \\ 1 & 1
\end{pmatrix}, & \mbox{for } -\pi < \arg \zeta < -\frac{2\pi}{3}, \vspace{1mm}\\
\begin{pmatrix}
\mbox{Ai}(\zeta) & - \omega^{2}\mbox{Ai}(\omega \zeta) \\
\mbox{Ai}^{\prime}(\zeta) & -\mbox{Ai}^{\prime}(\omega \zeta)
\end{pmatrix}e^{-\frac{\pi i}{6}\sigma_{3}}, & \mbox{for } -\frac{2\pi}{3} < \arg \zeta < 0,
\end{cases}
\end{gather*}
with $\omega = e^{\frac{2\pi i}{3}}$, Ai the Airy function and
\begin{gather*}
M_{A} := \sqrt{2 \pi} e^{\frac{\pi i}{6}} \begin{pmatrix}
1 & 0 \\ 0 & -i
\end{pmatrix}.
\end{gather*}

\section{Bessel model RH problem}\label{ApB}
We consider the following RH problem:

\begin{itemize}\itemsep=0pt
\item[(a)] $P_{\mathrm{Be}} \colon \mathbb{C} \setminus \Sigma_{B} \to \mathbb{C}^{2\times 2}$ is analytic, where
$\Sigma_{B}$ is shown in Fig.~\ref{figBessel}.

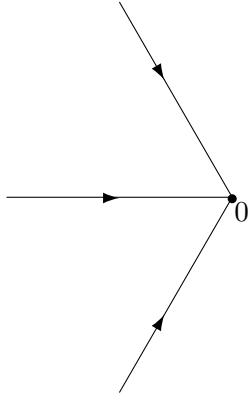
\begin{figure}[t]\centering
 \setlength{\unitlength}{1truemm}
 \begin{picture}(100,55)(-5,10)
 \put(50,40){\line(-1,0){30}}
 \put(50,39.8){\thicklines\circle*{1.2}}
 \put(50,40){\line(-0.5,0.866){15}}
 \put(50,40){\line(-0.5,-0.866){15}}
 \put(50.3,36.8){$0$}
 \put(35,39.9){\thicklines\vector(1,0){.0001}}
 \put(41,55.588){\thicklines\vector(0.5,-0.866){.0001}}
 \put(41,24.412){\thicklines\vector(0.5,0.866){.0001}}
 \end{picture}
 \vspace{-3mm}

 \caption{\label{figBessel}The jump contour $\Sigma_{B}$ for $P_{\mathrm{Be}}(\zeta)$.}
\end{figure}
\item[(b)] $P_{\mathrm{Be}}$ satisf\/ies the jump conditions
\begin{gather}
P_{\mathrm{Be},+}(\zeta) = P_{\mathrm{Be},-}(\zeta) \begin{pmatrix}
0 & 1 \\ -1 & 0
\end{pmatrix}, \qquad \zeta \in \mathbb{R}^{-}, \nonumber\\
P_{\mathrm{Be},+}(\zeta) = P_{\mathrm{Be},-}(\zeta) \begin{pmatrix}
1 & 0 \\ e^{\pi i \alpha} & 1
\end{pmatrix}, \qquad \zeta \in e^{ \frac{2\pi i}{3} } \mathbb{R}^{+},\nonumber \\
P_{\mathrm{Be},+}(\zeta) = P_{\mathrm{Be},-}(\zeta) \begin{pmatrix}
1 & 0 \\ e^{-\pi i \alpha} & 1
\end{pmatrix}, \qquad \zeta \in e^{ -\frac{2\pi i}{3} } \mathbb{R}^{+}. \label{Jump for P_Be}
\end{gather}
\item[(c)] As $\zeta\to \infty$, $\zeta \notin \Sigma_{B}$, we have
\begin{gather}\label{large z asymptotics Bessel}
P_{\mathrm{Be}}(\zeta) = \big( 2\pi \zeta^{\frac{1}{2}} \big)^{-\frac{\sigma_{3}}{2}}N
\left(I+\sum_{k=1}^{\infty} B_{k} \zeta^{-k/2}\right) e^{2\zeta^{\frac{1}{2}}\sigma_{3}},
\end{gather}
where $N = \frac{1}{\sqrt{2}}\left(\begin{smallmatrix}
1 & i \\ i & 1
\end{smallmatrix}\right)$ and $B_{1} = \frac{1}{16}\left(\begin{smallmatrix}
-(1+4\alpha^{2}) & -2i \\ -2i & 1+4\alpha^{2}
\end{smallmatrix}\right)$.
\item[(d)] As $\zeta$ tends to 0, the behaviour of $P_{\mathrm{Be}}(\zeta)$ is
\begin{gather}
 P_{\mathrm{Be}}(\zeta) = \begin{cases}
\begin{pmatrix}
\mathcal{O}(1) & \mathcal{O}(\log \zeta) \\
\mathcal{O}(1) & \mathcal{O}(\log \zeta)
\end{pmatrix}, & |\arg \zeta| < \frac{2\pi}{3}, \\
\begin{pmatrix}
\mathcal{O}(\log \zeta) & \mathcal{O}(\log \zeta) \\
\mathcal{O}(\log \zeta) & \mathcal{O}(\log \zeta)
\end{pmatrix}, & \frac{2\pi}{3}< |\arg \zeta| < \pi,
\end{cases} \qquad \mbox{if} \quad \alpha = 0, \nonumber\\
\displaystyle P_{\mathrm{Be}}(\zeta) = \begin{cases}
\begin{pmatrix}
\mathcal{O}(1) & \mathcal{O}(1) \\
\mathcal{O}(1) & \mathcal{O}(1)
\end{pmatrix}\zeta^{\frac{\alpha}{2}\sigma_{3}}, & |\arg \zeta | < \frac{2\pi}{3}, \\
\begin{pmatrix}
\mathcal{O}\big(\zeta^{-\frac{\alpha}{2}}\big) & \mathcal{O}\big(\zeta^{-\frac{\alpha}{2}}\big) \\
\mathcal{O}\big(\zeta^{-\frac{\alpha}{2}}\big) & \mathcal{O}\big(\zeta^{-\frac{\alpha}{2}}\big)
\end{pmatrix}, & \frac{2\pi}{3}<|\arg \zeta | < \pi,
\end{cases} \qquad \mbox{if} \quad \alpha > 0, \nonumber\\
 P_{\mathrm{Be}}(\zeta) = \begin{pmatrix}
\mathcal{O}\big(\zeta^{\frac{\alpha}{2}}\big) & \mathcal{O}\big(\zeta^{\frac{\alpha}{2}}\big) \\
\mathcal{O}\big(\zeta^{\frac{\alpha}{2}}\big) & \mathcal{O}\big(\zeta^{\frac{\alpha}{2}}\big)
\end{pmatrix}, \qquad \mbox{if} \quad \alpha < 0.\label{local behaviour near 0 of P_Be}
\end{gather}
\end{itemize}
This RH problem was introduced and solved in \cite{KMcLVAV}. Its unique solution is given by
\begin{gather}\label{Psi explicit}
P_{\mathrm{Be}}(\zeta)=
\begin{cases}
\begin{pmatrix}
I_{\alpha}\big(2\zeta^{\frac{1}{2}}\big) & \frac{ i}{\pi} K_{\alpha}\big(2\zeta^{\frac{1}{2}}\big) \\
2\pi i \zeta^{\frac{1}{2}} I_{\alpha}^{\prime}\big(2\zeta^{\frac{1}{2}}\big) & -2 \zeta^{\frac{1}{2}} K_{\alpha}^{\prime}\big(2\zeta^{\frac{1}{2}}\big)
\end{pmatrix}, & |\arg \zeta | < \frac{2\pi}{3}, \vspace{1mm}\\
\begin{pmatrix}
\frac{1}{2} H_{\alpha}^{(1)}\big(2(-\zeta)^{\frac{1}{2}}\big) & \frac{1}{2} H_{\alpha}^{(2)}\big(2(-\zeta)^{\frac{1}{2}}\big) \\
\pi \zeta^{\frac{1}{2}} \big( H_{\alpha}^{(1)} \big)^{\prime} \big(2(-\zeta)^{\frac{1}{2}}\big) & \pi \zeta^{\frac{1}{2}} \big( H_{\alpha}^{(2)} \big)^{\prime} \big(2(-\zeta)^{\frac{1}{2}}\big)
\end{pmatrix}\!e^{\frac{\pi i \alpha}{2}\sigma_{3}},\!\!\!\! & \frac{2\pi}{3} < \arg \zeta < \pi, \vspace{1mm}\\
\begin{pmatrix}
\frac{1}{2} H_{\alpha}^{(2)}\big(2(-\zeta)^{\frac{1}{2}}\big) & -\frac{1}{2} H_{\alpha}^{(1)}\big(2(-\zeta)^{\frac{1}{2}}\big) \\
-\pi \zeta^{\frac{1}{2}} \big( H_{\alpha}^{(2)} \big)^{\prime} \big(2(-\zeta)^{\frac{1}{2}}\big) & \pi \zeta^{\frac{1}{2}} \big( H_{\alpha}^{(1)} \big)^{\prime} \big(2(-\zeta)^{\frac{1}{2}}\big)
\end{pmatrix}\!e^{-\frac{\pi i \alpha}{2}\sigma_{3}},\!\!\!\! & -\pi < \arg \zeta < -\frac{2\pi}{3},
\end{cases}\hspace*{-30mm}
\end{gather}
where $H_{\alpha}^{(1)}$ and $H_{\alpha}^{(2)}$ are the Hankel functions of the f\/irst and second kind, and $I_\alpha$ and $K_\alpha$ are the modif\/ied Bessel functions of the f\/irst and second kind.

\subsection*{Acknowledgements}
C.~Charlier was supported by the European Research Council under the European Union's Seventh Framework Programme (FP/2007/2013)/ ERC Grant Agreement n.~307074. A.~Dea\~{n}o acknowledges f\/inancial support from projects MTM2012-36732-C03-01 and MTM2015-65888-C4-2-P from the Spanish Ministry of Economy and Competitivity. The authors are grateful to A.B.J.~Kuijlaars for sharing a simplif\/ied proof for the f\/irst part of \cite[Proposition~A.1]{ChCl2}. This inspired us to simplify the proof of Lemma~\ref{lemma: no riemann surface}. We also thank T.~Claeys for a careful reading of the introduction and for useful remarks. The authors acknowledge the referees for their careful reading and useful remarks.

\pdfbookmark[1]{References}{ref}
\LastPageEnding

\end{document}